\providecommand{\keywords}[1]
{
  \small	
  \textbf{{Keywords:}} #1
}
\theoremstyle{definition}
\newtheorem{definition}{Definition}[section]
\newtheorem{assumption}{Assumption}
\newtheorem{theorem}{Theorem}[section]
\newtheorem{corollary}{Corollary}[theorem]
\newtheorem{lemma}[theorem]{Lemma}
\newtheorem{guideline}{Guideline}
\newtheorem{remark}{\textbf{Remark}}[section]
\newcommand{\independent}{\perp \!\!\! \perp}
\title{Benefits and costs of matching prior to a Difference in Differences analysis when parallel trends does not hold\thanks{We thank the authors of \citet{principal_turnover} for supplying their data and replication source code. This work was supported by the U.S. Department of Education, Institute for  Education Sciences, through Grant R305D200010. The opinions expressed are those of the authors and do not represent views of the Institute or the U.S. Department of Education.}}
\author[1]{Dae Woong Ham}
\author[2]{Luke Miratrix}
\affil[1]{\textit{Department of Statistics, Harvard}}
\affil[2]{\textit{Harvard Graduate School of Education}}
\date{}
\begin{document}
\maketitle
\vspace{-0.5cm}
\begin{abstract}
The consequence of a change in school leadership (e.g., principal turnover) on student achievement has important implications for education policy.
The impact of such an event can be estimated via the popular Difference in Difference (DiD) estimator, where those schools with a turnover event are compared to a selected set of schools that did not have such an event.
The strength of this comparison depends on the plausibility of the ``parallel trends'' assumption that the ``treated group'' of those schools which had leadership turnover, absent such turnover, would have changed ``similarly'' to those which did not.
To bolster such a claim, one might generate a comparison group, via matching, that is similar to the treated group with respect to pre-treatment outcomes and/or pre-treatment covariates. 
Unfortunately, as has been previously pointed out, this intuitively appealing approach also has a cost in terms of bias.
To assess the trade-offs of matching in our application, we first characterize the bias of matching prior to a DiD analysis under a linear structural model that allows for time-invariant observed and unobserved confounders with time-varying effects on the outcome.
Given our framework, we verify that matching on baseline covariates generally reduces bias.
We further show how additionally matching on pre-treatment outcomes has both cost and benefit.
First, matching on pre-treatment outcomes partially balances unobserved confounders, which mitigates some bias.
This reduction is proportional to the outcome's reliability, a measure of how coupled the outcomes are with the latent covariates.
Offsetting these gains, matching also injects bias into the final estimate by undermining the second difference in the DiD via a regression-to-the-mean effect.
Consequently, we provide heuristic guidelines for determining to what degree the bias reduction of matching is likely to outweigh the bias cost.
We illustrate our guidelines by reanalyzing a principal turnover study that used matching prior to a DiD analysis and find that matching on both the pre-treatment outcomes and observed covariates makes the estimated treatment effect more credible.
\end{abstract}

\keywords{Comparative interrupted time series, Bias-Bias Tradeoff, Linear Structural Equation Model, Latent Confounder, Reliability}

\newpage

\section{Introduction}
Principal turnover rates in the United States are high, with some estimates at 18\% nationally, which has led to much policy attention and research to understand the effect of principal turnover on student achievement \citep{principal_rate, principal_research1, principal_research2}.
In particular, a recent empirical study \citep{principal_turnover} found statistically significant negative effects of principal turnover on student achievement through a difference in difference (DiD) estimator.
The DiD estimator is a popular method in causal inference for measuring treatment effects in observational panel data \citep{abad:05, DiDPopularity1, DiDpopularity2}.
The DiD estimator uses the change of a control group to estimate how much change a targeted treated group would have experienced over time, absent treatment.
In such an approach, researchers assume that the control group's change over time is equivalent to the treatment group's change over time absent any treatment effect.
This assumption is often referred to as the ``parallel trend assumption'' \citep{abad:05} and is used as the main identifying assumption justifying the DiD estimator.
One core concern with this assumption is whether the chosen comparison group is suitable.

To bolster the plausibility of the parallel trends assumption, \citeauthor{principal_turnover} performed matching prior to the DiD analysis to select a control group where parallel trends seems more plausible.
Matching methods are a class of methods in causal inference that allows direct comparison of the treated and control groups by selecting units from a pool of possible control units that resemble the treatment group in terms of baseline, observable characteristics \citep{matching1, matching2}.
The hope is, if our observed characteristics are informative enough, that the control units would be, other than the receipt of treatment, just like the treatment units in other aspects.
For the DiD case, in particular, we would hope our matched controls would change just like the treated units would have, absent treatment.
Given this intuition, many other previous empirical works have also used matching prior to their DiD analysis \citep{DiDmatch_empirical1, DiDmatch_empirical2, DiDmatch_empirical3_kosuke}; in these studies, the authors believe that the key parallel trend assumption is more likely to hold conditional on similar observed covariates and lagged outcomes.
This assumption is often referred as ``conditional parallel trends'': the evolution of treated and control units would be the same, absent treatment, for units with the same observable characteristics.
Although conditional parallel trends may be more likely to hold in certain scenarios, \citet{matchDiDSim} show that when unconditional parallel trends hold, matching on pre-treatment outcomes can actually \emph{induce bias} into a perfectly unbiased estimator.
We are thus faced with a dilemma: matching intuitively gives us a more plausible control group, but the act of matching injects bias into our estimator.
This is the dilemma we explore in this paper. 

Engaging with this dilemma is not new.
Many works analyze matching prior to a DiD analysis and show that matching may actually hurt or help depending on different scenarios \citep{matchDiDSim, ding_bracketing, DiDMatching2, DiDmatching4, chabesim, chabe_2017, gain_scores1, gain_scores2}.
These existing works can be roughly grouped into three categories.
The first are those that characterize the bias when either parallel trends or conditional parallel trends perfectly holds \citep{ding_bracketing, matchDiDSim}.
The second are those that characterize the bias in more general settings via simulations \citep{DiDMatching2, DiDmatching4}.
The third give sufficient conditions for when matching combined with DiD gives perfectly unbiased estimates, as opposed to comparing which of the biased estimates may lead to the minimum bias \citep{chabe_2017, chabesim}.\footnote{The work by this author also considers a different data generating model than ours; we discuss this further in Section~\ref{subsection:LSEM}.}
 We seek to provide exact mathematical characterizations of the bias under a general framework that allows for both imperfect conditional and unconditional parallel trends driven by unobserved and observed time-invariant confounders with time-varying relationships to the outcome.
In other words, all estimators, regardless of matching, are biased in our setting as neither parallel trends nor conditional parallel trends necessarily holds, and we explore how to identify which strategy produce the \textit{least} (as opposed to zero) bias.

In particular, we characterize the bias of different matching DiD estimators in a linear structural equation model (detailed in Section~\ref{section:setup}) as a function of the underlying relationship between covariates, pre-treatment outcomes, breakage in parallel trends, and amount of confounding.
In Section~\ref{section:main_results}, we present the main results of the bias for the simple unmatched DiD estimator, the DiD estimator after matching on observed covariates, and the DiD estimator after matching additionally on the pre-treatment outcome.
We find that matching on observed covariates generally leads to a reduction in bias.
This improvement is guaranteed unless unlikely scenarios occur such as pre-existing biases cancelling each other out.
This result is also consistent with the ``pre-treatment criterion'' that suggests practitioners always control for observed covariates when estimating a causal effect \citep{rose:02b, rose:rubi:83, shpi:vand:robi:10}, although our additional results do undermine this principle.

In contrast to matching on our baseline covariates, matching on the pre-treatment outcome exhibits a trade off.
On one hand, matching on the pre-treatment outcome does indirectly partially match on the latent confounders, thus mitigating their bias contribution.
The amount in reduction is proportional to a key quantity known as reliability, a measure of how coupled the outcomes are with these latent covariates \citep{reliability1, reliability2}.
On the other hand, matching on the pre-treatment outcome undermines the second ``difference'' in the DiD estimator by forcing the treated and control group's pre-treatment outcomes to be equal \citep{gain_scores1, gain_scores2}.
Therefore, if parallel trends originally held, matching on the pre-treatment outcome breaks an initially unbiased estimator.
On balance, as we show in Lemma~\ref{lemma:suffcond_match}, if the reliability is higher than the breakage of parallel trends, it is better to match on the pre-treatment outcome.

If we view pre-treatment outcomes as just another baseline covariate, the above results connect to matching on covariates measured with error \cite{webb2017imputation, lenis2017doubly, rudolph2018using}.
In particular, when we match on such variables, we only imperfectly match on the underlying, latent, variable directly connected to our post-treatment outcomes, leaving residual imbalance.
The DiD estimator can then exacerbate this residual imbalance by, as described in \cite{matchDiDSim}, via a regression to the mean effect.
These results show the ``pretreatment criterion'' does not always apply for difference-in-difference contexts.

For clarity of exposition we initially focus on the simple pre-post (or two time point) DiD.
After we present these results, however, we extend to a more general setting with multivariate covariates and multiple pre-treatment outcomes in Section~\ref{section:generalization}.
We find that matching on multiple pre-treatment outcomes helps further recover the latent confounder, leading to a greater reliability and reduction in bias.
In addition to characterizing the bias of the DiD after matching, we also provide heuristic guidelines in Section~\ref{section:sensitivity_analysis} to practitioners on whether matching prior to their DiD analysis is sensible, along with a method for roughly estimating the reduction in bias.
Finally in Section~\ref{section:application}, we revisit our principal turnover application, where the authors found a statistically negative treatment effect of principal turnover on student achievement.
We show that matching on both observed covariates and pre-treatment outcomes helped reduce the bias of the estimate, providing further support for the authors' justification to match, and lending further credibility for their reported treatment effect.

\section{Empirical Application - Impact of Principal Turnover}
\label{section:empirical_example}
Many studies suggest principals play a crucial role in building a learning climate, supporting teacher improvement, and instructional leadership \citep{principal_relevance1, principal_relevance2, principal_relevance3, principal_relevance4}; we might believe that turnover would undermine such benefits.
In this vein, a recent study by \citep{principal_turnover} indeed found short term negative effects of principal turnover on public schools in Missouri and Tennessee.
The Missouri data, initially obtained by the Department of Elementary and Secondary Education, contains 2,400 public schools in 565 districts, and span the years of 1991 to 2016.
These data record whether there was a turnover in any given year, and also contain outcomes such as school-average math scores on statewide exams in Grades 3 to 8.
The original authors also drew on school demographics from the Common Core of Data (CCD) that contained information about student enrollment size, proportion of Black and Hispanic students, and the proportion of students receiving free lunch \citep{principal_turnover}.
The main years of analysis were between 2001-2015 to allow for multiple lagged outcomes.
As our running example, we use these Missouri data, generously provided by the initial authors.

In our analysis, the main treatment variable is a binary principal turnover variable, which takes value of 1 if a principal in a school in year $t$ is not the principal in the school in year $t +1$ and 0 otherwise.
We focus on estimating the immediate impact of principal turnover on student's math scores.
To answer this, the initial authors estimated a causal effect using a DiD analysis after matching. 
\citeauthor{principal_turnover} matched prior to DiD because ``even conditional on school fixed effects and the other controls in the model ... parallel trends do not hold between treatment and comparison schools.''
The authors argue that since principals could be leaving due to factors that are also negatively impacting the schools, the schools that have principals departing most likely already have a declining student achievement trend in the years leading to the principal turnover. 

To address this challenge, the authors construct a matched comparison group, matching on school demographics and pre-treatment outcomes.
Then they obtain an estimated treatment effect via a DiD analysis through regression.
We replicated their analysis, with results on Figure~\ref{fig:math_results}.
Details of our replication of this matching step and the model we used for the DiD analysis are in Appendix~\ref{Appendix:application_details}.
The difference in Year 1 (first post-treatment year) between the treated and control schools indicate  a substantial negative impact of principal turnover.
Furthermore, matching provided a control group (in triangles) with a similar downward trend prior to the principal turnover as we see for the treatment group (in circles).
Consequently, the authors argue that matching on the pre-treatment outcomes was essential in forming a valid control and treatment group comparison.

\begin{figure}[t!]
\begin{center}
\includegraphics[width=9cm, height = 5cm]{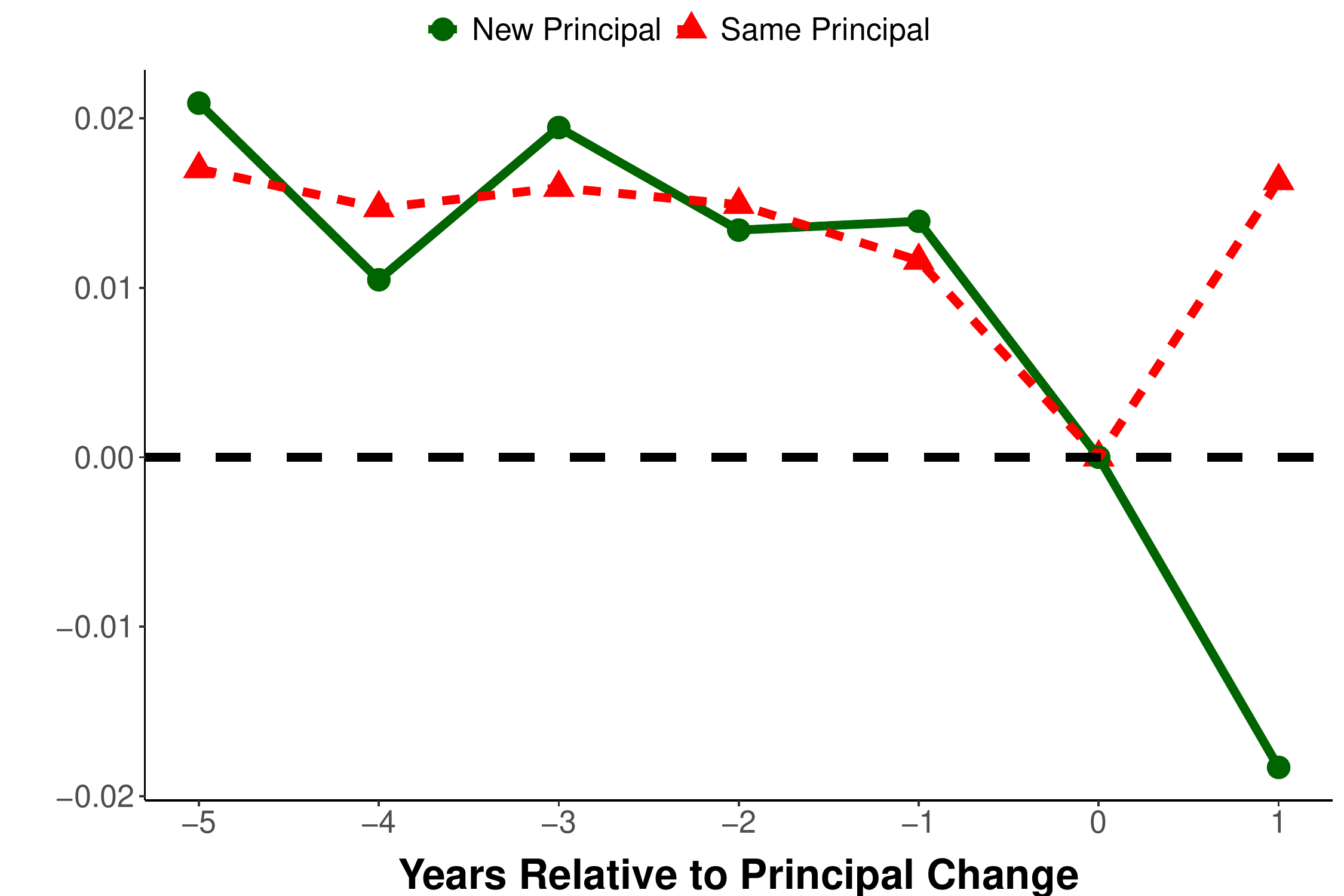}
\caption{This event-study style plot shows the estimated margins for treatment and comparison schools relative to year of a principal change (Year 0) for Missouri. The estimates are obtained via a DiD after obtaining a comparable control group through matching on both covariates and lagged outcomes.}
\label{fig:math_results}%
\end{center}
\end{figure} 

In this paper, we work to formalize this choice.
On one hand, \cite{matchDiDSim} shows matching can break parallel trends, but on the other, parallel trends is not initially holding.
We investigate this bias-bias trade-off in Section~\ref{section:application} with our practical guidelines that we present in Section~\ref{section:sensitivity_analysis}.
As part of this we also calculate rough estimates of the reduction in bias obtained from matching due to the covariates, and due to the covariates along with pre-treatment outcomes.
We find that the authors' decision to match on observed covariates and pre-treatment outcomes was indeed likely to have reduced overall bias, providing a more credible treatment effect estimate.

\section{A Working Model for Evaluating Matching Prior to DiD}
\label{section:setup}

\subsection{A Linear Structural Equation Model}
\label{subsection:LSEM}
We present our main results assuming a linear structural equation model following similar studies related to DiD and matching (see \citep{DiDmatching4, matchDiDSim} for examples).
Although our model assumes a specific parametric setting, we believe it nevertheless serves as a useful starting point for understanding the main tradeoffs at play.

Let $Y_{i, t}$ be the measured outcome of individual $i$ at time $t$.
We first focus on the classic DiD setting with one pre-treatment period at $t = 0$ and one post-treatment period at $t = 1$. 
In this pre-post setting, $Y_{i, 1}$ is our post-treatment outcome and $Y_{i,0}$ is the pre-treatment outcome.
No one has received treatment at $t  = 0$, and some of our units have received treatment at $t=1$.
We extend our results to multiple pre-treatment periods in Section~\ref{section:generalization}.

Denote $Z_i$ as a binary treatment indicator for individual $i$, with $Z_i = 1$ indicating membership in the treatment group.
We have an observed covariate $X_i$ for each individual $i$ and an unobserved latent covariate $\theta_i$.
We first focus on the case when both $X_i$ and $\theta_i$ are one-dimensional and extend our results to a multivariate setting in Section~\ref{section:generalization}.
We assume a model of both the latent and observed covariates with the following conditional multivariate Gaussian distribution,
\begin{equation}
\begin{aligned}
  \begin{pmatrix} \theta_i \\ X_i \end{pmatrix} \mid Z_i = z &\sim N\left(\begin{pmatrix} \mu_{\theta, z} \\ \mu_{x, z}  \end{pmatrix} ,  \begin{pmatrix} \sigma_{\theta}^2 & \rho \sigma_{\theta}\sigma_x \\ \rho \sigma_{\theta}\sigma_x & \sigma_{x}^2    \end{pmatrix}\right) \quad z = 0, 1,  \\
  Z_i & \sim \text{Bern}(p).
\end{aligned}
\label{eq:simplethetax}
\end{equation}
Both $\theta$ and $X$ are imbalanced across treatment groups when $\mu_{\theta, 1} \neq \mu_{\theta, 0}$ and $\mu_{x, 1} \neq \mu_{x, 0}$. 

The above model implies a selection mechanism of 
\begin{equation}
P(Z_i = 1\mid X_i = x, \theta_i = \theta) = \frac{pN_1(\theta, x)} {pN_1(\theta,x) + (1-p)N_0(\theta, x)},
\label{eq:selection_mech}
\end{equation}
where $N_z(\theta, x)$ is the probability density of $(\theta, X)$ in Equation~\eqref{eq:simplethetax} given $Z_i = z$. 
I.e., our model is equivalent to first generating $(\theta_i, X_i)$ pairs from some distribution and then assigning treatment to the units with the above probabilities (see Remark~\ref{remark:assignment_mech} for alternate assignment mechanisms found in the literature). 

We use the potential outcomes framework, and assume the Stable Unit Treatment Value Assumption, i.e., no interference between units and no multiple forms of treatment (see \citet{SplawaNeyman:1990} for their historical introduction and \citet{Rosenbaum:DesignObsStudy} for an overview).
In particular, we generate the potential outcomes $Y_{i,t}(z)$, which denotes the outcome we would see at time $t$ had unit $i$ received treatment $z$, and the observed outcome $Y_{i,t}$, with the following linear model, 
\begin{equation}
\begin{aligned}
  Y_{i,t}(0) &= \beta_{0, t} + \beta_{\theta, t} \theta_{i}  + \beta_{x, t} X_{i} + \epsilon_{i, t}, \\
  Y_{i,t}(1) &= Y_{i,t}(0) + \tau_i \mathbf{1}(t = 1), \\
  Y_{i,t} &= Z_i Y_{i,t}(1)  + (1- Z_i) Y_{i,t}(0),
\end{aligned}
\label{eq:simpleresponsemodel}
\end{equation}
where $\epsilon_{i, t} \sim N(0, \sigma_{E}^2) \independent (\theta_i, X_i, Z_i)$ (we assume homoscedastic errors across time), $\epsilon_{i, t} \independent \epsilon_{i, t'}$ for all $t, t'$, and  $\beta_{0, t}, \beta_{\theta, t}, \beta_{x,t}$ are fixed constants that denote the intercept at time period $t$ and the slopes of $\theta$ and $X$ at time $t$, respectively.
The vector of random variables $(Z_i, \theta_i, X_i, Y_{i,t}(0), Y_{i,t}(1),Y_{i,t})$ are independently and identically distributed according to the data generating process described in Equations~\ref{eq:simplethetax} and \ref{eq:simpleresponsemodel} across individuals.
While both $\theta$ and $X$ have time varying effects on the outcome as represented by the $\beta_{*,t}$, $\theta$ and $X$ themselves are time invariant, i.e., they do not change over time (although the pre-treatment outcome, viewed as a covariate, does in the case of multiple pre-treatment periods).
See Remark~\ref{remark:time_varying} for further details and connection to time varying confounders  and Section~\ref{subsection:PT} for connections to the principal turnover motivating example.

We let individual units have individual treatment effects $\tau_i = Y_{i, 1}(1) - Y_{i, 1}(0)$, with our estimand being the average treatment effect on the treated (ATT) of 
\begin{equation}
\tau := E(Y_{i, 1}(1) - Y_{i, 1}(0) \mid Z_i = 1) . \label{eq:ATT_definition}
\end{equation}
We assume no anticipation of treatment effect, with $Y_{i,t}(0) = Y_{i,t}(1)$ for $t = 0$; this is ensured by the above indicator $\mathbf{1}(t = 1)$. 

We critically allow the slopes of $\theta, X$ ($\beta_{\theta, t}, \beta_{x,t}$) to differ across time, which allows the confounders $\theta$ and $X$ to have a time varying effect on the response, thus breaking the parallel trends assumption.
To capture the breakage in parallel trends succinctly, define the imbalance of $\theta$ and the time varying change in the effect of $\theta$ as $\delta_{\theta}$ and $\Delta_{\theta}$, respectively, where,  
\begin{align*}
\delta_{\theta} & := \mu_{\theta, 1} - \mu_{\theta, 0}   \qquad \mbox{imbalance} ,\\
\Delta_{\theta} & := \beta_{\theta, 1} - \beta_{\theta, 0} \qquad  \mbox{time variation of effect.}
\end{align*}
We define $\delta_x$ and  $\Delta_x$ similarly for the imbalance and time varying effect of $X$. Finally, define $\Delta_0 = \beta_{0,1} - \beta_{0,0}$ as the baseline growth.
Under this notation, the growth of a unit $i$ from pre to post is a function of the observed and latent covariates, with an expected growth of $\Delta_0 + \Delta_\theta \theta_i + \Delta_X X_i$; if the treatment and control groups have different covariate distributions, they can have systematically different growth rates if $\Delta_\theta$ or $\Delta_x$ are nonzero.

Although we simplify our model by not including interactions between $\theta$ and $X$, we analyze the interaction case in Appendix~\ref{Appendix:interaction}; we find that it does not offer more insight than that already provided in the no-interaction case. 
We end this subsection with two remarks connecting our setup to existing literature.

\begin{remark}[Time varying confounders]
\label{remark:time_varying}
The outcome model we consider in Equation~\eqref{eq:simpleresponsemodel} is nearly equivalent to the one considered in \citep{DiDmatching4} except our covariates $(\theta_i, X_i)$ are time invariant while the aforementioned authors consider time varying ($\theta_{it}, X_{it}$) covariates.
We do not consider the time varying covariate case (other than the pre-treatment outcome) for three reasons:
1) As also noted by \citet{DiDmatching4}, the time invariant covariate case is interesting in itself as it allows imperfect parallel trends and leads to surprising and illustrative results when some covariates are unobserved, as shown in Section~\ref{section:main_results}.
2) By focusing on time-invariant covariates, we examining DiD in a context where we are interested in the ideal of matching units with similar characteristics; time varying covariates are then a consequence of these stable characteristics, and we represent them via the changing coefficients.
3) The time varying covariate case would require more structural assumptions on how the covariates evolve over time. 
\citet{DiDmatching4} detail that time varying covariates may evolve either a) completely unrelated to the treatment (such as with an AR(1) process), b) related to the treatment, or c) with a combination of the treatment and other factors. 
All three cases would require structurally modeling the time varying confounder, where settings b) and c) poses thorny issues due to post-treatment matching.
For these reasons, we choose to present our paper with the more simplified, but still widely used, time invariant covariates that have time varying effects as a starting point.
That said, also see discussion in Section~\ref{section:discussion} for further discussion of time-varying settings.
\end{remark}

\begin{remark}[Alternate assignment mechanisms]
\label{remark:assignment_mech}
The assignment mechanism we consider in Equation~\eqref{eq:selection_mech} presumes that the units select themselves into treatment or control as a function of $(X_i, \theta_i)$, their baseline characteristics (both observed and latent), as opposed to the actual pre-period outcome $Y_{i, 0}$.
Other evaluations have used alternative selection mechanisms; for example, \citep{chabe_2017, chabesim, DiD_new} instead consider assignment mechanisms that are purely a function of the pre-period outcome with no covariates.
In other words, this alternate mechanism can assign units to treatment due to chance deviations in the pre-period (this is the mechanism that would generate, for example, Ashenfelter’s dip \citep{ashen_dip}). We leave the connection and extension of our results for this setting to future work.
\end{remark}

\subsection{Parallel Trends}
\label{subsection:PT}
The key assumption behind DiD is the parallel trend assumption.
The parallel trends assumption states that the change between the pre-treatment and post-treatment control potential outcomes for treatment units is the same, on average, as that of the control units.
More formally,
\begin{equation}
 \begin{aligned}
    \underbrace{E(Y_{i, 1}(0) - Y_{i, 0}(0) \mid Z_i = 1)}_{\text{Change in treatment group}}  &=   \underbrace{E(Y_{i, 1}(0) - Y_{i, 0}(0) \mid Z_i = 0)}_{\text{Change in control group}}. \\
\end{aligned} 
\label{eq:paralleltrends}
\end{equation}
Under our linear model in Equation~\eqref{eq:simpleresponsemodel}, Equation~\eqref{eq:paralleltrends} is equivalent to,
$$\beta_{0,1} - \beta_{0,0} + \Delta_{\theta}\mu_{\theta, 1} + \Delta_{x}\mu_{x, 1} =   \beta_{0,1} - \beta_{0,0} + \Delta_{\theta}\mu_{\theta, 0} + \Delta_{x}\mu_{x, 0}, $$
or
$$ \Delta_{\theta}(\mu_{\theta, 1} - \mu_{\theta, 0}) + \Delta_{x}( \mu_{x, 1} -  \mu_{x, 0}) = \Delta_{\theta}\delta_{\theta} + \Delta_x\delta_x = 0 .$$
Since the pre-treatment period's potential outcome is unaffected by the treatment status, the potential outcome $Y_{i, 0}(0)$ is observed for both the treatment and control group. The post-treatment potential outcome, $Y_{i, 1}(0)$, is also observed for the control group.
However, $Y_{i, 1}(0)$ for the treatment group is unobserved, making it the main quantity to identify in a DiD analysis.
Equation~\eqref{eq:paralleltrends} allows us to impute this quantity to obtain an estimate of the ATT (Equation~\ref{eq:ATT_definition}).

Any departure from Equation~\eqref{eq:paralleltrends}, i.e., if 
$$\Delta_{\text{PT}} := E(Y_{i, 1}(0) - Y_{i, 0}(0) \mid Z_i = 1)  -   E(Y_{i, 1}(0) - Y_{i, 0}(0) \mid Z_i = 0) \neq 0,$$
leads to a biased DiD estimator.
Parallel trends holds despite the presence of observed or unobserved imbalance between the covariates as long as the expected \textit{change across time} in both the treatment and control is still same.
Consequently, \citet{DiDmatching4} state that typical confounders , i.e., covariates with different means within a single time point, are not necessarily confounders in a DiD study.
Instead, they say covariates ``that differ by treatment group and are associated with outcome trends are confounders'' (pg. 2).
In our model, not only are $(\theta, X)$ typical confounders but also (time invariant) confounders in a DiD study since $(\theta, X)$ can break parallel trends due to their time varying effects ($\beta_{\theta, t}, \beta_{x, t}$) on the response.
Therefore, we refer to $(\theta, X)$ as confounders throughout this paper.

For concreteness, consider the principal turnover application.
Say we observe the proportion of students on Free and Reduced Price Lunch for each school (this is an indicator of the proportion of the student body that live in lower income families).
This may be imbalanced since schools that have more principals departing may also be schools with higher or lower proportion of FRPL students, thus $\delta_x \neq 0$.
Now say there was increased state-wide government sponsored funding for schools above some threshold of FRPL in the post-treatment years.
This could change the relationship between the proportion of FRPL students and overall student achievement from the pre-treatment to post-treatment years, i.e., $\Delta_x \neq 0$.
An example of an unobserved confounder $\theta$ could be how well the school is able to adapt to statewide changes in the curriculum.
Now, if the state testing were changing, this could change the relationship of our outcome and $\theta$ across time.
Furthermore, we would expect $\theta$ to be correlated, $\rho \neq 0$, with many observed covariates $X$ (e.g., some underfunded schools may be less able to adapt to changing curriculum).

\subsection{DiD and Matching}

We now formally introduce the relevant DiD and matching estimators. For the rest of this section, we work with the expected values of each estimator (as opposed to the finite-sample estimators) to focus on the bias.
We first introduce the classic unmatched DiD estimator $\hat\tau_{DiD}$ \citep{abad:05}.
The expected value of $\hat\tau_{DiD}$ is,
\begin{equation}
 E\big[\hat{\tau}_{DiD}\big] = \{E(Y_{i, 1}\mid Z_{i} = 1) -  E(Y_{i, 1}\mid Z_{i} = 0)\} - \{ E(Y_{i, 0}\mid Z_{i} = 1) -  E(Y_{i, 0}\mid Z_{i} = 0)\},
 \label{eq:naiveDiD}
\end{equation}
\noindent where the expectation is over the joint distribution of $(X_i, \theta_i, \epsilon_{i, 0}, \epsilon_{i, 1})$. We refer to $\hat\tau_{DiD}$ as the na\"ive DiD estimator as it does not use matching at all. The na\"ive DiD estimator simply takes the difference of the differences between each control and treated group. It is unbiased if the parallel trend assumption in Equation~\eqref{eq:paralleltrends} is satisfied.

For a matched DiD estimator, we would first find a control unit for each treatment unit that shares the same value of $X$ (or both $X$ and $Y_0$, if matching on both).
We do this for each treatment unit and fit the resulting DiD estimator to this matched data.
For the purpose of the paper, we assume arbitrarily close-to-perfect matching.
In other words, if we are matching on $X$, we assume that for every treated individual $i$, we obtain a control individual $j$ such that $X_{i} = X_{j}$.
Although perfect matching rarely occurs in practice, it allows us to isolate bias from matching as an approach vs. its implementation.
It can also be viewed from an asymptotic argument, if the matching variable has finite support; see, e.g., \citep{perfect_matching_asymptotics}.
Additionally, many empirical works that assume conditional parallel trends also implicitly assume they have achieved perfect matching since conditional parallel trend requires parallel trends to hold given both the treated and control group have the \textit{exact} same values of the matched variable(s) \citep{DiDmatch_empirical3_kosuke}.
We leave fully characterizing the bias under imperfect matches to future work. 

Before introducing the matched DiD estimators, we define some additional notation to formally denote the matching step.
Suppose we are matching a treatment unit with value $X_i = x$.
Denote the expected observed outcome of the matched control unit as $E(Y_{i, 1} \mid Z_i = 0, X_i = x)$.
Now, we will have a control unit for each treated unit, so the average of these control units would be the average of the above over the distribution of $X_i$ in the treated group, giving an overall expected average outcome of 
$$E_{x \mid Z_i = 1}[E(Y_{i, 1}\mid Z_{i} = 0, X_i = x)] ,$$
where $E_{x \mid Z_i = 1}(\cdot)$ denotes the integral over the distribution of $X_i$ conditioned on $Z_i = 1$.

We have three estimators, depending on how we match: matching on $X_i$, matching on pre-treatment outcomes, and matching on both.
We found matching only on the pre-treatment outcome and not on an available covariate $X$ was both practically unsound and also unhelpful for providing insight to the benefits of matching.
Consequently, we relegate the analysis for this estimator to Appendix~\ref{Appendix:only_pre_treatment}, although we do analyze the case of matching on a pre-treatment outcome when there is no $X$ below.

The expected outcomes of the two remaining matching estimators (assuming perfect matches) of matching on the observed covariates $X$ and matching on both $X$ and the pre-treatment outcome $Y_{0}$ are then:
\begin{equation}
\begin{aligned}
E\big[\hat{\tau}_{DiD}^{X} \big] &= E(Y_{i, 1}\mid Z_{i} = 1) -  E(Y_{i, 0}\mid Z_{i} = 1)  \\
 & \quad - (E_{x \mid Z_i = 1}[E(Y_{i, 1}\mid Z_{i} = 0, X_i = x)] -  E_{x \mid Z_i = 1}[E(Y_{i, 0}\mid Z_{i} = 0, X_i = x)]) ,\\
 E\big[\hat{\tau}_{DiD}^{X, Y_{0}} \big] &= E(Y_{i, 1}\mid Z_{i} = 1) -  E(Y_{i, 0}\mid Z_{i} = 1) \\
 &\quad - (  E_{(x,y) \mid Z_i = 1}[E(Y_{i, 1}\mid Z_{i} = 0, X_i = x, Y_{i, 0} = y)] \\
 &\quad -  E_{(x,y) \mid Z_i = 1}[E(Y_{i, 0}\mid Z_{i} = 0, X_i = x, Y_{i, 0} = y)]) ,\\
 &=  E(Y_{i, 1}\mid Z_{i} = 1) - E_{(x,y) \mid Z_i = 1}[E(Y_{i, 1}\mid Z_{i} = 0, X_i = x, Y_{i, 0} = y)] .
\end{aligned}
\label{eq:matchingDiD_estimators}
\end{equation}
Importantly, the expression for $E\big[\hat{\tau}_{DiD}^{X, Y_{0}} \big]$ reveals one consequence of matching: we no longer have a second difference correction term because we have matched it away \citep[see][for a futher discussion]{gain_scores2, gain_scores1}.
We will later see that if the matching did not achieve true balance on the latent aspects, this lack of correction term leads to bias.

As a final baseline of comparison, we also consider the simple difference in means estimator $E\big[\hat{\tau}_{DiM}\big] = E(Y_{i, 1}\mid Z_i = 1) - E(Y_{i, 1}\mid Z_i = 0)$, which takes the difference in means for the treated and control using only the post-treatment outcomes. We will refer to this estimator as the simple DiM estimator. We now present the main results of when matching prior to a DiD analysis may help or hurt. 

\section{The Biases of the Estimators}
\label{section:main_results}
We first present the bias results in full generality and then consider special cases to gain intuition on what is driving the bias. 
\begin{theorem}[Bias of DiD and Matching DiD estimators]
If $(Z_i, X_i, \theta_i, Y_{i,t})$ are independently and identically drawn from the data generating process as shown in Equations~\ref{eq:simplethetax}-~\ref{eq:simpleresponsemodel}, then the bias of our estimators are the following,
\begin{align*}
E\big[\hat{\tau}_{DiM}\big] - \tau &= \beta_{\theta, 1}\delta_{\theta} + \beta_{x, 1}\delta_x, \\
E\big[\hat{\tau}_{DiD}\big]- \tau &= \Delta_{\theta}\delta_{\theta} + \Delta_x\delta_x ,\\
E\big[\hat{\tau}_{DiD}^X\big]- \tau &=  \Delta_{\theta}\bigg[ \delta_{\theta}  - \rho \frac{\sigma_{\theta}}{\sigma_x}\delta_x \bigg],\\
E\big[\hat{\tau}_{DiD}^{X, Y_{0}}\big]- \tau &=  \beta_{\theta, 1}\bigg[(1 -  r_{ \theta \mid x}) \bigg(\delta_{\theta} - \rho \frac{\sigma_{\theta}}{\sigma_x}\delta_x  \bigg)\bigg],
\end{align*}
where 
\begin{align*}
    r_{\theta \mid x} &= 1 - \frac{Var(Y_{i, 0} \mid Z_i = 0, X_i, \theta_i) }{Var(Y_{i, 0} \mid Z_i = 0, X_i) } 
    = \frac{(\beta_{\theta, 0})^2 \sigma_{\theta}^2(1-\rho^2) }{(\beta_{\theta, 0})^2 \sigma_{\theta}^2(1-\rho^2) + \sigma_{E}^2}.
\end{align*}
\label{theorem:mainresults}
\end{theorem}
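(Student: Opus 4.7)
The plan is to express each of the four estimators in Equation~\eqref{eq:matchingDiD_estimators} by substituting the linear outcome model in Equation~\eqref{eq:simpleresponsemodel} and then reducing everything to the quantities $\delta_\theta, \delta_x, \Delta_\theta, \Delta_x, \beta_{\theta,1}$, and the reliability $r_{\theta \mid x}$. The only non-trivial ingredients I need are three conditional expectations: (a) the marginal means $E(Y_{i,t}\mid Z_i=z)$, (b) the single-conditional mean $E(Y_{i,t}\mid Z_i=0, X_i=x)$, and (c) the double-conditional mean $E(Y_{i,1}\mid Z_i=0, X_i=x, Y_{i,0}=y)$. Each is computed from the joint Gaussianity of $(\theta_i, X_i)\mid Z_i$ and the additive noise structure; once they are in hand, the rest is algebra.

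For (a), taking expectations in Equation~\eqref{eq:simpleresponsemodel} gives $E(Y_{i,t}\mid Z_i=z) = \beta_{0,t}+\beta_{\theta,t}\mu_{\theta,z}+\beta_{x,t}\mu_{x,z}+z\,\tau\,\mathbf{1}(t=1)$. Substituting into $\hat\tau_{DiM}$ and subtracting $\tau$ yields $\beta_{\theta,1}\delta_\theta + \beta_{x,1}\delta_x$; forming the DiD combination cancels $\beta_{\theta,0}\delta_\theta$ and $\beta_{x,0}\delta_x$ and leaves $\Delta_\theta\delta_\theta+\Delta_x\delta_x$. This immediately gives the first two lines of the theorem.

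For (b), standard Gaussian conditioning on the joint $(\theta_i,X_i)\mid Z_i=0$ block of Equation~\eqref{eq:simplethetax} gives $\theta_i\mid Z_i=0, X_i=x \sim N\!\bigl(m_x,\,v\bigr)$ with $m_x=\mu_{\theta,0}+\rho\tfrac{\sigma_\theta}{\sigma_x}(x-\mu_{x,0})$ and $v=\sigma_\theta^2(1-\rho^2)$, so $E(Y_{i,t}\mid Z_i=0,X_i=x) = \beta_{0,t}+\beta_{\theta,t}m_x+\beta_{x,t}x$. Averaging over $x\mid Z_i=1$ (so that $Ex=\mu_{x,1}$ and $Em_x=\mu_{\theta,0}+\rho\tfrac{\sigma_\theta}{\sigma_x}\delta_x$), substituting into $\hat\tau_{DiD}^X$, and observing that the $\beta_{x,t}x$ contribution is exactly killed by matching on $X$ while the $\beta_{\theta,0}$ piece cancels across pre and post, leaves $\Delta_\theta\bigl[\delta_\theta-\rho\tfrac{\sigma_\theta}{\sigma_x}\delta_x\bigr]$, the third line.

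The main obstacle is (c), which drives the fourth line. I would apply Gaussian conditioning a second time: conditional on $Z_i=0$ and $X_i=x$, the pair $(\theta_i,Y_{i,0})$ is bivariate normal with $\theta_i\sim N(m_x,v)$ and $Y_{i,0}=\beta_{0,0}+\beta_{\theta,0}\theta_i+\beta_{x,0}x+\epsilon_{i,0}$, so the posterior mean has the shrinkage form $E(\theta_i\mid Z_i=0,X_i=x,Y_{i,0}=y) = m_x + \tfrac{\beta_{\theta,0}v}{\beta_{\theta,0}^2v+\sigma_E^2}\bigl(y-\beta_{0,0}-\beta_{\theta,0}m_x-\beta_{x,0}x\bigr)$, whose coefficient is exactly $r_{\theta\mid x}$. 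Using the reduced form of $\hat\tau_{DiD}^{X,Y_0}$ in Equation~\eqref{eq:matchingDiD_estimators} (where matching on $Y_0$ has already eliminated the pre-period subtraction), I would then average $E(Y_{i,1}\mid Z_i=0,X_i=x,Y_{i,0}=y) = \beta_{0,1}+\beta_{\theta,1}E(\theta_i\mid\cdot)+\beta_{x,1}x$ over $(x,y)\mid Z_i=1$. Because $E(Y_{i,0}\mid Z_i=1)=\beta_{0,0}+\beta_{\theta,0}\mu_{\theta,1}+\beta_{x,0}\mu_{x,1}$, the residual $y-\beta_{0,0}-\beta_{\theta,0}m_x-\beta_{x,0}x$ has expectation $\beta_{\theta,0}\bigl(\delta_\theta-\rho\tfrac{\sigma_\theta}{\sigma_x}\delta_x\bigr)$, so the posterior-mean average evaluates to $\mu_{\theta,0}+\rho\tfrac{\sigma_\theta}{\sigma_x}\delta_x+r_{\theta\mid x}\bigl(\delta_\theta-\rho\tfrac{\sigma_\theta}{\sigma_x}\delta_x\bigr)$. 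Subtracting this from $E(Y_{i,1}\mid Z_i=1)=\tau+\beta_{0,1}+\beta_{\theta,1}\mu_{\theta,1}+\beta_{x,1}\mu_{x,1}$ and cancelling the $\beta_{x,1}\mu_{x,1}$ and $\beta_{\theta,1}\mu_{\theta,0}$ terms collapses everything to $\beta_{\theta,1}(1-r_{\theta\mid x})\bigl(\delta_\theta-\rho\tfrac{\sigma_\theta}{\sigma_x}\delta_x\bigr)$. The substantive point of the simplification is that matching on $Y_0$ leaves behind the $(1-r_{\theta\mid x})$ residual fraction of the $\theta$-driven imbalance but, because it also destroys the second difference, multiplies it by the full post-period slope $\beta_{\theta,1}$ rather than by $\Delta_\theta$; recognizing and accounting for this replacement is the key conceptual move and the source of the bias--bias tradeoff the paper analyzes.
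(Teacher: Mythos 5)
Your proposal is correct and follows essentially the same route as the paper: both reduce each estimator to Gaussian conditional means of $\theta_i$ via the multivariate-normal conditioning formula and then substitute the linear outcome model, the only cosmetic difference being that you condition sequentially (on $X_i$, then on $Y_{i,0}$ given $X_i$) where the paper's Appendix~\ref{appendix:proof_mainresult} inverts the $2\times 2$ covariance block of $(X_i, Y_{i,0})$ jointly. One small slip in your prose: the shrinkage coefficient you display is $r_{\theta\mid x}/\beta_{\theta,0}$, not $r_{\theta\mid x}$ itself, but since the residual it multiplies has mean $\beta_{\theta,0}\bigl(\delta_\theta-\rho\tfrac{\sigma_\theta}{\sigma_x}\delta_x\bigr)$ the factors of $\beta_{\theta,0}$ cancel and your final expression is right.
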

The proof is provided in Appendix~\ref{appendix:proof_mainresult}.
Before moving to simpler cases, we make a few remarks.
First, the bias of the na\"ive DiD estimator is $\Delta_{\text{PT}} =  \Delta_{\theta}\delta_{\theta} + \Delta_x\delta_x$, i.e. is the degree to which parallel trend is broken.
Second, the simple DiM estimator is biased proportionally to the imbalance in $\theta$ and $X$ ($\delta_{\theta}, \delta_x$, respectively) and to how connected these are to the outcome ($\beta_{\theta,1}, \beta_{X,1}$) post treatment.
Lastly, the $r_{\theta \mid x}$ term is related to the reliability or consistency of a measure \citep{reliability1, reliability2}.
The reliability of a measure $Y$ captures how much $Y$ is a function of what it is measuring ($\theta$) over noise ($\epsilon$).
More formally stated:
\begin{definition}[Reliability]
The reliability $r_{\theta}$ of a random variable $Y$ as a measure of a random variable $\theta$ is
$$0 \leq r_{\theta} = 1 - \frac{Var(Y \mid \theta)}{Var(Y)} \leq 1.$$
\label{def:reliability}
\end{definition}
In particular, we use $r_{\theta \mid x}$, the (conditional) reliability of the pre-treatment outcome with respect to $\theta$ within the control group after controlling for $X$.
In our linear framework, $r_{\theta \mid x}$ can be interpreted as the population $R$-squared statistic if we were able to regress the pre-treatment outcome $Y_{i, 0}$ on $\theta_i$ after accounting for $X_i$ within the control group.
Alternatively put, $r_{\theta \mid x}$ can be interpreted as the square of the correlation between $\theta$ and the pre-treatment response after accounting for $X$ within the control group: $r_{\theta \mid x} = \text{Cor}^2(Y_{i,0}, \theta_i \mid X_i, Z_i = 0)$.
Throughout this paper, we often refer to $r_{\theta \mid x}$ as simply the reliability term.
This reliability cannot generally be directly estimated from the data and may differ greatly depending on the specific application and context.
We provide a heuristic way to estimate the reliability with multiple pre-treatment time points under some further assumptions in Section~\ref{section:sensitivity_analysis}.

Because it is not immediately obvious from the expressions in Theorem~\ref{theorem:mainresults} how matching on each variables is helping/hurting the DiD estimator, we will build intuition for these results by going through simple sub-cases first.
We first build intuition when there exist no covariate $X$ in Section~\ref{subsection:nocov}. We then bring back covariate $X$ but force a zero correlation with $\theta$ ($\rho = 0$) in Section~\ref{subsection:generalcase_uncorrelated}, and finally return to the general correlated case in Section~\ref{subsection:full_general_case}.

\subsection{No Covariate Case}
\label{subsection:nocov}
If there is no observed $X$ (equivalently, if all $\beta_{x,t}=0$), then the bias expressions of matching only on $Y_{0}$ prior to a DiD analysis simplify.

\begin{corollary}[Bias with no covariates]
Assume the same conditions in Theorem~\ref{theorem:mainresults}, but with no effect from $X$, i.e.,  $\beta_{x, 0} = \beta_{x, 1} = \rho = 0$. The bias expressions in Theorem~\ref{theorem:mainresults} then reduce to the following:
\begin{align*}
E\big[\hat{\tau}_{DiM}\big] - \tau &= \beta_{\theta, 1}\delta_{\theta} ,\\ 
E\big[\hat{\tau}_{DiD}\big] - \tau &= \Delta_{\theta} \delta_{\theta},\\
E\big[\hat{\tau}_{DiD}^{Y_{0}}\big]  - \tau&=\beta_{\theta, 1}\delta_{\theta}(1-r_{\theta}) = \beta_{\theta, 1}\delta_{\theta}\bigg[1-\frac{ \beta_{\theta, 0}^2 \sigma_{\theta}^2 }{\beta_{\theta, 0}^2 \sigma_{\theta}^2  + \sigma_{E}^2} \bigg],
\end{align*}
where $r_{\theta} = 1 - Var(Y|\theta) / Var(Y)$.
\label{corollary:nocov_results}
\end{corollary}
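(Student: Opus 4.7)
The plan is almost entirely substitution into Theorem~\ref{theorem:mainresults}, together with one small argument that under the stated conditions, matching on $(X, Y_{0})$ collapses to matching on $Y_{0}$ alone. So the outline is:

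First, I would handle the two easy lines. Plug $\beta_{x,1}=0$ into the bias formula for $\hat{\tau}_{DiM}$ from Theorem~\ref{theorem:mainresults}; this immediately yields $\beta_{\theta,1}\delta_{\theta}$. Similarly, plug $\Delta_{x}=\beta_{x,1}-\beta_{x,0}=0$ into the bias formula for $\hat{\tau}_{DiD}$ to obtain $\Delta_{\theta}\delta_{\theta}$. No further calculation is required.

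Second, I need to connect $\hat{\tau}_{DiD}^{Y_{0}}$ (matching only on the pre-treatment outcome) to $\hat{\tau}_{DiD}^{X, Y_{0}}$ (for which Theorem~\ref{theorem:mainresults} already gives the bias). The key observation is that under $\beta_{x,0}=\beta_{x,1}=\rho=0$, the covariate $X_i$ is independent of $(\theta_i,\epsilon_{i,0},\epsilon_{i,1})$ and appears nowhere in the outcome equation~\eqref{eq:simpleresponsemodel}. Hence, for any $x,y$,
\begin{equation*}
E(Y_{i,t}\mid Z_i=0, X_i=x, Y_{i,0}=y)=E(Y_{i,t}\mid Z_i=0, Y_{i,0}=y),
\end{equation*}
for $t=0,1$. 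Plugging this into the expression for $E[\hat{\tau}_{DiD}^{X,Y_{0}}]$ in Equation~\eqref{eq:matchingDiD_estimators} shows it equals $E[\hat{\tau}_{DiD}^{Y_{0}}]$, so the two estimators have the same expected value, and hence the same bias.

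Third, I would substitute $\rho=0$ and $\delta_x \cdot 0 = 0$ into the bias formula for $E[\hat{\tau}_{DiD}^{X,Y_{0}}]$ in Theorem~\ref{theorem:mainresults}. The cross term $\rho(\sigma_{\theta}/\sigma_{x})\delta_{x}$ vanishes, leaving $\beta_{\theta,1}(1-r_{\theta\mid x})\delta_{\theta}$. Finally, substituting $\rho=0$ into the given expression for $r_{\theta\mid x}$ gives
\begin{equation*}
r_{\theta\mid x}=\frac{\beta_{\theta,0}^{2}\sigma_{\theta}^{2}}{\beta_{\theta,0}^{2}\sigma_{\theta}^{2}+\sigma_{E}^{2}},
\end{equation*}
which coincides with $r_{\theta}=1-\mathrm{Var}(Y_{i,0}\mid\theta_{i})/\mathrm{Var}(Y_{i,0})$ since, with no $X$ effect, $\mathrm{Var}(Y_{i,0}\mid\theta_{i})=\sigma_{E}^{2}$ and $\mathrm{Var}(Y_{i,0})=\beta_{\theta,0}^{2}\sigma_{\theta}^{2}+\sigma_{E}^{2}$. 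This delivers the stated expression.

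There is no real obstacle: the only point that requires a short justification (rather than pure substitution) is the equivalence of $\hat{\tau}_{DiD}^{Y_{0}}$ and $\hat{\tau}_{DiD}^{X,Y_{0}}$ under these parameter restrictions, which follows because $X$ is jointly independent of everything that drives the outcome and thus carries no information for predicting $Y_{i,t}$ given $Y_{i,0}$.
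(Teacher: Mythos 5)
Your proposal is correct and matches the paper's (implicit) argument: the corollary is obtained by direct substitution of $\beta_{x,0}=\beta_{x,1}=\rho=0$ into Theorem~\ref{theorem:mainresults}. Your second step --- explicitly verifying that under these restrictions $X_i$ is independent of $(Y_{i,0},Y_{i,1})$ given $Z_i$, so that $E\big[\hat{\tau}_{DiD}^{X,Y_{0}}\big]=E\big[\hat{\tau}_{DiD}^{Y_{0}}\big]$ and the theorem's formula applies to the $Y_0$-only estimator --- is a point the paper leaves implicit, and it is a worthwhile addition rather than a deviation.
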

The na\"ive DiD, as explained in Section~\ref{subsection:PT}, has two interpretable terms contributing to the bias.
The first term,  $\delta_{\theta}$, is how imbalanced $\theta$ is.
The second term, $\Delta_{\theta}$, represents a potentially time varying relationship with the outcome that can break parallel trends. 
The matched estimator, if the reliability is greater than zero, uniformly dominates (has less bias than) the simple DiM estimator, but the comparison to DiD is less direct.

The key intuition of why matching on pre-treatment outcome before DiD could give more credible causal estimates than simple DiD is that matching should reduce imbalance between the treatment and control groups by reducing the differences driven by $\delta_{\theta}$.
Unfortunately, $\theta$ is not directly observed, but we hope that matching on the pre-treatment outcome will indirectly match on $\theta$.
The reliability term,  $r_{\theta}$, is a natural measure of how good the pre-treatment outcome proxies $\theta$.

To compare matching with DiD to DiD without matching, we rewrite the bias of $E\big[\hat{\tau}_{DiD}^{Y_{0}}\big]$ as
$$E\big[\hat{\tau}_{DiD}^{Y_{0}}\big]  - \tau =
   \beta_{\theta, 1}\delta_{\theta}(1-r_{\theta}) = 
   \Delta_{\theta} \delta_{\theta} - \underbrace{\Delta_{\theta} \delta_{\theta}r_{\theta}}_{\text{Reduction in Bias}} + \underbrace{ \beta_{\theta, 0}\delta_{\theta}(1-r_{\theta})}_{\text{Amplification in Bias}} . $$
This expression confirms that matching helps reduce the bias proportionally to how high the reliability $r_{\theta}$ is (note the bias offsetting term of $\Delta_{\theta} \delta_{\theta}r_{\theta}$).
It also reveals an additional bias term due to matching, $\beta_{\theta, 0}\delta_{\theta}(1-r_{\theta})$, that grows larger as reliability declines.

 Indeed, if $Y_{0}$ is completely dominated by noise, i.e., $r_{\theta} \approx 0$, implying that $Y_0 \approx \epsilon$ (assuming no observed covariates $X$), then the matching estimator achieves the maximum possible bias since one is matching on noise.
On the other hand, if the pre-treatment outcome perfectly proxies $\theta$, i.e., $r_{\theta} = 1$, implying that $Y_0 = \beta_{\theta} \theta$, then our bias is exactly zero as matching on $Y_0$ is equivalent to the ideal of matching directly on $\theta$.
Finally, if parallel trends perfectly held ($\Delta_\theta = 0$), then matching on the pre-treatment outcome would result in a non-zero bias of $\beta_{\theta, 0}\delta_{\theta}(1-r_{\theta})$; this is the regression to the mean phenomenon discussed in \citet{matchDiDSim}.

On one hand, matching on the pre-treament outcome reduces the bias compared to the na\"ive DiD by recovering the latent confounder $\theta$ proportional to the reliability  $r_{\theta}$.
On the other hand, matching also incurs additional bias by effectively removing the second ``difference'' in the DiD estimator in terms of $\theta$.
We next provide, for this simple no-covariate case, necessary and sufficient conditions to determine when it is better to match on the pre-treatment outcomes as compared to not matching at all. 

\begin{lemma}[Sufficient and necessary condition to match on pre-treatment outcome in the no-covariate case]
Under the assumptions of Corollary~\ref{corollary:nocov_results}, the absolute bias of matching on the pre-treatment outcomes is smaller or equal to the absolute bias of the na\"ive DiD, i.e., $|E\big[\hat{\tau}_{DiD}^{X, Y_{0}}\big] - \tau| \leq |E\big[\hat{\tau}_{DiD}\big] - \tau|$, if and only if
\begin{equation}
   r_{\theta} \geq 1 - \left|1 - s\right| \mbox{ with } s = \frac{\beta_{\theta, 0}}{\beta_{\theta, 1}}.
\label{eq:suffcond}
\end{equation}
\label{lemma:suffcond_match}
\end{lemma}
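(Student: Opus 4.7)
\medskip
\noindent\textbf{Proof proposal.} The plan is to reduce the claimed inequality to an elementary scalar inequality by directly substituting the bias formulas from Corollary~\ref{corollary:nocov_results}, and then to exploit the fact that the reliability $r_\theta$ is bounded in $[0,1]$ to drop an outer absolute value.

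First I would write out the two quantities we need to compare: by Corollary~\ref{corollary:nocov_results}, $E[\hat{\tau}_{DiD}] - \tau = \Delta_\theta \delta_\theta = (\beta_{\theta,1}-\beta_{\theta,0})\delta_\theta$ and $E[\hat{\tau}_{DiD}^{Y_0}] - \tau = \beta_{\theta,1}\delta_\theta(1-r_\theta)$. The desired inequality $|E[\hat{\tau}_{DiD}^{Y_0}]-\tau| \leq |E[\hat{\tau}_{DiD}]-\tau|$ therefore reads
\[
\bigl|\beta_{\theta,1}\delta_\theta(1-r_\theta)\bigr| \;\leq\; \bigl|(\beta_{\theta,1}-\beta_{\theta,0})\delta_\theta\bigr|.
\]
I would dispose of the degenerate cases first. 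If $\delta_\theta=0$ both biases are zero and the equivalence holds vacuously; if $\beta_{\theta,1}=0$ then $s$ is undefined, so I would either assume $\beta_{\theta,1}\neq 0$ for the statement to make sense or note separately that in this case the lemma is trivial (both sides reduce to $|\beta_{\theta,0}\delta_\theta|$ since $r_\theta$ multiplies $\beta_{\theta,1}=0$).

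With $\beta_{\theta,1}\delta_\theta \neq 0$, I would divide both sides by $|\beta_{\theta,1}\delta_\theta|$ and use the definition $s=\beta_{\theta,0}/\beta_{\theta,1}$ to obtain the equivalent inequality
\[
|1-r_\theta| \;\leq\; |1-s|.
\]
Since $0 \leq r_\theta \leq 1$ by Definition~\ref{def:reliability} (the ratio of variances lies in $[0,1]$), we have $1-r_\theta \geq 0$, so $|1-r_\theta| = 1-r_\theta$. Rearranging yields $r_\theta \geq 1 - |1-s|$, which is exactly the stated condition, and every step is an equivalence, giving both directions.

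The main obstacle is purely bookkeeping rather than conceptual: tracking the sign of $1-r_\theta$ so that the outer absolute value can be removed cleanly, and flagging the degenerate cases ($\delta_\theta = 0$ or $\beta_{\theta,1}=0$) where the normalization step is inadmissible. Once $r_\theta \in [0,1]$ is invoked, the equivalence drops out in one line.
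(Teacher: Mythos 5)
Your proof is correct and is exactly the "algebraic simplification of the main result" that the paper invokes: substitute the two bias expressions from Corollary~\ref{corollary:nocov_results}, divide by $|\beta_{\theta,1}\delta_\theta|$, and use $r_\theta \in [0,1]$ to drop the absolute value on $1-r_\theta$. The handling of the degenerate cases is a welcome bit of extra care, but the approach is the same as the paper's.
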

The proof comes from an algebraic simplification of the main result.
Intuitively, we should expect the trade off to depend on the reliability vs. how well parallel trends is satisfied.
The ratio of the pre-treatment outcome and post-treatment outcome slopes, $s$, is a scale invariant measure of how non-parallel the outcome-covariate relationship is before and after treatment.
When parallel trends hold, with $s=1$, the condition reduces to $r_{\theta} \geq 1$, showing that matching on the pre-treatment outcome is never recommended in this case. 

Parallel trends can break in two different ways.
The first is when the post-treatment slope is bigger than the pre-treatment slope $\beta_{\theta,1} > \beta_{\theta, 0}$, and they have the same sign, i.e., $0 < s < 1$.
In this case, the absolute value in Equation~\eqref{eq:suffcond} is irrelevant and the condition reduces to $r_{\theta} > s$, directly illustrating the trade off between reliability and the breakage in parallel trends.
For example, if the breakage in parallel trends is very minimal, i.e., $s = 0.99$ (close to 1) then the reliability has to be at least 99\% for the matching estimator to have lower bias than the na\"ive DiD.
In the second case when $s > 1$, there is a similar interpretation.
For example, if $s = 1.01$, we need our reliability to be at least 99\% to have a lower bias. 
Lastly, if the pre and post slopes of $\theta$ have differing signs, i.e., $s < 0$, or when $s > 2$, then Equation~\eqref{eq:suffcond} always holds because the right hand side is less than zero and reliability is always greater than zero.

We summarize the above relationship in Figure~\ref{fig:reliability_PT}, which shows the two-dimensional regions of $(s, r_{\theta})$ such that Lemma~\ref{lemma:suffcond_match} holds. 
As expected, when parallel trends is almost satisfied (near one), one would generally not want to match (the lighter region) unless reliably is very high.
On the other hand, when parallel trends is far from perfect (far from one), we benefit from matching more often, even with modest reliability.
\begin{figure}
\begin{center}
\includegraphics[width = 7cm, height = 5cm]{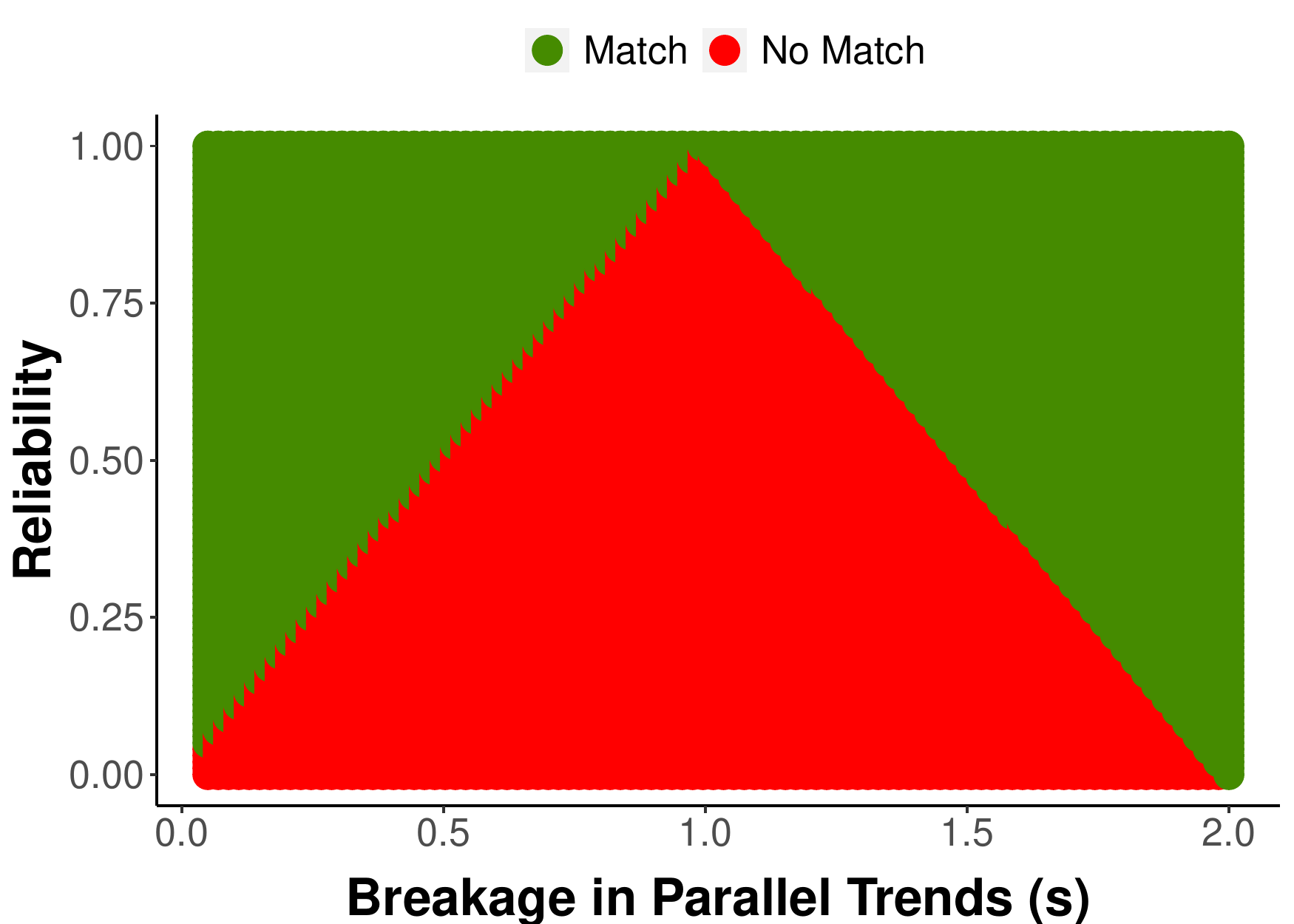}
\caption{Values of the breakage in parallel trends ($s$, on $x$-axis) and reliability ($r_{\theta}$, on $y$-axis) such that matching on pre-treatment outcomes before DiD leads to less bias (darker areas) or not (lighter areas).  See Lemma~\ref{lemma:suffcond_match}.}
\label{fig:reliability_PT}
\end{center}
\end{figure}

\subsection{Uncorrelated Case with Covariate}
\label{subsection:generalcase_uncorrelated}
We next consider the more general situation when the observed covariates do affect the outcome, i.e., $\beta_{x, 0} \neq 0$ and $\beta_{x, 1} \neq 0$, but the covariate is uncorrelated with the latent confounder ($\rho = 0$).  
This model is quite general as we could define the latent confounder as only those unexplained differences beyond what can be explained by $X$.

\begin{corollary}[Bias with zero correlation]
Assume the same conditions in Theorem~\ref{theorem:mainresults} but the observed covariate is uncorrelated with $\theta$ within the treatment groups, i.e., $\rho = 0$.
The bias expressions in Theorem~\ref{theorem:mainresults} then reduce to the following:
\begin{align*}
E\big[\hat{\tau}_{DiM}\big]  - \tau&= \beta_{\theta, 1}\delta_{\theta} + \beta_{x, 1}\delta_x ,\\
E\big[\hat{\tau}_{DiD}\big] - \tau &= \Delta_{\theta}\delta_{\theta} + \Delta_x\delta_x, \\
E\big[\hat{\tau}_{DiD}^X\big]  - \tau&=  \Delta_{\theta}\delta_{\theta}, \\
E\big[\hat{\tau}_{DiD}^{X, Y_{0}}\big]  - \tau&=  \beta_{\theta, 1}\delta_{\theta}(1-r_{\theta \mid x}) = \beta_{\theta, 1}\delta_{\theta}\bigg[1-\frac{ \beta_{\theta, 0}^2 \sigma_{\theta}^2 }{\beta_{\theta, 0}^2 \sigma_{\theta}^2  + \sigma_{E}^2} \bigg].
\end{align*}
\label{corollary:zerocorrelation}
\end{corollary}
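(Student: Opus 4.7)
The plan is to obtain Corollary~\ref{corollary:zerocorrelation} as a direct specialization of Theorem~\ref{theorem:mainresults}, by substituting $\rho = 0$ into each of the four bias expressions. Since Theorem~\ref{theorem:mainresults} is assumed, no new computation of conditional expectations is needed; the work reduces to algebraic simplification. The first two lines, for $E[\hat{\tau}_{DiM}]$ and $E[\hat{\tau}_{DiD}]$, are independent of $\rho$ and so carry over verbatim. In the third line, the bracket $\delta_\theta - \rho \tfrac{\sigma_\theta}{\sigma_x}\delta_x$ collapses to $\delta_\theta$, giving $E[\hat{\tau}_{DiD}^X] - \tau = \Delta_\theta \delta_\theta$. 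In the fourth line, the same bracket reduces to $\delta_\theta$, and it remains only to verify the reliability expression: using the definition $r_{\theta \mid x} = \frac{\beta_{\theta,0}^2 \sigma_\theta^2 (1-\rho^2)}{\beta_{\theta,0}^2 \sigma_\theta^2 (1-\rho^2) + \sigma_E^2}$ and setting $\rho = 0$ yields exactly the form stated in the corollary.

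Because Theorem~\ref{theorem:mainresults} does all the heavy lifting, there is essentially no obstacle; the only substantive check is the simplification of $r_{\theta \mid x}$. If one instead wanted a self-contained derivation, the main step would be to recompute $E(Y_{i,1} \mid Z_i = 0, X_i = x, Y_{i,0} = y)$ under $\rho = 0$: in that regime, $\theta \mid X_i, Z_i = 0$ has the unconditional marginal $N(\mu_{\theta,0}, \sigma_\theta^2)$, so the projection of $\theta$ onto $(X, Y_0)$ reduces to a univariate regression of $\theta$ on the noisy measurement $Y_{i,0} - \beta_{0,0} - \beta_{x,0} X_i = \beta_{\theta,0}\theta_i + \epsilon_{i,0}$, whose residual-to-total variance ratio is precisely $1 - r_{\theta \mid x}$ with $\rho = 0$. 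The intuition to highlight in a remark is that, under $\rho = 0$, matching on $X$ removes exactly the $\Delta_x \delta_x$ portion of the DiD bias but leaves the $\Delta_\theta \delta_\theta$ piece untouched, while additionally matching on $Y_{0}$ proxies $\theta$ only to the extent permitted by the reliability $r_{\theta \mid x}$, recovering the no-covariate structure of Corollary~\ref{corollary:nocov_results}.
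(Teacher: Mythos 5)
Your proposal is correct and matches the paper's treatment: the corollary is stated as a direct specialization of Theorem~\ref{theorem:mainresults}, obtained by setting $\rho = 0$ in each bias expression, and your substitutions (the bracket collapsing to $\delta_\theta$ and $r_{\theta\mid x}$ reducing to $\beta_{\theta,0}^2\sigma_\theta^2/(\beta_{\theta,0}^2\sigma_\theta^2 + \sigma_E^2)$) are exactly what is needed. The paper provides no separate proof beyond this substitution, so nothing further is required.
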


Unlike matching on $Y_0$, matching on $X$ directly gets rid of the bias terms related to $X$ ($\Delta_x \delta_x$). Although this is generally an improvement from the na\"ive DiD, the na\"ive DiD bias could still be lower than the bias from matching on $X$ if preexisting bias from $X$ and $\theta$ were cancelling each other.
For example, if the time varying effects of $\theta$ and $X$ are the same ($\Delta_x = \Delta_{\theta})$ but the imbalance of $\theta$ and $X$ were in opposite direction ($\delta_x = -\delta_{\theta})$, then the bias contribution of $X$ and $\theta$ would perfectly cancel without matching, and  matching on $X$ would result in non-zero bias.
Lastly, the bias of matching on both $X$ and pre-treatment outcome reduces to the same bias expression when matching only on pre-treatment outcome without covariates in Section~\ref{subsection:nocov}. This is not a coincidence: matching on $X$ removes any effect $X$ had on the response and the situation simplifies to the no covariate case.

\subsection{General Correlated Case}
\label{subsection:full_general_case}
Given the results explained in Section~\ref{subsection:nocov}-~\ref{subsection:generalcase_uncorrelated}, we return to understanding the results presented in Theorem~\ref{theorem:mainresults}, where there exists a correlation between $\theta$ and $X$ ($\rho \neq 0$).
Theorem~\ref{theorem:mainresults} has two additional complications as compared to  Corollary~\ref{corollary:zerocorrelation}.
The first is regarding the reliability $r_{\theta \mid x}$, i.e., the proportion of variance of the pre-treatment outcome explained by $\theta$ after accounting for $X$.
Because accounting for $X$ via conditioning gives information about $\theta$ due to the non-zero correlation, it consequently reduces the total conditional variance of the pre-treatment outcome, $Var(Y_{i, 0} \mid X_i, Z_i = 0)$.
The consequence of this ``extra'' information is captured by the $(1 - \rho^2)$ term in the expression for $r_{\theta \mid x}$.
This means that as $X$ is increasingly correlated with $\theta$, the conditional reliability decreases, suggesting an increase in the cost of matching additionally on lagged outcomes.

The second complication is a countervailing force where the more correlated $X$ and $\theta$ are, the smaller (in general) the post-matched imbalance in $\theta$ because as we match on $X$ we are also likely bringing $\delta_\theta$ towards zero.
This is not guaranteed, however, as we can illustrate with the maximal $\rho = 1$ case.
As a reminder, $\rho = 1$ does not imply that $\theta$ and $X$ are perfectly correlated but instead implies that $\theta$ and $X$ are perfectly correlated \textit{within} each of the treatment and control groups.
Therefore, even if $\rho = 1$, matching on the same value of $X$ will not necessarily lead to the same value of $\theta$ for both the treatment and control group if the group averages of $\theta$ differ.
Consequently, the usefulness of the extra information we gain from matching on a correlated $X$ depends not only on the strength of the correlation $\rho$ but also on how much the imbalance of $X$ ($\delta_x$) is similar to the imbalance of $\theta$ ($\delta_{\theta}$) (after scale adjustment with a $\frac{\sigma_{\theta}}{\sigma_x}$ term to put $X$ on the scale of $\theta$).
In other words, if the imbalances $\delta_\theta$ and $\delta_X$ are the same (once $\delta_X$ is rescaled by $\frac{\sigma_{\theta}}{\sigma_x}$) then the bias from matching on a perfectly correlated $X$ would be zero, as stated by Theorem~\ref{theorem:mainresults}.


Overall, the correlation term can decrease the conditional reliability, and thus increase the bias as discussed initially, but it also can allow matching on $X$ to indirectly match on $\theta$.
We generally find that the more correlated $X$ and $\theta$, the less benefit of matching additionally on pre-treatment outcomes.
In the extreme, Theorem~\ref{theorem:mainresults} shows that the bias will always increase if $\rho \delta_x$ is in the opposite direction of $\delta_{\theta}$, i.e., $sign(\rho \delta_x) \neq sign(\delta_{\theta})$, where $sign(.)$ is the sign function.
In such a case, the correlation actually harms our matching DiD estimator and fails to recover $\theta$ by pushing it further (in the opposite direction) from the desired value.
Additionally, even if the confounding effects were in the same direction as the correlation, if $\delta_x \gg \delta_{\theta}$ then $X$ can ``over-correct'' the confounding effect of $\theta$, leading to possibly greater bias.
We formalize the necessary conditions that removes these ``edge cases'' in Section~\ref{subsection:rule_matching_X}, where we give a rule-of-thumb guide for determining whether to additionally match on lagged outcome.

\subsection{Takeaways}
\label{subsection:takeaways}
There is unfortunately no simple condition such as Lemma~\ref{lemma:suffcond_match} that guarantees that it is always better to match on $X$ or both $X$ and the pre-treatment outcome under a more complicated general setting.
We instead visually illustrate our findings by plotting the bias of the na\"ive DiD estimator and the two matching estimators across three panels in Figure~\ref{fig:bias_complications}.
To produce the plots, we initially fix $\beta_{\theta, 1} = \beta_{x, 1} = 1.5$, $\beta_{\theta, 0} = \beta_{x, 0} = \delta_{\theta} = \delta_x = \sigma_{\theta} = \sigma_x = 1$, $r_{\theta \mid x} = 0.5$, and $\rho = 0$.
We then vary one of the fixed parameters for each of the three panels in Figure~\ref{fig:bias_complications}. For example, in the left panel we vary $\delta_x$. Although the results of the plot are sensitive to the initial parameter values (which were chosen based off our application in Section~\ref{section:application}), we show the plots to visually summarize the main findings and show the complex ways the biases can differ.
\begin{figure}[t]
\begin{center}
\includegraphics[width=\textwidth]{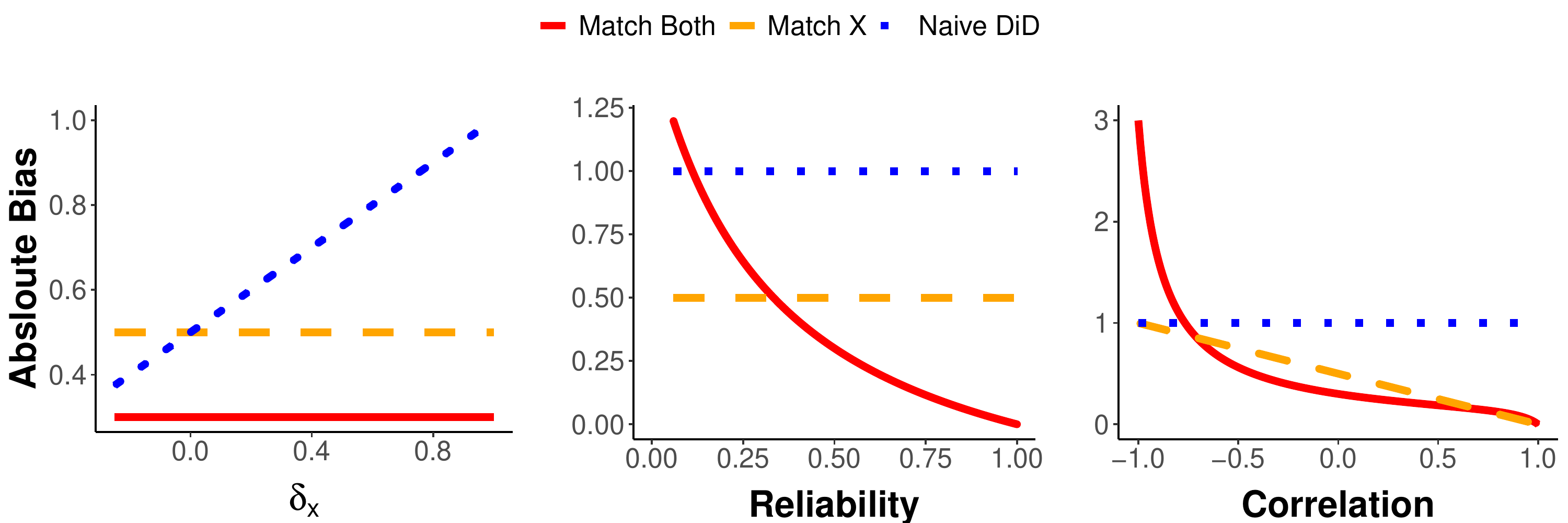}
\caption{Bias of the na\"ive DiD estimator (dotted line), matching on $X$ DiD estimator (dashed line), and matching on both $X$ and the pre-treatment outcome (solid line) according to the results in Theorem~\ref{theorem:mainresults}. We initially fix $\beta_{\theta, 1} = \beta_{x, 1} = 1.5$, $\beta_{\theta, 0} = \beta_{x, 0} = \delta_{\theta} = \delta_x = \sigma_{\theta} = \sigma_x = 1$, $r_{\theta \mid x} = 0.5$, and $\rho = 0$. We then vary one of the fixed parameters for each of the three panels in Figure~\ref{fig:bias_complications}.  }
\label{fig:bias_complications}
\end{center}
\end{figure}

The left panel in Figure~\ref{fig:bias_complications} shows that matching on $X$ generally reduces the bias relative to doing nothing since the bias of the na\"ive DiD estimator (dotted line) increases as the imbalance in $X$ increases, while both matching estimators are unaffected by this imbalance (in this uncorrelated case).
This confirms the guideline on why it is generally better to match on observed covariates.
However, as mentioned in Section~\ref{subsection:nocov}, matching on $X$ is not uniformly better than the na\"ive DiD: when $\delta_x$ is negative, the bias of $X$ cancels with the positive bias of $\theta$ to allow the na\"ive DiD (dotted line) to have a lower bias than that of the matching estimator that matches on $X$ only (dashed line). 

The second panel shows that as the reliability grows, it is better to additionally match on pre-treatment outcome as illustrated by the solid bias curve decreasing to zero.

The third panel shows how matching on X improves as the correlation of $X$ and $\theta$ increases.
As expected, the bias of both matching estimators is exactly zero when the correlation is one since in this scenario the imbalance of $\theta, X$ are the same, allowing matching on $X$ to perfectly recover $\theta$.
When the correlation is negative, however, the bias increases for both matching estimators since matching on $X$ actually pushes $\theta$ in the opposite direction, as explained in Section~\ref{subsection:full_general_case}.

Overall, given the above, we find that, regardless of whether parallel trends hold or not, matching on observable covariates $X$ is generally always advised for the following reasons:
\begin{enumerate}
    \item Matching on $X$ directly gets rid of all biases and confounding that results directly from $X$ (i.e., it removes the $\Delta_x\delta_x$ term).
    \item Matching on $X$ may further help recover the unobserved confounder $\theta$ when they are correlated. This shows up in the bias by offsetting the effect of $\delta_{\theta}$ by $\rho \frac{\sigma_{\theta}}{\sigma_x}\delta_x$.
\end{enumerate}
That being said, matching on $X$ does not always reduce the bias since pre-existing bias from $\theta$ and $X$ may have been cancelling each other out.
Furthermore, a non-zero correlation between $\theta$ and $X$ does not always help reduce the bias  since the correlation $\rho$ and the imbalances, $\delta_x, \delta_{\theta}$, may act in opposite directions or matching on $X$ may ``over-correct'' the bias.

Matching on the pre-treatment outcome $Y_{0}$ has a more complex trade-off that depends on the following two quantities:
\begin{enumerate}
    \item The reliability of the pre-treatment outcome for latent variation beyond what is explained by $X$. If the conditional reliability is high it will generally favor matching on $Y_{0}$ since it helps us recover the latent confounder $\theta$.
    \item The amount of breakage in parallel trends of the unobserved confounder $\theta$. Because matching on the pre-treatment outcome erodes the second ``difference'' in the DiD estimator, matching on the pre-treatment outcome is only favored if parallel trends does not hold. 
\end{enumerate}
This means that the decision to additionally match on pre-treatment $Y_{0}$ depends on the size of these quantities.
Given these tensions, we provide some guidance on how to decide when to match on what variables in Section~\ref{section:sensitivity_analysis}, after extending the above to multiple pre-treatment time periods.

\section{Multiple Time Periods and Multivariate Confounders}
\label{section:generalization}
So far we have focused on the setting when there is only one pre-treatment time point and univariate confounders $X, \theta$.
It is common, however, to have outcome measures for more than one pre-treatment time point and multiple covariates.
 For example, in our motivating empirical examples introduced in Section~\ref{section:empirical_example}, \citeauthor{principal_turnover} matched on seven observed covariates and six pre-treatment outcomes.
In cases such as these, we may want to match on all available pre-treatment outcomes and multiple covariates prior to the DiD analysis. 

We generalize our linear structural equation model introduced in Section~\ref{subsection:LSEM} to account for $T$ pre-treatment outcomes and multivariate confounders with the following model,

\begin{equation}
\begin{aligned}
Y_{i, t}(0) &= \beta_{0,t} + \vec{\beta}_{\theta, t}^\top \boldsymbol{\theta}_i + \vec{\beta}_{x, t}^\top \mathbf{X}_i + \epsilon_{i ,t}, \\
  Y_{i,t}(1) &= Y_{i,t}(0) +  \tau_{i} \mathbf{1}(t = T),\\
   Y_{i,t} &= Z_i Y_{i,t}(1)  + (1- Z_i) Y_{i,t}(0) ,
\end{aligned}
\label{eq:multivariate_X_multiple_T} 
\end{equation}
We index $0 \leq t \leq T - 1$ for the $T$ periods prior to post-treatment period $t = T$. Additionally, $\vec{\beta}_{x,t}$ is a $p$-dimensional column of slopes at time $t$ for $\mathbf{X}_i = (X_{i ,1}, \dots, X_{i, p})$, where $X_{i, j}$ represents individual $i$'s value of covariate $j$ for up to $p$ observed covariates. We define $\vec{\beta}_{\theta, t}, \boldsymbol{\theta}_i$ similarly for a $q$-dimensional number of latent confounders. We keep the data generating model for our confounders similar to Equation~\eqref{eq:simplethetax} but allow arbitrary covariances across the multivariate confounders,
\begin{equation}
\begin{pmatrix} \boldsymbol{\theta}_i \\ \mathbf{X}_i \end{pmatrix} \mid Z_i = z \sim N\left(\begin{pmatrix} \vec{\mu}_{\theta, z} \\ \vec{\mu}_{x, z}   \end{pmatrix} ,  \begin{pmatrix} \Sigma_{\theta \theta} & \Sigma_{\theta X} \\ \Sigma_{X \theta} & \Sigma_{XX}    \end{pmatrix}\right) \quad z = 0, 1  ,
\label{eq:generalthetax}
\end{equation}
where $\Sigma_{XX} \in \mathbb{R}^{p \times p}$ and $\Sigma_{\theta \theta } \in \mathbb{R}^{q \times q}$ are the covariance matrix for $\mathbf{X}_i$ and $\boldsymbol{\theta}_i$ within each treatment and control group respectively, and  $\Sigma_{\theta X} \in \mathbb{R}^{q \times p}$ is the covariance matrix between $\boldsymbol{\theta}_i$ and $\mathbf{X}_i$. 
As before, each individual's $(\boldsymbol{\theta}_i, \mathbf{X}_i)$ is drawn independently and identically according to Equation~\eqref{eq:generalthetax}, where $\vec{\mu}_{\theta,z}$ is a $q$-dimensional column vector for the average of $\boldsymbol{\theta}$ in treatment group $z$ and $\vec{\mu}_{x,z}$ is defined similarly. Lastly, we denote $\mathbf{Y}_i^{T} = (Y_{i, 0}, \dots, Y_{i, T-1})$ to represent the vector of $T$ pre-treatment period outcomes. 

It is common to use a linear regression framework to estimate $\hat\tau$ when there are multiple time periods \citep{DiD_multiple_timeperiods1, DiD_multiple_timeperiods3, DiD_multiple_applied1, DiD_multiple_applied2}.
In particular, the following two-way fixed effects linear regression is especially prevalent:
\begin{equation}
\label{eq:gDiD_regression}
    Y_{i, t} = \alpha_t + \gamma_i + \beta_Z Z_i + \beta_W W_t + \beta_{ZW} Z_i W_t + \epsilon_{i, t},
\end{equation}
where $W_t$ is an indicator that is 1 if in the post-treatment period ($t = T$) and zero otherwise, $\alpha_t, \gamma_i$ are fixed effects for time and unit, respectively, and $\epsilon_{i, t}$ is the error.
The estimate $\hat \beta_{ZW}$ for the interaction term would then be taken as the DiD estimate of the ATT.

It is well known that in a balanced panel data with multiple pre-treatment outcomes, $\hat\beta_{ZW}$ is equivalent to the estimate one would obtain using a classical two-period DiD using the \textit{average} of all the $T$ pre-treatment outcomes, $\overline{Y}_{i, 0:(T-1)} = \sum_{t = 0}^{T - 1} Y_{i, t}/T$, for the pre-treatment measure \citep{DiD_multiple_timeperiods1, DiD_multiple_timeperiods3}.
We will use this balanced property to derive the costs and benefits of matching before estimating impacts with Model~\ref{eq:gDiD_regression}.

Given our model and the balanced panel setting, the expected values of the generalized DiD estimators we would obtain, if using the above two-way fixed effects model for estimation, is then: 
\begin{equation*}
\begin{aligned}
 E\big[\hat{\tau}_{gDiD}\big] &= \{E(Y_{i, T}\mid Z_{i} = 1) -  E(Y_{i, T}\mid Z_{i} = 0)\} \\
     & \qquad  - \{ E(\overline{Y}_{i, 0:(T-1)} \mid Z_{i} = 1) -  E(\overline{Y}_{i, 0:(T-1)} \mid Z_{i} = 0)\} ,\\
E\big[\hat{\tau}_{gDiD}^{\mathbf{X}}\big] &= E(Y_{i, T}\mid Z_{i} = 1) -  E(\overline{Y}_{i, 0:(T-1)} \mid Z_{i} = 1) \\
    & \qquad - (E_{\mathbf{x} \mid Z_i = 1}[E(Y_{i, T}\mid Z_{i} = 0, \mathbf{X}_{i} = \mathbf{x})] -  E_{\mathbf{x} \mid Z_i = 1}[E( \overline{Y}_{i, 0:(T-1)} \mid Z_{i} = 0, \mathbf{X}_{i} = \mathbf{x})]) ,\\
E\big[\hat{\tau}_{gDiD}^{\mathbf{X}, \mathbf{Y}^{T}}\big] &= E(Y_{i, T}\mid Z_{i} = 1) -  E_{(\mathbf{x}, \mathbf{y^T}) \mid Z_i = 1}[E(Y_{i, T}\mid Z_{i} = 0, \mathbf{X}_{i} = \mathbf{x}, \mathbf{Y_i^T} = \mathbf{y^t} )] ,
\end{aligned}
\end{equation*}
where $\mathbf{Y_i^T} = (Y_{i, 0}, Y_{i, 1}, \dots, Y_{i, T-1})$ represents the collection of all pre-treatment outcomes.
Under our model, 

$$\overline{Y}_{i, 0:(T-1)} = \frac{\sum_{t = 0}^{T - 1} \beta_{0, t}}{T} + \frac{\sum_{t = 0}^{T - 1} \beta_{\theta, t}^\top \boldsymbol{\theta_i} }{T} + \frac{\sum_{t = 0}^{T - 1} \beta_{x, t}^\top \mathbf{X_i} }{T} + \frac{\sum_{t = 0}^{T - 1} \epsilon_{i, t} }{T} . $$
The biases of our estimators are then as stated in the following theorem:

\begin{theorem}[Bias under multiple time periods]
If $(Z_i, \mathbf{X}_i, \boldsymbol{\theta}_i, Y_{i,t})$ are independently and identically drawn from the data generating process in Equations~\ref{eq:multivariate_X_multiple_T} and \ref{eq:generalthetax}, then the bias of our estimators are the following:
\begin{align*}
E\big[\hat{\tau}_{gDiD}\big] - \tau &= \vec{\Delta}_{\theta}^\top \vec{\delta}_{\theta} + \vec{\Delta}_x^\top\vec{\delta}_x ,\\
E\big[\hat{\tau}_{gDiD}^{\mathbf{X}}\big]  - \tau&=   \vec{\Delta}_{\theta}^\top [\vec{\delta}_{\theta} -  \Sigma_{\theta X} \Sigma_{XX}^{-1} \vec{\delta}_x] ,\\
E\big[\hat{\tau}_{gDiD}^{\mathbf{X}, \mathbf{Y}^{T}}\big] - \tau&= \vec{\beta}_{\theta, T}^\top \Bigg[\vec{\delta}_{\theta} - \begin{pmatrix}
\Sigma_{\theta X} & \Sigma_{\theta Y_{T}}
\end{pmatrix} \begin{pmatrix}
\Sigma_{XX} & \Sigma_{X Y_{T}} \\ \Sigma_{Y_{T} X} & \Sigma_{Y_{T} Y_{T}}
\end{pmatrix}^{-1} \begin{pmatrix}
\vec{\delta}_x \\ \vec{\beta}_{\theta, 0 }^\top \vec{\delta}_{\theta} + \vec{\beta}_{x, 0}^\top \vec{\delta}_x \\ \vdots \\ \vec{\beta}_{\theta, T-1}^\top \vec{\delta}_{\theta} + \vec{\beta}_{x, T-1}^\top \vec{\delta}_x
\end{pmatrix} \Bigg] ,
\end{align*}
with vector analogs to baseline imbalance and breakage of parallel trends of
\begin{align*}
\vec{\delta}_{\theta} & := \vec{\mu}_{\theta, 1} - \vec{\mu}_{\theta, 0}  \mbox{ and, }\\
\vec{\Delta}_{\theta} & := \vec{\beta}_{\theta, T} - \frac{1}{T}\sum_{t = 0}^{T - 1} \beta_{\theta, t}  ,
\end{align*}
with $\vec{\delta}_x, \vec{\Delta}_x$ defined similarly, and $\Sigma_{Y_{T} Y_{T}} \in \mathbb{R}^{T \times T}$ the covariance matrix of all pre-treatment periods $Y_{0}, \dots, Y_{T- 1}$, with $\Sigma_{\theta Y_{T}} \in \mathbb{R}^{q \times T}, \Sigma_{X Y_{T}} \in \mathbb{R}^{p \times T}$ defined similarly. 
\label{theorem:most_general}
\end{theorem}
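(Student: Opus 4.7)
The theorem is the multivariate, multi-period generalization of Theorem~\ref{theorem:mainresults} and is obtained by the same template with vectors and matrices in place of scalars. The one workhorse identity is the multivariate Gaussian conditioning formula: for jointly Gaussian $(U, V)$, $E(U \mid V = v) = E(U) + \Sigma_{UV}\Sigma_{VV}^{-1}(v - E(V))$. Under Equation~\eqref{eq:multivariate_X_multiple_T}, $E(Y_{i,t} \mid Z_i = z) = \beta_{0,t} + \vec{\beta}_{\theta,t}^\top \vec{\mu}_{\theta,z} + \vec{\beta}_{x,t}^\top \vec{\mu}_{x,z}$, so differencing across $z$ kills the intercept and yields $\vec{\beta}_{\theta,t}^\top \vec{\delta}_\theta + \vec{\beta}_{x,t}^\top \vec{\delta}_x$; averaging this over $t = 0, \ldots, T-1$ and subtracting from the $t = T$ version, after accounting for the ATT, immediately gives $E[\hat{\tau}_{gDiD}] - \tau = \vec{\Delta}_\theta^\top \vec{\delta}_\theta + \vec{\Delta}_x^\top \vec{\delta}_x$ with $\vec{\Delta}_\theta, \vec{\Delta}_x$ as defined.

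\textbf{Matching on $\mathbf{X}$.} Gaussian conditioning gives $E(\boldsymbol{\theta}_i \mid Z_i = 0, \mathbf{X}_i = \mathbf{x}) = \vec{\mu}_{\theta,0} + \Sigma_{\theta X}\Sigma_{XX}^{-1}(\mathbf{x} - \vec{\mu}_{x,0})$, so $E(Y_{i,t} \mid Z_i = 0, \mathbf{X}_i = \mathbf{x}) = \beta_{0,t} + \vec{\beta}_{\theta,t}^\top\vec{\mu}_{\theta,0} + \vec{\beta}_{\theta,t}^\top\Sigma_{\theta X}\Sigma_{XX}^{-1}(\mathbf{x} - \vec{\mu}_{x,0}) + \vec{\beta}_{x,t}^\top\mathbf{x}$. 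Averaging over $\mathbf{X}_i \mid Z_i = 1$ replaces $\mathbf{x}$ by $\vec{\mu}_{x,1}$ and produces $\beta_{0,t} + \vec{\beta}_{\theta,t}^\top[\vec{\mu}_{\theta,0} + \Sigma_{\theta X}\Sigma_{XX}^{-1}\vec{\delta}_x] + \vec{\beta}_{x,t}^\top\vec{\mu}_{x,1}$. Subtracting this from $E(Y_{i,t} \mid Z_i = 1)$ yields $\vec{\beta}_{\theta,t}^\top[\vec{\delta}_\theta - \Sigma_{\theta X}\Sigma_{XX}^{-1}\vec{\delta}_x]$, because the $\mathbf{x}$-linear and $\vec{\beta}_{x,t}$ pieces now agree across treated and matched controls. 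The usual DiD aggregation over $t$ collapses the $\vec{\beta}_{\theta,t}$ coefficient into $\vec{\Delta}_\theta^\top$, yielding the stated bias.

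\textbf{Matching on $\mathbf{X}$ and $\mathbf{Y}^T$.} The same simplification as in Equation~\eqref{eq:matchingDiD_estimators} carries over: because $\overline{Y}_{i,0:(T-1)}$ is a deterministic function of $\mathbf{Y}_i^T$, perfect matching on $\mathbf{Y}^T$ makes the second difference vanish, and the expected estimator reduces to $E(Y_{i,T} \mid Z_i = 1) - E_{(\mathbf{x},\mathbf{y}^T) \mid Z_i = 1}\bigl[E(Y_{i,T} \mid Z_i = 0, \mathbf{X}_i = \mathbf{x}, \mathbf{Y}_i^T = \mathbf{y}^T)\bigr]$. Within $Z_i = 0$, the stacked vector $(\boldsymbol{\theta}_i, \mathbf{X}_i, \mathbf{Y}_i^T)$ is jointly Gaussian with $\boldsymbol{\theta}_i$-to-rest cross-covariance $(\Sigma_{\theta X}, \Sigma_{\theta Y_T})$ and with the stated $(p+T) \times (p+T)$ conditioning block. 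Plugging the resulting affine expression for $E(\boldsymbol{\theta}_i \mid Z_i = 0, \mathbf{X}_i, \mathbf{Y}_i^T)$ into $E(Y_{i,T} \mid Z_i = 0, \mathbf{X}_i, \mathbf{Y}_i^T) = \beta_{0,T} + \vec{\beta}_{\theta,T}^\top E(\boldsymbol{\theta}_i \mid \cdot) + \vec{\beta}_{x,T}^\top \mathbf{X}_i$ and taking the outer expectation over $Z_i = 1$ turns the deviation block $(\mathbf{X}_i - \vec{\mu}_{x,0}, \mathbf{Y}_i^T - E(\mathbf{Y}_i^T \mid Z_i = 0))$ into the column $(\vec{\delta}_x, \vec{\beta}_{\theta,0}^\top \vec{\delta}_\theta + \vec{\beta}_{x,0}^\top \vec{\delta}_x, \ldots, \vec{\beta}_{\theta,T-1}^\top \vec{\delta}_\theta + \vec{\beta}_{x,T-1}^\top \vec{\delta}_x)^\top$ via the first-paragraph identity. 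Subtracting $\tau$ gives the claimed expression.

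\textbf{Main obstacle.} The only delicate step is the third estimator: one must (a) justify that matching on the joint $(\mathbf{X}, \mathbf{Y}^T)$ still collapses the second difference once the matching target is multivariate and spans multiple pre-periods, and (b) resist expanding the block-matrix inverse, keeping the result in the compact covariance-block form that the statement targets. The first two cases are essentially verbatim multivariate translations of the corresponding steps in the proof of Theorem~\ref{theorem:mainresults}.
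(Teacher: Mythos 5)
Your proposal is correct and follows essentially the same route as the paper: the first two biases by direct computation and Gaussian conditioning of $\boldsymbol{\theta}$ on $\mathbf{X}$, and the third by writing the joint normal of $(\boldsymbol{\theta}_i, \mathbf{X}_i, \mathbf{Y}_i^T)$ given $Z_i = 0$ and applying the conditional-expectation formula (the paper's Lemma~\ref{lemma:multivariate}), leaving the block inverse unexpanded. The paper's own proof is even terser, simply asserting these steps, so your write-up fills in the same argument with slightly more detail.
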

\noindent The proof is provided in Appendix~\ref{Appendix:generalization_proofs}. The interpretation for the na\"ive DiD and the DiD estimator that matches only on $\mathbf{X}$ remains similar to that already discussed in Section~\ref{section:main_results}. For example, the na\"ive DiD is biased again by the amount the multivariate confounders $\boldsymbol{\theta}, \mathbf{X}$ are breaking the parallel trends. Matching on $\mathbf{X}$ similarly reduces bias contribution from $\mathbf{X}$ and offsets the confounding effect of the latent confounder via the correlation in $\Sigma_{\theta X}$ and the imbalance $\vec{\delta}_x$. However, it is unclear from the expression above how matching on the $T$ additional pre-treatment outcomes reduces/harms the bias. To explore this further, we apply Theorem~\ref{theorem:most_general} to a simpler case that allows us to simplify the bias expression.

\subsection{No Covariate Case}
\label{subsection:multiple_time_nocov}
To understand how matching on multiple pre-treatment outcome may impact bias, we first assume a univariate unobserved confounder $\theta$, i.e., we let $q = 1$ and, similar to Section~\ref{subsection:nocov}, assume no observed covariates.
Unless one has strong subject-matter knowledge regarding missing confounders, one loses little generality by representing all latent confounding with a single confounder $\theta$ with arbitrary breakage in parallel trends ($\Delta_{\theta}$) and imbalance ($\delta_{\theta}$).

Since  we have no observed covariates $\mathbf{X}$, we state the simplified bias result only for the matching estimator that matches additionally on the $T$ pre-treatment outcome. 

\begin{theorem}[Bias]
If $(Z_i, \theta_i, Y_{i,t})$ are independently and identically drawn from the data generating process in Equations~\ref{eq:multivariate_X_multiple_T} and \ref{eq:generalthetax}  with $q = 1$ and $p = 0$ (no observed covariates), then the bias of matching on $T$ pre-period outcomes is then
\begin{align*}
E\big[\hat{\tau}_{gDiD}^{\mathbf{X}, \mathbf{Y}^{T}}\big]  - \tau &= \beta_{\theta, T} \delta_{\theta} (1 - r_{\theta}^T),
\end{align*}
where 
\begin{align*}
    r_{\theta}^T = \frac{T \bar{\beta}_{\theta, \text{pre}}^2 \sigma_{\theta}^2}{T \bar{\beta}_{\theta, \text{pre}}^2 \sigma_{\theta}^2 + \sigma_{E}^2} , \quad  \bar{\beta}_{\theta, \text{pre}}^2 := \frac{1}{T} \sum_{t = 0}^{T-1} \beta_{\theta, t}^2.
\end{align*}
\label{theorem:bias_multiple_T_noX}
\end{theorem}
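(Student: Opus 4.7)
\bigskip

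\textbf{Proof plan for Theorem~\ref{theorem:bias_multiple_T_noX}.}

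The plan is to specialize the general bias formula in Theorem~\ref{theorem:most_general} to the no-covariate, univariate-$\theta$ case and then exploit the rank-one-plus-diagonal structure of $\Sigma_{Y_T Y_T}$ via Sherman--Morrison. When $p=0$, $q=1$, Theorem~\ref{theorem:most_general} collapses the bias of the pre-outcome-matched estimator to
$$
E\big[\hat{\tau}_{gDiD}^{\mathbf{X}, \mathbf{Y}^{T}}\big]  - \tau = \beta_{\theta, T}\Big[\delta_{\theta} - \Sigma_{\theta Y_T}\, \Sigma_{Y_T Y_T}^{-1}\, \vec{v}\Big],
$$
where $\vec{v} = \delta_{\theta}\,(\beta_{\theta,0}, \ldots, \beta_{\theta,T-1})^{\top} = \delta_{\theta}\,\vec{b}$, with $\vec{b}$ denoting the vector of pre-period slopes. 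So the task reduces to evaluating the scalar $\Sigma_{\theta Y_T}\Sigma_{Y_T Y_T}^{-1}\vec{b}$.

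First I would write out the two covariance objects directly from the structural model in Equation~\eqref{eq:multivariate_X_multiple_T}. Since $Y_{i,t}(0) = \beta_{0,t} + \beta_{\theta,t}\theta_i + \epsilon_{i,t}$ in the control group with $\epsilon_{i,t}$ independent of $\theta_i$ and independent across $t$, I get $\mathrm{Cov}(\theta_i, Y_{i,t}) = \beta_{\theta,t}\sigma_\theta^2$, so $\Sigma_{\theta Y_T} = \sigma_\theta^2\,\vec{b}^{\top}$. Similarly, $\mathrm{Cov}(Y_{i,t}, Y_{i,s}) = \beta_{\theta,t}\beta_{\theta,s}\sigma_\theta^2 + \sigma_E^2\mathbf{1}(t=s)$, giving the key rank-one-plus-scaled-identity form
$$
\Sigma_{Y_T Y_T} = \sigma_\theta^2\, \vec{b}\,\vec{b}^{\top} + \sigma_E^2 I_T .
$$

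Next I would apply the Sherman--Morrison identity to obtain
$$
\Sigma_{Y_T Y_T}^{-1} = \frac{1}{\sigma_E^2}\left(I_T - \frac{\sigma_\theta^2\,\vec{b}\vec{b}^{\top}}{\sigma_E^2 + \sigma_\theta^2\|\vec{b}\|^2}\right),
$$
and then compute the quadratic form $\vec{b}^{\top}\Sigma_{Y_T Y_T}^{-1}\vec{b}$. A short calculation collapses this to $\|\vec{b}\|^2/(\sigma_E^2 + \sigma_\theta^2 \|\vec{b}\|^2)$. Recognizing $\|\vec{b}\|^2 = \sum_{t=0}^{T-1}\beta_{\theta,t}^2 = T\bar{\beta}_{\theta,\text{pre}}^2$, this gives
$$
\Sigma_{\theta Y_T}\Sigma_{Y_T Y_T}^{-1}\vec{b} = \sigma_\theta^2\,\vec{b}^{\top}\Sigma_{Y_T Y_T}^{-1}\vec{b} = \frac{T \bar{\beta}_{\theta,\text{pre}}^2 \sigma_\theta^2}{T \bar{\beta}_{\theta,\text{pre}}^2 \sigma_\theta^2 + \sigma_E^2} = r_\theta^T,
$$
and substituting back yields $E[\hat\tau_{gDiD}^{\mathbf{X},\mathbf{Y}^T}] - \tau = \beta_{\theta,T}\,\delta_\theta\,(1 - r_\theta^T)$, as claimed.

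I do not anticipate a serious obstacle here because the rank-one structure makes Sherman--Morrison the natural tool and everything collapses cleanly. The only mild bookkeeping concern is making sure I am invoking Theorem~\ref{theorem:most_general} in the $p=0$ case correctly: the matching estimator is indexed by $(\mathbf{X},\mathbf{Y}^T)$ in the theorem, but with no covariates present the $\Sigma_{\theta X}, \Sigma_{XX}, \Sigma_{XY_T}$ blocks and the $\vec{\delta}_x$ entry vanish, so the $q\times T$ block $\Sigma_{\theta Y_T}\Sigma_{Y_T Y_T}^{-1}$ and the right-hand vector with entries $\beta_{\theta,t}\delta_\theta$ are what remain. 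One sanity check worth noting explicitly is that setting $T=1$ recovers the single-pre-period bias $\beta_{\theta,1}\delta_\theta(1 - r_\theta)$ from Corollary~\ref{corollary:nocov_results}, which it does since $\bar\beta_{\theta,\text{pre}}^2 = \beta_{\theta,0}^2$ in that case.
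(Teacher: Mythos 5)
Your proposal is correct and follows essentially the same route as the paper's proof: specialize Theorem~\ref{theorem:most_general} to $p=0$, $q=1$, recognize $\Sigma_{Y_T Y_T}$ as a rank-one update of $\sigma_E^2 I_T$, invert via Sherman--Morrison, and collapse the quadratic form to $T\bar{\beta}_{\theta,\text{pre}}^2\sigma_\theta^2/(T\bar{\beta}_{\theta,\text{pre}}^2\sigma_\theta^2+\sigma_E^2)$. The only cosmetic difference is that the paper absorbs $\sigma_\theta$ into the rank-one vector while you keep $\sigma_\theta^2$ as an external scalar; the algebra is otherwise identical, and your $T=1$ sanity check is a nice addition.
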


With a slight abuse of notation, we define $\bar{\beta}_{\theta, \text{pre}}^2$ as the average of the squared coefficients as opposed to the square of the average. The proof is provided in Appendix~\ref{Appendix:generalization_proofs}. The result is similar to Corollary~\ref{corollary:zerocorrelation} (and exactly equivalent when $T = 1$), except our new ``reliability'' term, $r_{\theta}^T$, increases as a function of $T$, assuming the average $\bar{\beta}_{\theta,\text{pre}}$ does not shrink as $T$ grows.
Therefore, the more pre-treatment period outcomes we match on, the more we decrease bias resulting from our latent confounder $\theta$

To illustrate how much our reliability increases, suppose $r_{\theta}^{T = 1} = 0.5$.
Then if $T = 2$, i.e., we have one more additional pre-treatment period, $r_{\theta}^{T = 2} \approx 0.67$ (assuming the average $\bar\beta^2_{\theta,pre}$ remains the same).
If $T = 3$, then $r_{\theta}^{T = 3} = 0.75$, which is a 50\% increase of reliability with only two additional time periods. 
In practice, when the effects of $\theta$ vary in the pre-treatment periods, using more pre-period measures will only help if the average $\beta_\theta$ coefficient does not shrink so much as to offset the gain from $T$ in the reliability expression.

Matching on multiple pre-treatment time points has close ties to synthetic controls, where one would construct a synthetic comparison unit as a weighted average of ``donor'' (control) units such that the synthetic unit closely matches the measured characteristics (in particular pre-treatment outcomes) of a target treated unit.
In particular, the regression to the mean phenomenon shown in Theorem~\ref{theorem:mainresults} and ~\ref{theorem:bias_multiple_T_noX} is also present in synthetic controls with multiple pre-period outcomes \citep{SC1, SC2, Synth_RTM}.
Our findings would suggest, then, that one should attend to reliability of the pre-treatment outcome as a measure of latent characteristics in the synthetic control context as well.
Further, it also shows that when the number of pre-period outcomes in synthetic controls are few, the reliability will be lower and thus the bias induced by the weighting of units larger.
Of course, as \citet{damour_overlap} shows, perfect matching becomes impractical when the number of periods grows due to the curse of dimensionality.
Further adjustment \citep{benm:fell:roth:19} may avoid some of these difficulties.
We leave the exploration of how to estimate reliability here, as well as this tension, to future work.

We can also extend Lemma~\ref{lemma:suffcond_match}, which provided the changeover point between matching and not matching in the no-covariate case, to this more general setting of our multiple time-point:
\begin{lemma}[Sufficient and necessary condition to match on $T$ pre-treatment outcomes]
Suppose a data matrix of $D_i = (Y_{i, 0}, \dots, Y_{i, T}, Z_i)$ for $i = 1,2, \dots, n$ follows the same setting as that listed in Theorem~\ref{theorem:bias_multiple_T_noX}. 
Then the absolute bias of matching on the $T$ pre-treatment outcomes is smaller or equal to the absolute bias of the na\"ive DiD, i.e., $|E\big[\hat{\tau}_{gDiD}^{\mathbf{X}, \mathbf{Y}^{T}}\big] - \tau| \leq |E\big[\hat{\tau}_{gDiD} - \tau|$, if and only if
\begin{equation}
r_{\theta}^T > 1 - \left|1 - \frac{ \bar{\beta}_{\theta, \text{pre}}}{\beta_{\theta, T}}\right|, \label{eq:guideline_two}	
\end{equation}
where $r_{\theta}^T$ is as defined in Theorem~\ref{theorem:bias_multiple_T_noX} and 
$$ \bar{\beta}_{\theta, \text{pre}} = \frac{1}{T}\sum_{t = 0}^{T-1} \beta_{\theta, t}.$$
In other words, when Equation~\eqref{eq:guideline_two} holds, matching on $\mathbf{Y^T}$ has less bias than not matching.
\label{lemma:general_suffcond_match}
\end{lemma}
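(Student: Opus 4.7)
The plan is to reduce the claimed inequality to a direct algebraic comparison of the two bias expressions already established in Theorem~\ref{theorem:bias_multiple_T_noX} and Theorem~\ref{theorem:most_general}. First, I would invoke Theorem~\ref{theorem:bias_multiple_T_noX} to write
$$E\big[\hat{\tau}_{gDiD}^{\mathbf{X}, \mathbf{Y}^{T}}\big] - \tau = \beta_{\theta, T}\,\delta_{\theta}\,(1 - r_{\theta}^T).$$
For the na\"ive DiD bias, I would specialize Theorem~\ref{theorem:most_general} to the setting $q = 1$, $p = 0$ (no observed covariates, univariate latent confounder), which eliminates the $\Sigma_{\theta X}\Sigma_{XX}^{-1}\vec{\delta}_x$ correction and collapses the vector inner product, yielding
$$E\big[\hat{\tau}_{gDiD}\big] - \tau = \Delta_{\theta}\,\delta_{\theta} = \bigl(\beta_{\theta,T} - \bar{\beta}_{\theta,\text{pre}}\bigr)\delta_{\theta},$$
using the definition $\vec{\Delta}_\theta = \vec{\beta}_{\theta,T} - \tfrac{1}{T}\sum_{t=0}^{T-1}\beta_{\theta,t}$ given in Theorem~\ref{theorem:most_general}.

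Next, I would take absolute values on both sides of the desired inequality. Since the reliability satisfies $0 \leq r_\theta^T \leq 1$, we have $(1 - r_\theta^T) \geq 0$, so the absolute value of the matched bias is $|\beta_{\theta,T}|\,|\delta_\theta|\,(1 - r_\theta^T)$. Assuming $\delta_\theta \neq 0$ (otherwise both biases vanish and the inequality is trivial) and $\beta_{\theta,T} \neq 0$ (otherwise the matched estimator is exactly unbiased and the inequality again holds), I would cancel $|\delta_\theta|$ from both sides and divide through by $|\beta_{\theta,T}|$ to obtain
$$1 - r_\theta^T \;\leq\; \left| 1 - \frac{\bar{\beta}_{\theta,\text{pre}}}{\beta_{\theta,T}} \right|,$$
which rearranges directly to the claimed threshold condition \eqref{eq:guideline_two}. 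Each of these manipulations is an ``if and only if'' step, so the biconditional follows.

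There is no substantive obstacle here; the content is really packed into Theorem~\ref{theorem:bias_multiple_T_noX} and the specialization of Theorem~\ref{theorem:most_general}. The only mild subtlety is handling the sign of $\beta_{\theta,T}$ (and the degenerate $\delta_\theta = 0$ case) cleanly, which the absolute values take care of automatically, and noting that the scale-invariant ratio $\bar{\beta}_{\theta,\text{pre}}/\beta_{\theta,T}$ plays exactly the role that $s$ played in Lemma~\ref{lemma:suffcond_match}, so the present statement is the natural multi-period generalization of that earlier lemma.
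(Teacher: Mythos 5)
Your proposal is correct and follows essentially the same route as the paper, which proves this lemma by writing the two biases from Theorem~\ref{theorem:bias_multiple_T_noX} and the specialization of Theorem~\ref{theorem:most_general}, cancelling $|\delta_\theta|$, and reducing to the scale-invariant ratio exactly as in Lemma~\ref{lemma:suffcond_match}. Your write-up is in fact more explicit than the paper's (which simply defers to the algebra of the earlier lemma), and your handling of the degenerate $\delta_\theta = 0$ and $\beta_{\theta,T}=0$ cases is a detail the paper glosses over.
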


The proof is in Appendix~\ref{appendix:proof_guidelineY}.

\subsection{Stable Pretreatment Case With Covariates}
\label{subsection:multiple_time_simple}

In Section~\ref{subsection:multiple_time_nocov}, we assumed away all observed covariates $\mathbf{X}$ to gain further intuition on the bias expression in the no-covariate case.
In this section, we keep univariate $\theta$ and bring back all observed covariates but make a different simplifying assumption in order to gain intuition on the bias when additionally matching on the observed covariates $\mathbf{X}$. In particular, we assume $\vec{\beta}_{x,t}$ and $\vec{\beta}_{\theta,t}$ are unchanging across the pre-treatment period outcomes, i.e., parallel trends hold in the pre-treatment periods, as given in the following assumption:
\begin{assumption}[Unconditional parallel trends for pre-treatment outcomes]
    For all pre-treatment periods, $0 \leq t, t' \leq T - 1$, assume there is no time varying effects of either confounders $\theta$ or $X$. i.e., $\beta_{\theta, t} = \beta_{\theta, t'}$ and $\vec{\beta}_{x,t} = \vec{\beta}_{x,t'}$ for all $t, t'$ such that  $0 \leq t, t' \leq T - 1$.
\label{assumption:multiple_time_periods}
\end{assumption}
This assumption effectively states that we have $T$ independent and identical measurements of pre-treatment outcome $Y$ available, i.e., for all the pre-treatment outcomes the parallel trends hold perfectly (unconditional on $\mathbf{X}$ or $\theta$). 

We emphasize that we use Assumption~\ref{assumption:multiple_time_periods} only in this section for pedagogical reasons to give more interpretable bias expression that include observed covariates $\mathbf{X}$.

\begin{theorem}[Bias with Multiple Time Periods under Stability Assumption]
If $(Z_i, \mathbf{X}_i, \boldsymbol{\theta}_i, Y_{i,t})$ are independently and identically drawn from the data generating process in Equations~\ref{eq:multivariate_X_multiple_T} and \ref{eq:generalthetax} and Assumption~\ref{assumption:multiple_time_periods} holds with $q = 1$, then the biases of our matching estimators are then
\begin{align*}
E\big[\hat{\tau}_{gDiD}^{\mathbf{X}}\big]  - \tau &=   \Delta_{\theta}\tilde{\delta}_{\theta}, \\
E\big[\hat{\tau}_{gDiD}^{\mathbf{X}, \mathbf{Y}^{T}}\big]  - \tau &= \beta_{\theta, T} \tilde{\delta}_{\theta} (1 - r_{\theta \mid x}^T),
\end{align*}
where 
\begin{align*}
    \tilde{\delta}_{\theta} &= \delta_{\theta} - \Sigma_{\theta X}\Sigma_{XX}^{-1} \vec{\delta}_{x}, \\
    r_{\theta \mid x}^T &= \frac{T\beta^2_{\theta, T - 1} \tilde{\sigma}_{\theta}^2}{T\beta^2_{\theta, T - 1} \tilde{\sigma}_{\theta}^2 + \sigma_{E}^2} ,\\
    \tilde{\sigma}_{\theta}^2  &= \sigma_{\theta}^2 - \Sigma_{\theta X}\Sigma_{XX}^{-1} \Sigma_{X \theta} .
\end{align*}
\label{theorem:bias_multiple_T}
\end{theorem}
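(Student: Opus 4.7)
The plan is to derive both bias expressions by specializing the general formula in Theorem~\ref{theorem:most_general} to the stable pre-treatment setting of Assumption~\ref{assumption:multiple_time_periods}. The first identity follows immediately: with $q=1$ and constant pre-treatment slopes, $\vec{\Delta}_\theta$ collapses to the scalar $\beta_{\theta,T}-\beta_{\theta,0} = \Delta_\theta$, so Theorem~\ref{theorem:most_general} yields $E[\hat\tau_{gDiD}^{\mathbf{X}}] - \tau = \Delta_\theta\bigl[\delta_\theta - \Sigma_{\theta X}\Sigma_{XX}^{-1}\vec\delta_x\bigr] = \Delta_\theta\tilde\delta_\theta$. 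The nontrivial work is entirely in the second identity.

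For the second bias, I would first exploit Assumption~\ref{assumption:multiple_time_periods} to observe that, conditional on $(\theta_i,\mathbf{X}_i)$, the $T$ pre-treatment outcomes share a common signal $\beta_{\theta,0}\theta_i+\vec\beta_{x,0}^\top\mathbf{X}_i$ plus i.i.d.\ noise. This forces the covariance blocks appearing in Theorem~\ref{theorem:most_general} to have low-rank structure: writing $\mathbf{c}=\beta_{\theta,0}\Sigma_{X\theta}+\Sigma_{XX}\vec\beta_{x,0}$, $b=\beta_{\theta,0}\sigma_\theta^2+\vec\beta_{x,0}^\top\Sigma_{X\theta}$, and $a=\beta_{\theta,0}^2\sigma_\theta^2+2\beta_{\theta,0}\vec\beta_{x,0}^\top\Sigma_{X\theta}+\vec\beta_{x,0}^\top\Sigma_{XX}\vec\beta_{x,0}$, one obtains $\Sigma_{X Y_T}=\mathbf{c}\mathbf{1}^\top$, $\Sigma_{\theta Y_T}=b\mathbf{1}^\top$, and $\Sigma_{Y_TY_T}=a\mathbf{1}\mathbf{1}^\top+\sigma_E^2 I_T$. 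The pre-treatment entries of the imbalance vector in Theorem~\ref{theorem:most_general} also collapse to a single scalar $d=\beta_{\theta,0}\delta_\theta+\vec\beta_{x,0}^\top\vec\delta_x$ replicated $T$ times.

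Next I would apply the block-inverse (Schur complement) formula to the $(p+T)\times(p+T)$ covariance matrix, reducing the problem to inverting the Schur complement $S=\tilde a\,\mathbf{1}\mathbf{1}^\top+\sigma_E^2 I_T$, where $\tilde a := a-\mathbf{c}^\top\Sigma_{XX}^{-1}\mathbf{c}$. A direct calculation collapses this to $\tilde a=\beta_{\theta,0}^2\tilde\sigma_\theta^2$, matching the theorem's $\tilde\sigma_\theta^2$. Then Sherman--Morrison gives the clean scalar $\mathbf{1}^\top S^{-1}\mathbf{1}=T/(\sigma_E^2+T\tilde a)$, which is precisely $r_{\theta\mid x}^T/(\beta_{\theta,0}^2\tilde\sigma_\theta^2)$.

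The last step is the bookkeeping: expanding $(\Sigma_{\theta X},b\mathbf{1}^\top)M^{-1}\bigl(\vec\delta_x^\top,d\mathbf{1}^\top\bigr)^\top$ produces five terms, four of which factor as $(\alpha-b)(\beta-d)\,\mathbf{1}^\top S^{-1}\mathbf{1}$, where $\alpha=\Sigma_{\theta X}\Sigma_{XX}^{-1}\mathbf{c}$ and $\beta=\mathbf{c}^\top\Sigma_{XX}^{-1}\vec\delta_x$. Using the identities $\alpha-b=-\beta_{\theta,0}\tilde\sigma_\theta^2$ and $\beta-d=-\beta_{\theta,0}\tilde\delta_\theta$, this factored term becomes $r_{\theta\mid x}^T\tilde\delta_\theta$, while the remaining term $\Sigma_{\theta X}\Sigma_{XX}^{-1}\vec\delta_x$ combines with $\delta_\theta$ outside to give $\tilde\delta_\theta$. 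Subtracting yields $\tilde\delta_\theta(1-r_{\theta\mid x}^T)$, and multiplying by $\beta_{\theta,T}$ completes the proof. I expect the main obstacle to be purely bookkeeping --- recognizing that the algebraic mess collapses through the two scalar identities $\alpha-b=-\beta_{\theta,0}\tilde\sigma_\theta^2$ and $\beta-d=-\beta_{\theta,0}\tilde\delta_\theta$, which are what make the rank-one structure of the pre-treatment covariance block pay off.
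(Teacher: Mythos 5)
Your proposal is correct; I checked the key algebraic identities ($\tilde a = a - \mathbf{c}^\top\Sigma_{XX}^{-1}\mathbf{c} = \beta_{\theta,0}^2\tilde\sigma_\theta^2$, $\alpha-b=-\beta_{\theta,0}\tilde\sigma_\theta^2$, $\beta-d=-\beta_{\theta,0}\tilde\delta_\theta$, and the collapse of the four cross terms to $(\alpha-b)(\beta-d)\,\mathbf{1}^\top S^{-1}\mathbf{1}$) and they all hold, so the bias does reduce to $\beta_{\theta,T}\tilde\delta_\theta(1-r_{\theta\mid x}^T)$. Your route differs from the paper's in one substantive way: the paper first proves the result under $\Sigma_{\theta X}=0$ by explicit inversion of the constant-diagonal/constant-off-diagonal pre-period covariance block, and then handles the correlated case by \emph{reparametrizing the outcome model} in terms of the residualized confounder $\tilde\theta_i = \theta_i - \Sigma_{\theta X}\Sigma_{XX}^{-1}\mathbf{X}_i$, which is uncorrelated with $\mathbf{X}$ by construction, thereby reducing the general case to the already-solved one. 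You instead carry out the fully correlated computation in one pass by specializing the closed form of Theorem~\ref{theorem:most_general}, exploiting the rank-one structure via Schur complement plus Sherman--Morrison, and letting the residualization emerge implicitly through the two scalar identities. The paper's two-step reduction is conceptually cleaner and foreshadows the residualization used later in the estimation guideline; your direct expansion avoids rewriting the structural model but shifts the burden onto the bookkeeping, which you have correctly identified as the only real obstacle and correctly resolved.
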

The proof is provided in Appendix~\ref{Appendix:generalization_proofs}.
The result is similar to Theorem~\ref{theorem:bias_multiple_T_noX}. Our new ``reliability'' term, $r_{\theta \mid x}^T$, again increases as a function of $T$ similar to that in Theorem~\ref{theorem:bias_multiple_T_noX}.\footnote{Strictly speaking $r_{\theta \mid x}^T$ is not the reliability of our outcome as defined in Definition~\ref{def:reliability} However, it can still be roughly interpreted as the ratio of how much total variance from the $T$ pre-period outcome is explained by $\theta$.} One difference, however, is that the average of the pre-period slopes, $\bar{\beta}_{\theta, \text{pre}}$, simplifies to $\beta_{\theta, T-1}$ under Assumption~\ref{assumption:multiple_time_periods}. The second difference arises from matching on the observed covariates $\mathbf{X}$.
The tildes in the expressions in Theorem~\ref{theorem:bias_multiple_T} represent a ``residualization'' of the outcome and $\theta$ by $\mathbf{X}$, driven by the correlation of $\mathbf{X}$ and $\theta$.
Namely, $\delta_{\theta}$ and $\sigma_{\theta}^2$ are adjusted by the extra information gained from matching on correlated $\mathbf{X}$ (see Section~\ref{subsection:full_general_case} for detailed explanation).
In other words, the more tightly coupled $\mathbf{X}$ and $\theta$, the less the potential benefit for additionally matching on pre-treatment outcomes beyond just matching on $\mathbf{X}$, as represented by a smaller $\tilde{\sigma}_{\theta}^2$ and generally lower $r_{\theta \mid x}^T$.

Note that if $p = 1$, then $\Sigma_{\theta X} = \rho \sigma_{\theta} \sigma_x$, $\tilde{\delta}_{\theta} = \delta_{\theta} - \rho \frac{\sigma_{\theta}}{\sigma_x}\delta_x$, and $\tilde{\sigma}_{\theta}^2 = \sigma_{\theta}^2(1 - \rho^2)$, which recovers Theorem~\ref{theorem:mainresults}.

 The ``when to match'' Lemma~\ref{lemma:general_suffcond_match} directly extends here by replacing $r_{\theta}^T$ with $r_{\theta \mid x}^T$ and $\bar{\beta}_{\theta, \text{pre}}$ with $\beta_{\theta, T-1}$. The intuition is that by first taking out the predictive element of $\mathbf{X}$ both directly and through its correlation with $\theta$, we can reduce our covariate case to the no-covariate case, and then follow the ideas in Section~\ref{subsection:nocov}, with $\delta_{\theta}$ replaced with $\tilde{\delta}_{\theta}$, the imbalance of the ``residualized'' latent confounder. We later leverage this result to motivate our guideline for determining when to additionally match for the pre-period outcomes.

\section{Determining When to Match}
\label{section:sensitivity_analysis}

Using our model as a working approximation, we now use the theoretical results in Sections~\ref{section:main_results} and \ref{section:generalization} to provide heuristic guidance on what to match on, along with a means of roughly estimating the reduction (or increase) in bias due to matching.
We also provide a publicly available script to run our proposed guidelines.\footnote{See \url{https://github.com/daewoongham97/DiDMatching}.}
We assume a univariate $\theta$, and allow for arbitrary breakage in parallel trends and degree of imbalance.

\subsection{Guidance for Matching on Covariates} 
\label{subsection:rule_matching_X}

\begin{guideline}[Matching on $\mathbf{X}$ guideline]
Always match on $\mathbf{X}$.

One can estimate the reduction in bias (relative to the na\"ive DiD) from matching on $\mathbf{X}$ as:
$$\hat\Delta_{\tau_x} := \left|\hat{\vec{\delta}}_x^\top \hat{\vec{\Delta}}_x\right| = \left|(\hat{\vec{\delta}}_x)^\top \left(\hat{\vec{\beta}}_{x, T} - \frac{1}{T}\sum_{t = 0}^{T - 1} \hat{\vec{\beta}}_{x, t}\right)\right| ,$$
where $\hat{\vec{\delta}}_x$ is the difference in means of $\mathbf{X}$ between the treated and control group and $\hat{\vec{\beta}}_{x, t}$ are the $p$-dimensional regression coefficients from linear regressions of $Y_{i,t}$ on $\mathbf{X}_i$ within the control group, one regression for each time point.
\label{guideline:matchX}
\end{guideline}

This advice is consistent with the current advice on how practitioners should generally account for as many observed covariates as possible \citep{rose:02b, rose:rubi:83, shpi:vand:robi:10, Mbias}, which is often referred to as the ``pre-treatment criterion.''

$\hat\Delta_{\tau_x}$ is a rough estimate of the degree of bias reduction (relative to the na\"ive DiD). More formally $\hat\Delta_{\tau_x}$ estimates $\Delta_{\tau_x}$, where
$$\Delta_{\tau_x} :=  \left|E\big[\hat{\tau}_{gDiD}\big] - \tau \right| - \left|E\big[\hat{\tau}_{gDiD}^{\mathbf{X}}\big] - \tau \right|.$$

Section~\ref{subsection:generalcase_uncorrelated} shows that, when $\mathbf{X}$ and $\theta$ are uncorrelated, matching directly on the observed covariates $X$ reduces the bias contribution from $X$, i.e., gets rid of the $\Delta_x \delta_x$ term in the bias.
As long as biases are not cancelling, this term would then be the bias reduction.
It turns out that, even if $\Sigma_{\theta X} \neq 0$, na\"ively taking the difference of the \emph{estimated} slopes from a linear regression still accounts for how the correlation affects the bias, because the correlation with $\theta$ gets picked up by the estimated slope coefficients themselves being biased.
We formalize how this bias estimate works in the following theorem:

\begin{theorem}[Conditions for matching on $\mathbf{X}$ reducing bias] \ \\
Suppose a data matrix of $D_i = (\mathbf{X}_i, Y_{i, 0}, \dots, Y_{i, T}, Z_i)$ for $i = 1,2, \dots, n$ follows the linear structural equation model in Equations \ref{eq:multivariate_X_multiple_T} and \ref{eq:generalthetax}, and the following three sign conditions hold:
\begin{itemize}
    \item[] 1) $sign(\vec{\Delta}_x^\top \vec{\delta}_x) = sign(\Delta_{\theta} \delta_{\theta})$,
    \item[] 2) $sign(\Delta_{\theta} \Sigma_{\theta X} \Sigma_{XX}^{-1} \vec{\delta}_x) = sign(\Delta_{\theta} \delta_{\theta})$,
    \item[] 3) $sign( \delta_{\theta}) = sign( \delta_{\theta}- \Sigma_{\theta X} \Sigma_{XX}^{-1} \vec{\delta}_x)$ .
\end{itemize}
Then $\Delta_{\tau_x} \geq 0$, i.e., matching on $\mathbf{X}$ will have less bias than the  na\"ive DiD.
Additionally, $\hat\Delta_{\tau_x}$ is a consistent estimator for $\Delta_{\tau_x}$.
\label{theorem:matching_X}
\end{theorem}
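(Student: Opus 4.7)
The plan is to split the argument into two pieces mirroring the two conclusions, leveraging Theorem~\ref{theorem:most_general} throughout. First I specialize Theorem~\ref{theorem:most_general} to the univariate-$\theta$ case to write
$$E[\hat\tau_{gDiD}] - \tau = \Delta_\theta \delta_\theta + \vec{\Delta}_x^\top \vec{\delta}_x, \qquad E[\hat\tau_{gDiD}^{\mathbf{X}}] - \tau = \Delta_\theta\bigl(\delta_\theta - \Sigma_{\theta X}\Sigma_{XX}^{-1}\vec{\delta}_x\bigr).$$
For the inequality $\Delta_{\tau_x} \ge 0$, set $a := \delta_\theta$ and $b := \Sigma_{\theta X}\Sigma_{XX}^{-1}\vec{\delta}_x$. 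Condition 2 forces $\mathrm{sign}(b)=\mathrm{sign}(a)$ (because multiplying both sides of its sign identity by $\Delta_\theta$ cancels), and condition 3 says $\mathrm{sign}(a-b)=\mathrm{sign}(a)$, which together imply $|a-b|=|a|-|b|$. Hence $|\Delta_\theta(a-b)| = |\Delta_\theta a|-|\Delta_\theta b|$. Condition 1 then gives $|\Delta_\theta\delta_\theta + \vec{\Delta}_x^\top\vec{\delta}_x| = |\Delta_\theta\delta_\theta| + |\vec{\Delta}_x^\top\vec{\delta}_x|$, so subtracting yields $\Delta_{\tau_x} = |\vec{\Delta}_x^\top\vec{\delta}_x| + |\Delta_\theta b| \ge 0$.

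For consistency of $\hat\Delta_{\tau_x}$, I would invoke the standard omitted-variable-bias formula: regressing $Y_{i,t}$ on $\mathbf{X}_i$ alone in the control group, where the true mean is linear in both $\mathbf{X}_i$ and $\boldsymbol\theta_i$, gives an OLS estimator whose probability limit is $\vec{\beta}_{x,t}+\Sigma_{XX}^{-1}\Sigma_{X\theta}\,\beta_{\theta,t}$ under the LSEM of Equations~\eqref{eq:multivariate_X_multiple_T}--\eqref{eq:generalthetax} (the usual Gaussian iid setting supplies the required LLN/OLS regularity). Averaging linearly across the $T$ pre-periods and subtracting from the period-$T$ regression, I get
$$\hat{\vec{\beta}}_{x,T} - \tfrac{1}{T}\sum_{t=0}^{T-1}\hat{\vec{\beta}}_{x,t} \;\xrightarrow{p}\; \vec{\Delta}_x + \Sigma_{XX}^{-1}\Sigma_{X\theta}\,\Delta_\theta,$$
while $\hat{\vec{\delta}}_x \xrightarrow{p} \vec{\delta}_x$ by the law of large numbers. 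By the continuous mapping theorem, $\hat\Delta_{\tau_x}$ converges in probability to $\bigl|\vec{\delta}_x^\top\vec{\Delta}_x + \Delta_\theta\,\Sigma_{\theta X}\Sigma_{XX}^{-1}\vec{\delta}_x\bigr|$.

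To close the argument, I need this limit to equal the $\Delta_{\tau_x}$ I derived above. Conditions 1 and 2 together imply $\vec{\Delta}_x^\top\vec{\delta}_x$ and $\Delta_\theta\,\Sigma_{\theta X}\Sigma_{XX}^{-1}\vec{\delta}_x$ share the same sign (both equal to $\mathrm{sign}(\Delta_\theta\delta_\theta)$), so their sum has absolute value equal to the sum of absolute values, matching the formula $\Delta_{\tau_x}=|\vec{\Delta}_x^\top\vec{\delta}_x|+|\Delta_\theta\,\Sigma_{\theta X}\Sigma_{XX}^{-1}\vec{\delta}_x|$ from the first part. Consistency follows.

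The main obstacle I anticipate is the bookkeeping on signs and absolute values: the two conditions 2 and 3 interact through the identity $|a-b|=|a|-|b|$, which only holds when $b$ lies between $0$ and $a$ (same sign, smaller magnitude), so I must verify that conditions 2 and 3 together encode exactly this. Everything else reduces to a direct application of Theorem~\ref{theorem:most_general} and standard omitted-variable-bias asymptotics for OLS, which I would not grind through in detail.
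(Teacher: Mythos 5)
Your proposal is correct and takes essentially the same route as the paper's proof: it uses the bias formulas from Theorem~\ref{theorem:most_general}, resolves the absolute values via the three sign conditions, and establishes consistency through the omitted-variable-bias probability limit $\vec{\beta}_{x,t}+\Sigma_{XX}^{-1}\Sigma_{X\theta}\beta_{\theta,t}$ of the control-group regressions. Your packaging of conditions 2 and 3 as the identity $|a-b|=|a|-|b|$ (with $b$ between $0$ and $a$) is a slightly cleaner organization of the paper's explicit four-way case analysis, and your final observation that conditions 1 and 2 force the two terms in the limit to share a sign is precisely the step the paper leaves implicit when it drops the inner absolute values.
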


The proof is provided in Appendix~\ref{appendix:proof_guidelineX}.
In general, while matching on $\mathbf{X}$ removes bias from the observed covariates, doing so does not guarantee overall bias reduction due to the ``edge cases'' mentioned in Section~\ref{subsection:takeaways}.
The three sign conditions in Theorem~\ref{theorem:matching_X} remove these edge cases.
The first sign condition says the pre-existing biases of $\theta$ and $\mathbf{X}$ are not in opposite directions.
The second sign condition ensures the imbalance of $ \delta_{\theta}$ is reduced, not increased, by the extra information we gain about $\theta$ by matching on a correlated $\mathbf{X}$.
In other words, the second sign condition does not allow the confounding effects of $\theta$ and $\mathbf{X}$ to go in the opposite direction of the correlation.
The third sign condition does not allow the additional reduction in bias gained from matching on a correlated $\mathbf{X}$ to over-correct the bias.
We leave the assessment of the plausability of these sign conditions to future empirical work. 
Importantly, these conditions are not necessary, in that there are many cases where they do not hold, but matching on $\mathbf{X}$ is still beneficial.

\subsection{Guidance for Matching on Pre-treatment Outcome(s)}
\label{subsection:rule_matching_Y}
While the guidance for matching on $\mathbf{X}$ is relatively straightforward, the same is not true for matching on the available pre-treatment outcome(s).
On one hand, matching on pre-treatment outcomes reduces the effect of imbalance on $\theta$ ($\delta_{\theta}$) by a factor proportional to the reliability.
On the other hand, matching on pre-treatment outcomes erodes the second ``difference'' in the DiD analysis, which can add bias to the overall estimate.
Since these quantities are consequences of how latent parameters relate to the outcome, a general data-driven guideline will have to also rely on some untestable assumptions and heuristics.
Here, we use our simplified model from above to provide such a guideline.

\begin{guideline}[Matching on $\mathbf{X}, \mathbf{Y^T}$ guideline] \ \\
Match on all available pre-treatment outcome(s) if the following inequality holds. 
\begin{equation}
 \hat r_{\theta}^T > 1 - \left|1 - \frac{ \hat{\bar{\beta}}_{\theta, \text{pre}}}{\hat\beta_{\theta, T}}\right|,
 \label{equation:guideline_Y_eq}
\end{equation} 
where the reliability and regression coefficients can be estimated as described below.

We can estimate the reduction in bias from matching additionally on $T$ pre-treatment outcomes (relative to matching only on $\mathbf{X}$) as
$$\hat\Delta_{\tau_{x,y}} := \left| |\hat\Delta_{\theta} \hat{\tilde{\delta}}_{\theta}| - |\hat \beta_{\theta ,T} \hat{\tilde{\delta}}_{\theta} (1 - \hat r_{\theta}^T) |\right|, $$
where $\hat\Delta_{\tau_{x,y}}$ is an estimate for $\Delta_{\tau_{x,y}}$, the additional bias reduction from matching on both observed covariates pre-treatment outcomes relative to matching on observed covariates only, formally defined as
\begin{equation}
\Delta_{\tau_{x,y}} := \left|E\big[\hat{\tau}_{gDiD}^{\mathbf{X}}\big] - \tau \right| - \left| E[\hat{\tau}_{gDiD}^{\mathbf{X}, \mathbf{Y^T}} ]  - \tau \right|.\label{eq:full_bias_eq}
\end{equation}

We describe how to estimate the components of the guideline below.
\label{guideline:matchboth}
\end{guideline}

 Guideline~\ref{guideline:matchboth} is motivated by Lemma~\ref{lemma:general_suffcond_match}, which provided conditions of when to match on the pre-period outcomes under the general multiple pre-period outcome case when there exist no observed covariates.
Theorem~\ref{theorem:bias_multiple_T}, under Assumption~\ref{assumption:multiple_time_periods}, shows that matching on observed covariates effectively residualizes out any effect on $\mathbf{X}$, reducing the covariate case to the no covariate case of Theorem~\ref{theorem:bias_multiple_T_noX}.
Therefore, we present Guideline~\ref{guideline:matchboth} as a general heuristic guideline that targets the main trade-off between reliability and the breakage in parallel trends.
We have shown that this residualization works when Assumption~\ref{assumption:multiple_time_periods} holds.
In Appendix~\ref{appendix:assump1} we show robustness results of Guideline~\ref{guideline:matchboth} to violations of Assumption~\ref{assumption:multiple_time_periods}, finding that even when Assumption~\ref{assumption:multiple_time_periods} does not hold, the guideline is correct except near the boundary of whether to match or not.

To estimate the quantities needed conduct the guideline check, we first residualize our outcomes for each time point (we require multiple pre-treatment observations, i.e., $T > 1$):
$$\tilde{Y}_{i,t} := Y_{i, t} - \hat{\vec{\beta}}_{x, t} \mathbf{X}_i , $$
where the $\hat{\vec{\beta}}_{x, t}$ are the regression coefficients for regressing the outcomes at timepoint $t$ onto the covariates.
In Appendix~\ref{appendix:proof_guidelineY} we show that 
\begin{equation}
\label{eq:tilde_Y}
\tilde{Y}_{i, t}  \xrightarrow{d}  \beta_{\theta, t}(\theta_i - \Sigma_{\theta X} \Sigma_{XX}^{-1} X_i) + \epsilon_{i,t} =  \beta_{\theta, t}\tilde{\theta}_i + N\left(0, \sigma_{E}^2\right),
\end{equation}
showing that this regression returns us to the no-covariate case of Section~\ref{subsection:nocov}, with new latent variable $\tilde{\theta}$ as shown in Theorem~\ref{theorem:bias_multiple_T}.

We then estimate the residual variance of the residualized model with
$$\hat{\sigma}_{E}^2  = \frac{1}{2} \widehat{ Var}\left(\tilde{Y}_{i, T - 1} - \tilde{Y}_{i, T - 2} \mid Z_i = 0 \right) .$$
If $\beta_{\theta, t} = \beta_{\theta, t'}$, then the variance of the differences allow for estimation of the residual variance. If these $\beta$ differ, this estimate will be biased upward, reducing the reliability term and thus shifting the recommendation towards not matching. Therefore, our matching guideline is conservative, i.e., when it tells to match it is correct despite the aforementioned stability assumption. On the other hand, if our guideline suggests to not match, it may still be beneficial to match if $\beta_{\theta, T-2} \neq \beta_{\theta, T-2}$.
Different time periods are possible here; we recommend the researcher select two periods that seem the most stable.
Other estimation approaches are possible here; see discussion and robustness results in Appendix~\ref{appendix:assump1}. 

We can use these residualized outcomes and estimated residual variance to obtain empirical estimates of the quantities given in Lemma~\ref{lemma:general_suffcond_match}'s Equation~\eqref{eq:guideline_two}: 
\begin{align*}
 \hat \beta_{\theta , t} &= \sqrt{ \widehat{Var}(\tilde{Y}_{i, t} \mid Z_i = 0) - \hat{\sigma}_{E}^2  }, \quad t = 0, 1, 2, \dots, T , \\
\hat r_{\theta}^T &= \frac{T \hat{\bar{\beta}}_{\theta, \text{pre}}^2 } {T\hat{\bar{\beta}}_{\theta, \text{pre}}^2+ \hat\sigma_E^2} \quad \mbox{ with } \quad
\hat{\bar{\beta}}_{\theta, \text{pre}}^2 = \frac{1}{T} \sum_{t = 0}^{T-1} \hat \beta_{\theta, t}^2.
\end{align*}

The additional quantities in the formula for bias reduction are estimated as 
\begin{align*}
\hat\Delta_{\theta} &= \hat \beta_{\theta, T}  - \hat{\bar{\beta}}_{\theta, \text{pre}} \quad \mbox{ with } \quad \hat{\bar{\beta}}_{\theta, \text{pre}} = \frac{1}{T} \sum_{t = 0}^{T-1} \hat \beta_{\theta, t}, \mbox{ and } \\
\hat{\tilde{\delta}}_{\theta} &= \frac{\hat{E}(\bar{\tilde{Y}}_{i, 0:(T-1)} \mid Z_i = 1) - \hat{E}(\bar{\tilde{Y}}_{i, 0:(T-1)} \mid Z_i = 0)}{\hat{\bar{\beta}}_{\theta, \text{pre}}}.
\end{align*}

The above estimation formulae may seem as if we are forgetting to account for $\tilde{\sigma}_{\theta}$. We show in Appendix~\ref{appendix:proof_guidelineY} that the final estimates do not require estimating $\tilde{\sigma}_{\theta}$ because $\tilde{\sigma}_{\theta}$ cancels out.
Equivalently, we can think of our latent, residualized $\tilde{\theta}$ as having unit variance.

The above guideline technically rely on a linear model and some degree of stability (here, between the final two pre-treatment periods).
Although these assumptions are unlikely to hold exactly in practice, they do provide a heuristic for deciding on the benefits of matching.
We leave the extension of our guideline when linearity and stable pre-treatment trends fail to future work. 
We also acknowledge that other approaches, e.g., ones that incorporate additional information or data, may lead to superior performance.
Alternative approaches for estimation are also possible (such as averaging coefficients over multiple time periods); we leave exploring the benefits and shortcomings of these to future work.
When we only have a single pre-period outcome ($T = 1$), we cannot estimate the reliability and thus implement the above guideline.
We can, however, instead directly assume different reliability values to calculate the guideline. See Appendix~\ref{appendix:guideline} for details of this sensitivity analysis approach.

We demonstrate estimating these quantities in our applied example in Section~\ref{section:application}.
We also formally summarize the assumptions that theoretically justify Guideline~\ref{guideline:matchboth} and show consistency in the above estimation procedure in the following theorem: 

\begin{theorem}[Conditions for consistently estimating Guideline~\ref{guideline:matchboth}]
Assume $T > 1$ and the general model of Equations~\ref{eq:multivariate_X_multiple_T} and \ref{eq:generalthetax} with $q=1$. 
Further suppose that we have stable effects of $\theta$ in the last two pre-treatment period, i.e., $\beta_{\theta, T-1} = \beta_{\theta, T-2}$. In this case, $\hat r_{\theta \mid x}^T$ is a consistent estimator for $r_{\theta \mid x}^T$, and $\frac{ \hat{\bar{\beta}}_{\theta, \text{pre}}}{\hat\beta_{\theta, T}}$ is a consistent estimator for $\frac{ \bar{\beta}_{\theta, \text{pre}}}{\beta_{\theta, T}}$.
\label{theorem:matching_Y}
\end{theorem}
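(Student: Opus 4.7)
The plan is to show that each plug-in quantity appearing in Guideline~\ref{guideline:matchboth} converges in probability to its population target, and that the unknown $\tilde{\sigma}_\theta$ factor cancels wherever needed. I will treat the control-group sample (on which all of the estimators are computed) and proceed in four steps, invoking the continuous mapping theorem and the consistency of OLS and sample variances throughout.

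\textbf{Step 1: Limit of the residualized outcomes.} First I would characterize the probability limit of the per-period OLS slope $\hat{\vec{\beta}}_{x,t}$ fit within the control group. Under the Gaussian specification in Equation~\eqref{eq:generalthetax} we have $E(\theta_i\mid \mathbf{X}_i, Z_i=0)=\mu_{\theta,0}+\Sigma_{\theta X}\Sigma_{XX}^{-1}(\mathbf{X}_i-\vec{\mu}_{x,0})$, so a direct computation of the population regression coefficient gives $\hat{\vec{\beta}}_{x,t}\xrightarrow{p} \vec{\beta}_{x,t}+\beta_{\theta,t}\Sigma_{XX}^{-1}\Sigma_{X\theta}$. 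Substituting this limit into $\tilde{Y}_{i,t}:=Y_{i,t}-\hat{\vec{\beta}}_{x,t}^\top\mathbf{X}_i$ reproduces the residualization identity already recorded in Equation~\eqref{eq:tilde_Y}, so that within the control group the residualized outcome behaves asymptotically as $\beta_{0,t}+\beta_{\theta,t}\tilde{\theta}_i+\epsilon_{i,t}$ with $\tilde{\theta}_i=\theta_i-\Sigma_{\theta X}\Sigma_{XX}^{-1}\mathbf{X}_i$ and $\mathrm{Var}(\tilde{\theta}_i\mid Z_i=0)=\tilde{\sigma}_\theta^2$.

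\textbf{Step 2: Consistency of $\hat{\sigma}_E^2$.} Using the stability hypothesis $\beta_{\theta,T-1}=\beta_{\theta,T-2}$, the limiting difference $\tilde{Y}_{i,T-1}-\tilde{Y}_{i,T-2}$ equals a constant plus $\epsilon_{i,T-1}-\epsilon_{i,T-2}$, so its population variance is exactly $2\sigma_E^2$. Continuity of the sample variance together with Step 1 then gives $\hat{\sigma}_E^2\xrightarrow{p}\sigma_E^2$. This is the only place the two-period stability assumption is used, and it is the one genuine modelling step in the argument.

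\textbf{Step 3: Consistency of $\hat{\beta}_{\theta,t}$ up to the nuisance scale.} From Step 1, $\widehat{\mathrm{Var}}(\tilde{Y}_{i,t}\mid Z_i=0)\xrightarrow{p}\beta_{\theta,t}^2\tilde{\sigma}_\theta^2+\sigma_E^2$. Combining with Step 2 and the continuous mapping theorem on $x\mapsto\sqrt{x}$ yields $\hat{\beta}_{\theta,t}\xrightarrow{p}|\beta_{\theta,t}|\,\tilde{\sigma}_\theta$ for every $t\in\{0,\ldots,T\}$. Thus each $\hat{\beta}_{\theta,t}$ is consistent for $\beta_{\theta,t}$ only up to the common multiplicative factor $\tilde{\sigma}_\theta$ (and an absolute value, which is innocuous when the $\beta_{\theta,t}$ share sign, as is implicit in the scale-invariant comparisons below).

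\textbf{Step 4: Cancellation in the two target quantities.} For the ratio, by Slutsky's theorem,
\begin{equation*}
\frac{\hat{\bar{\beta}}_{\theta,\text{pre}}}{\hat{\beta}_{\theta,T}}
=\frac{\frac{1}{T}\sum_{t=0}^{T-1}\hat{\beta}_{\theta,t}}{\hat{\beta}_{\theta,T}}
\xrightarrow{p}\frac{\frac{1}{T}\sum_{t=0}^{T-1}|\beta_{\theta,t}|\,\tilde{\sigma}_\theta}{|\beta_{\theta,T}|\,\tilde{\sigma}_\theta}=\frac{\bar{\beta}_{\theta,\text{pre}}}{\beta_{\theta,T}},
\end{equation*}
because the nuisance scale $\tilde{\sigma}_\theta$ cancels in the ratio. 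For the reliability, $\hat{\bar{\beta}}_{\theta,\text{pre}}^{\,2}=\frac{1}{T}\sum_{t=0}^{T-1}\hat{\beta}_{\theta,t}^{2}\xrightarrow{p}\bar{\beta}_{\theta,\text{pre}}^{2}\tilde{\sigma}_\theta^2$, so applying the continuous mapping theorem to $\hat{r}_\theta^T=T\hat{\bar{\beta}}_{\theta,\text{pre}}^{\,2}/(T\hat{\bar{\beta}}_{\theta,\text{pre}}^{\,2}+\hat{\sigma}_E^2)$ and invoking Step 2 produces the limit $T\bar{\beta}_{\theta,\text{pre}}^{2}\tilde{\sigma}_\theta^2/(T\bar{\beta}_{\theta,\text{pre}}^{2}\tilde{\sigma}_\theta^2+\sigma_E^2)$, which is precisely $r_{\theta\mid x}^T$ in the general multiple-period form from Theorem~\ref{theorem:bias_multiple_T_noX}.

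The main obstacle I anticipate is purely bookkeeping: verifying that the finite-sample residualization $\hat{\vec{\beta}}_{x,t}^\top\mathbf{X}_i$ can legitimately be replaced by its population limit when computing sample variances of $\tilde{Y}_{i,t}$, which is standard but requires a short uniform-boundedness argument using that $\mathbf{X}_i$ has finite second moments. Once that is in place, every other step is a routine application of the law of large numbers, Slutsky, and the continuous mapping theorem, and the proof closes.
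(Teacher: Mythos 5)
Your proposal is correct and follows essentially the same route as the paper's own proof: residualize via the control-group OLS limit to reduce to the latent $\tilde{\theta}_i$, use the two-period stability to identify $\sigma_E^2$ from the variance of the differenced residualized outcomes, recover each $\beta_{\theta,t}$ up to the common scale $\tilde{\sigma}_\theta$, and let that scale cancel in the ratio and in the plug-in reliability. Your explicit handling of the $|\beta_{\theta,t}|$ sign ambiguity from the square root is a slightly more careful statement of the same caveat the paper relegates to a closing remark.
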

See proof in Appendix~\ref{appendix:proof_guidelineY}.

To account for estimation error, we recommend using a case-wise bootstrap to assess the uncertainty of the guidelines with respect to measurement error and to obtain confidence intervals for the estimated parameters \citep{efron_boot}.
We illustrate this with our principal turnover example in Section~\ref{section:application}, below.
Our provided scripts implement this approach.

\section{Application - The Impacts of Principal Turnover}
\label{section:application}

As introduced in Section~\ref{section:empirical_example}, \citet{principal_turnover} are interested in determining whether the impact of principal turnover has any causal impact on student achievement a year after the principal has changed.
We follow the guidelines in Section~\ref{section:sensitivity_analysis} using the same data the authors used with a few adjustments.
We first include all seven standardized observed covariates as $\mathbf{X}$ (further details in Section~\ref{section:empirical_example}).\footnote{For time varying covariates we take the average of these covariates over the relevant years and treat them as time invariant covariates. We standardized with respect to the treatment group, i.e., we divide each $\mathbf{X}_i$ by the standard deviation of the treatment group's respective covariate.}
Although Figure~\ref{fig:math_results} shows the original matching DiD estimates by \citeauthor{principal_turnover}, we do not actually need to perform any matching to use our guidelines detailed above. 

This application has staggered adoption, in that we have a set of years, and in a given year some schools are treated (lost a principal) and other schools are not.
We, therefore, apply our guidelines to each year in turn, and then average the results, weighting by the number of treated units in a given year.

We also look at the individual year recommendations to assess the stability of the guidelines across time.
Our provided script automates both this aggregation and per-year analysis.

To quantify the reduction in bias from matching on $\mathbf{X}$ only, we estimate the bias contribution by following Guideline~\ref{guideline:matchX}.
More specifically, we estimate $\vec{\Delta}_x$ by obtaining the post-treatment and pre-treatment slopes of $\mathbf{X}$ through seven separate linear regression of $Y_{i, t}$ for $t = 0, 1, \dots, 6$ on $\mathbf{X}_i$ in the control group.
We also estimate $\vec{\delta}_x$ by taking the simple difference in means of $\mathbf{X}$ from the treatment and control group across the seven covariates.
We repeat this for years 2005 through 2016, and then weight across all years, weighting by number of treatment units in each year.

Final estimates are shown in the top left entry of Table~\ref{tab:application_results}. 
The estimates correspond (left to right) to school enrollment size, proportion of students with free lunch, proportion of Hispanic students, proportion of Black students, proportion of new-to-school teachers, principal experience, and average number of principal transition in the past five years.

The authors reported an estimated treatment effect of about $\hat\tau = -0.035$ \citep{principal_turnover}.
The first column of Table~\ref{tab:application_results} shows that matching on observed covariates $\mathbf{X}$ reduces bias by about $\hat\Delta_{\tau_x} = 0.012$.
The estimated breakage in parallel trends for $\mathbf{X}$, $\hat\Delta_x$ (estimated by taking the difference between the linear regression slope of $\mathbf{X}$ on the post-treatment outcome and the average slope of $\mathbf{X}$ across all the pre-treatment outcomes),  is high for some variables but low for others.
Altogether, matching on $\mathbf{X}$ appears to reduce the bias by 0.012, roughly one-third of the original authors' reported estimated treatment effect size, affirming the authors decision to match on $\mathbf{X}$.


\begin{table}

\begin{tabular}{l|l|l} 
& Match on $\mathbf{X}$ & Match on $\mathbf{X}$ and $\mathbf{Y_{T}}$ \\ 
 \hline   
Estimated Parameters & $\hat{\vec{\delta}}_x = (-0.37, 0.034, -0.0029, 0.032, $ & $\hat r_{\theta}^T = 0.937; [0.93, 0.94]$  \\
 & $\qquad \qquad 0.013, -0.071, -0.0022)$ &  \\
 & $\hat\Delta_x = (-0.010, -0.23, 0.25, -0.12,$ &  $\hat{s} = \frac{\hat{\bar{\beta}}_{\theta, \text{pre}}}{\hat \beta_{\theta, T}} = 0.936; [0.91, 0.96]$ \\
 & $\qquad \qquad -0.19, -0.00073, -0.041)$ & $\hat{\tilde{\delta}}_{\theta} = -0.11$ \\
\hline
Match & Default to Yes & Yes, because $\hat r_{\theta}^T > 1 - \left|1 - \hat{s}\right|$\\
 & & [53\% bootstrap samples also agree] \\ 
\hline
Estimated Reduction in Bias & $\hat\Delta_{\tau_x} =  0.012; [0.009, 0.018]$  & $\hat\Delta_{\tau_Y} = 0.0027; [0.0006, 0.0062]$ \\ 
\end{tabular}
\caption{Results from applying guidelines in Section~\ref{section:sensitivity_analysis} on principal turnover empirical application.
The first row contains key parameters that summarize how the matching may help or not for matching on $\mathbf{X}$ (left) and additionally on $\mathbf{Y}_T$ (right).
The second row is a binary answer on whether the practitioner should match on the respective variables prior to their DiD analysis. The last row contains an estimate of the reduction in bias from the resulting matching. 95\% bootstrapped confidence intervals for key parameters are also provided in square brackets.} 
\label{tab:application_results}

\end{table}

To determine whether to additionally match on the six available pre-treatment outcomes ($T = 6$), we estimate the conditional reliability and breakage in parallel trends of our latent covariate (see Guideline~\ref{guideline:matchboth}).
For each possible treatment year, we begin by obtaining ``residualized'' responses $\tilde{Y}_{i, t}$ for all time periods through the residuals of the same seven separate linear regression of $Y_{i, t}$ on $\mathbf{X}_i$ within the control group used above in the match-on-$X$ evaluation.
Using the residualized outcomes as the new outcome, we then obtain an estimate of the variance of the noise in the outcome, $\hat \sigma_E^2$, by leveraging the information from only the last two pre-treatment outcomes and the fact that the error of the variance is assumed homoscedastic throughout the pre-treatment periods.
We next use the equations in Section~\ref{subsection:rule_matching_Y} to obtain estimates of the pre- and post-treatment $\theta$ coefficients by exploiting the fact that, e.g., the variance of the residualized pre-treatment outcome is a simple function of the pre-treatment slope and $\sigma_E^2$.
This gives (averaged across all treatment years) $\hat{\bar{\beta}}_{\theta, \text{pre}} = 0.54$ and $\hat \beta_{\theta, T} = 0.58$, with $\hat{s} = 0.936$, suggesting a slight break in parallel trends.

We also estimate the imbalance of $\theta$, getting $\hat{\tilde{\delta}}_{\theta} = -0.11$, using the fact that the difference in means of the treated and control group's residualized pre-treatment outcome is proportional to $\delta_{\theta}$ under our linear model.

Finally, we estimate the reliability $\hat r_{\theta}^T = 0.937$ by using a simple plug-in estimator, using the estimate of $\hat{\bar{\beta}}_{\theta, \text{pre}}^2$ and $\hat\sigma_E^2$.
For the individual years, the reliability values were generally above 0.92, with a low outlier of 0.86.
The $s$-values were generally near one, with a low of 0.77.
The second column of Table~\ref{tab:application_results} summarizes the key estimated parameters, and Appendix~\ref{Appendix:application_details} gives further details of the by-year results.

As our reliability is higher than our $s$-value, we recommend matching on the six available pre-treatment outcome in addition to the baseline covariates.
Similarly, 7 of the 12 years had a match recommendation (across bootstraps, the middle 95\% of number of years with a match recommendation was 6 through 10, suggesting stable match recommendations for at least half of all years).
The right hand side of Table~\ref{tab:application_results} shows an estimated additional reduction in bias, beyond matching on $X$, of around $0.0027$.
The small reduction in bias is because the breakage in parallel trends, $\hat{\bar{\beta}}_{\theta, \text{pre}}/\hat \beta_{\theta, T} = 0.936$, is relatively small (close to one) for $\tilde{\theta}$.
We, nevertheless, recommend to match because Guideline~\ref{guideline:matchboth} holds and because the estimated reliability ($\hat r_{\theta}^T$) is very high, meaning the risk of bias amplification from matching is low.
We additionally found that the decision to match (second row third column) remained ``Yes'' for over 50\% of all 1000 bootstrapped runs, when we perform a clustered bootstrapped sample by resampling each school's entire current and past outcome series.
Our provided script again automates all of these estimation steps.

Although the original authors \citet{principal_turnover} emphasised the importance of matching on different pre-treatment trends to account for the latent confounders, our results interestingly suggest that matching on the observed covariates was actually more important.
That being said, we note that matching on $\mathbf{X}$, insofar as it is correlated with the outcome, makes matching on the outcome less necessary.
Furthermore, our guideline is biased towards underestimating the benefit of matching on lagged outcome if the conditional relationship between outcome and latent confounder is not parallel pre-treatment (see Appendix~\ref{appendix:assump1}), which further supports the match recommendation here.

\section{Concluding Remarks}
\label{section:discussion}
We explore the bias of matching prior to a DiD analysis when parallel trends does not initially hold by mathematically characterizing the bias under a linear structural equation.
We verify that matching on observed covariates likely leads to a reduction in bias.
Further, we find that matching on pre-treatment outcomes exhibits a bias-bias trade off between recovering the latent confounder and undermining any pre-existing parallel trend.

We use our results to create guidelines for determining whether matching is recommended, and further provide an estimate of the resulting reduction in bias. We apply this strategy to a recent application of a DiD analysis involving matching to evaluate the impact of principal turnover on student achievement \citep{principal_turnover}. We find evidence that the authors' decision to match on all available pre-treatment outcomes  and especially on observed covariates $\mathbf{X}$  did in fact reduce bias and create a more credible causal estimate. 

Our work, however, is not comprehensive. First, our results are specific to the linear structural equation model. Although this is a useful starting point that allows for flexible exploration, it still contains strong parametric limitations.
For example, our setup does not explicitly allow for time varying covariates, instead focusing on time varying relationships between covariate and outcome (see Remark~\ref{remark:time_varying}).
We did, however, consider interaction effects in Appendix~\ref{Appendix:interaction}; these findings did not substantively differ from those found in Theorem~\ref{theorem:mainresults}.

We believe our framing could be connected to time-varying contexts, however.
For example, in the simple pre-post case, let $(\theta_{i,t}, X_{i,t})$ be evolving through an AR(1) process, i.e., $X_{i, t + 1} = \rho^X X_{i,t} + \kappa_{i,t}^X$, where $\kappa_{i,t}^X$ is an independent Gaussian random variable, $|\rho^X| \leq 1$, and $\theta_{i, t+1}$ is defined similarly.
Then our model in Equation~\eqref{eq:simpleresponsemodel} can be re-parameterized to account for this time varying covariate as:
$$Y_{i,0}(0) = \beta_{0, 0} +\beta_{\theta, 0} \theta_{i, 0}  + \beta_{x, 0} X_{i, 0} + \epsilon_{i, 0},$$
$$Y_{i,1}(0) = \beta_{0, 1} +\beta_{\theta, 1} \theta_{i, 1}  + \beta_{x, 1} X_{i, 1} + \epsilon_{i, 1} = \beta_{0, 1} + \tilde{\beta}_{\theta, 1} \theta_{i, 0} + \tilde{\beta}_{x, 1} X_{i, 0} + \tilde{\epsilon}_{i, 1},$$
where $\tilde{\beta}_{x, 1} =  \rho^X \beta_{x, 1}$, $\tilde{\beta}_{\theta, 1}$ is defined similarly, and $\tilde{\epsilon}_{i, 1} = \epsilon_{i, 1} + \beta_{x, 1}\kappa_{i, 1}^X + \beta_{\theta, 1}\kappa_{i, 1}^{\theta}$.
This suggests that Equation~\eqref{eq:simpleresponsemodel} could account for time varying confounders.

The tests for whether to match, discussed in Section~\ref{section:sensitivity_analysis}, are based on our linear model and additional stability assumptions.
We acknowledge that there are potentially other interesting directions and more robust ways to determine whether matching is suitable for a specific application; this could be fruitful area for future work.
More fully assessing how rigorous our proposed guidelines are in the face of model misspecification is also an interesting question. Additionally, we also only consider an assignment mechanism where units self-select into treatment or control based on observed and unobserved confounders as opposed to selection based on the actual pre-period outcome as detailed in Remark~\ref{remark:assignment_mech}.

Lastly, we only consider the theoretical bias under a perfect matching scheme. In most applications, perfect matches do not exist and in some cases having even an approximately good match is difficult. An interesting future direction would be to quantify the bias under imperfect matching, perhaps even accounting for systematic imperfect matching based on covariates.

\newpage

\bibliography{bibtex_file, my, imai, DiD} 
\bibliographystyle{pa}

\newpage

\appendix

\section{Guideline in two time period setting - sensitivity analysis}
\label{appendix:guideline}
Section~\ref{section:sensitivity_analysis} provides a heuristic guideline for determining whether to additionally match on lagged outcomes.
That guideline uses an estimation strategy that relies on having multiple pre-period outcomes ($T > 1$), so as to be able to estimate properties of the latent structure.
In particular, we use the multiple periods to estimate the reliability of the outcome as a measure of the latent covariate.
Classic DiD settings, however, only have a single pre-treatment period, making this approach not tenable.
As an alternative, we can simply directly assume a value of the reliability $r_{\theta}$ to obtain our guideline recommendation, and then explore a range of hypothetical reliability values to assess how consistent the guideline is, given the observable characteristics of the data. 

The process is similar to that presented in Section~\ref{subsection:rule_matching_Y}, but with the order of estimation slightly altered. First, given an assumed value for $r_{\theta}$, estimate  $\sigma_E^2$ as:
$$\hat \sigma_E^2 = \widehat{Var}(\tilde{Y}_{i, 0} \mid Z_i =0) \times (1- r_{\theta}),$$
where $T = 1$.
This is consistent because 
$$r_{\theta} = 1 - \frac{\sigma_E^2}{Var(\tilde{Y}_{i, 0} \mid Z_i =0)} . $$
Once we have an estimate of $\sigma_E^2$, the guidelines remain exactly the same.
For example, we have
\begin{align*}
 \hat \beta_{\theta , t} &= \sqrt{ \widehat{Var}(\tilde{Y}_{i, t} \mid Z_i = 0) - \hat{\sigma}_{E}^2  } = \sqrt{ \widehat{Var}(\tilde{Y}_{i, t} \mid Z_i = 0) r_\theta } , \quad  t = 0, 1 ,\\
\hat\Delta_{\theta} &= \hat \beta_{\theta, 1}  - \hat\beta_{\theta, 0} \mbox{ and } \\
\hat{\tilde{\delta}}_{\theta} &= \frac{\hat{E}(\tilde{Y}_{i, T - 1} \mid Z_i = 1) - \hat{E}(\tilde{Y}_{i, T - 1} \mid Z_i = 0)}{\hat\beta_{\theta, 0}} .
\end{align*}

To determine whether to match additionally on the pre-treatment outcome, check Lemma~\ref{lemma:suffcond_match} with the estimated values and the given $r_{\theta}$ as before.
To finally get the estimated reduction in bias, compute
$$\hat\Delta_{\tau_{x,y}} := \left| |\hat\Delta_{\theta} \hat{\tilde{\delta}}_{\theta}| - |\hat \beta_{\theta ,1} \hat{\tilde{\delta}}_{\theta} (1 -  r_{\theta}) |\right|, $$
using the above values and the given $r_{\theta}$.

\section{Robustness of Guideline~\ref{guideline:matchboth}}
\label{appendix:assump1}

The idea behind Guideline~\ref{guideline:matchboth} is that we can residualize out the relationship of the observed covariates to the outcome, and then assess whether the pre-treatment residualized outcomes offer enough information on the latent covariate $\theta$ that it is worth matching on them.
Technically, the guidelines are based on an assumption of parallel trends holding for multiple time periods before treatment (Assumption~\ref{assumption:multiple_time_periods}), but we suggest it can be used even when that is not believed to be the case.
We explore this suggestion in this section via a small simulation study that suggests that when Assumption~\ref{assumption:multiple_time_periods} does not hold, the guideline tends towards not recommending matching, but that the general principle of larger breaks in parallel trends and higher levels of reliability both correspond with greater likelhood of a match recommendation.
In other words, we find that in the models we have explored,  Guideline~\ref{guideline:matchboth} gives the correct decision to match nearly 100\% of the time except for cases where matching or not both result in similar biases regardless of Assumption~\ref{assumption:multiple_time_periods}.

There are two areas of possible concern.
The first is whether the guideline is even true when Assumption~\ref{assumption:multiple_time_periods} is violated.
The second is what impact there is on the estimation of the guideline when the weaker version of Assumption~\ref{assumption:multiple_time_periods}, i.e., that $\beta_{\theta, T-1} = \beta_{\theta, T-2}$, is violated (see Theorem~\ref{theorem:matching_Y}).

To assess robustness to these assumptions, we conduct a variety of simulations with different types of violation of Assumption 1, and see how our guideline tends to compares to the ground truth bias that we can directly calculate using our overall Theorem~\ref{theorem:most_general}.

In our simulation settings, we assume four pre-treatment periods $T = 4$.
For our base simulation we set the post-slope $\beta_{\theta, T} = 0.8$ and pre-slopes $\beta_{\theta,t} = 0.0, 0.2, 0.4, 0.6$ for $t = 0, 1, 2, 3$, thus violating the stability assumption and Assumption~\ref{assumption:multiple_time_periods} with a similar magnitude of the violation of the parallel trends as we have in the post-period.
0.25 0.50 0.60 0.35
0.40 0.30 0.05 0.25
We include two observed covariates, where, for our base simulation, $\vec{\beta}_{x, t} = (0.25, 0.40), (0.50, 0.30), (0.60, 0.05)$, $(0.35, 0.25)$, and $(0.70, 0.50)$ for $t = 0, 1, 2, 3, 4$.
We initially allow $\text{Cor}(\mathbf{X}, \theta) = (0.3, 0.6)$ so that matching on $X$ impacts balance in $\theta$.
We also initially set the correlation of the two $\mathbf{X}$ to 0.5. 
All our covariates have unit variance, and the mean of the treatment group is 1 and the control group is 0.
We then explore a range of $\sigma_E^2$ values to see how overall reliability impacts our results.

For each scenario, we conduct 1,000 trials where we repeatedly generate a synthetic dataset according to our model and then estimate our guideline based on those data.
We compare these 1,000 estimates of guidance and bias to the the true guidance and bias given the parameters.
We plot the results in Figure~\ref{fig:robustness}, with how often Guideline~\ref{guideline:matchboth} recommends one to match (left figure) and the average estimated reduction in bias (right figure).
The $x$-axis shows different levels of residual noise, with higher noise corresponding to smaller reliability and reduced benefit of matching.
Consequently, at a certain threshold of $\sigma_E^2$, the bias from matching will exceed that of not matching. We label this turnover point, calculated with our general theorem, with a dotted vertical red line. 

We correctly classify to match (left of red dotted line) or to not match (right of red dotted line) nearly 100\% of the time as long as one is not in a scenario where matching makes little difference.
The right plot of Figure~\ref{fig:robustness} shows the estimated reduction in bias $\hat\Delta_{\tau_{x, y}}$ is approximately a constant factor away from the true $\Delta_{\tau_{x, y}}$. For example, when $\sigma_E = 0.30$, the true reduction in bias is $0.13$ while our estimated reduction in bias is $0.081$. When $\sigma_E = 1.20$, a setting where matching would actually result in a greater average bias (making the difference in bias negative), the true \textit{gain} in bias from matching is $0.056$ while the estimated gain in bias is $0.074$.

We then explored how this pattern would shift with different aspects of the parallel trends assumption and different correlations between covariates.
In particular, we extended our base scenario into a multifactor experiment with three factors.
The first is whether we impose parallel trends on theta, impose the same coefficient only for the last two pre-treatment time periods, or let theta fully vary.
The second is whether we set all the correlations between our covariates to 0, just set the correlation of the observed covariates to 0, just set the correlation of the covariates and theta to 0, or let all be non-zero.
All combinations allow exploring how the structure of our covariates impacts our guidelines.
The third is whether we have parallel trends on the observed covariates, or not.

We plot the results in Figure~\ref{fig:multifactorA}, with the $x$-axis again showing different levels of residual noise.
The $y$-axis now shows the true additional reduction in bias from matching, and each scenario is a dot colored by whether Guideline~\ref{guideline:matchboth} generally gives the correct advice.
That advice should be to match when above the zero line and to not match when below it.
When parallel trends holds pre-treatment, our advice is correct.
When the assumption does not hold, we generally see that we only give the wrong advice when the benefit of matching is slight.
In other words, we tend to be overly pessimistic about matching, only recommending matching when there are substantial benefits.

Overall, we believe our results suggest our proposed decision rule on when to match or not is very robust to deviations of the stability assumption. To further explore our results, we plot, on Figure~\ref{fig:multifactorB}, the difference between the average estimated bias reduction from the guideline minus the true bias reduction calculated from our general theorem which does not require the stability assumptions.
Overall, the estimated bias is all negative, again showing how various violations of stability cause us to underestimate the benefits of matching.

To further understand why our guideline is robust to violations in the stability assumption, we also show the average estimated parameters from following Guideline~\ref{guideline:matchboth} in Table~\ref{tab:robustness} for our core scenario.
In particular, Table~\ref{tab:robustness} shows that our estimated parameters are not far off from the truth despite a violation of the assumption. Therefore, we believe that our heuristic guideline is still useful to guide when to match or not even when the stability assumption of $\theta$ is violated.  


Lastly, one can calculate the exact theoretical bias of $\hat\sigma_E^2$ under the violation of our assumption. More specifically, we estimate $\hat\sigma_E^2$ by 
$$\hat{\sigma}_{E}^2  = \frac{1}{2} \widehat{ Var}\left(\tilde{Y}_{i, T - 1} - \tilde{Y}_{i, T - 2} \mid Z_i = 0 \right).$$
\noindent We show that
$$\widehat{ Var}\left(\tilde{Y}_{i, T - 1} - \tilde{Y}_{i, T - 2} \mid Z_i = 0 \right) = (\beta_{\theta, T-1} - \beta_{\theta, T-2})^2 \tilde{\sigma}_{\theta}^2 + 2 \sigma_E^2.$$
Therefore, if $\beta_{\theta, T-1} \neq \beta_{\theta, T-2}$ we always overestimate $\sigma_E^2$, giving a more conservative guideline. In other words, if Guideline~\ref{guideline:matchboth} suggests to match, it should always be correct (under our linear model). If Guideline~\ref{guideline:matchboth} suggests to not match, it may be still favorable to match. This is also reflected in the left plot of Figure~\ref{fig:robustness} since we over-conservatively state to not match (most of the time) near the red dotted-line (from the left) even when one should match.

\begin{figure}[t!]
\begin{center}
\includegraphics[width=12cm, height = 5cm]{"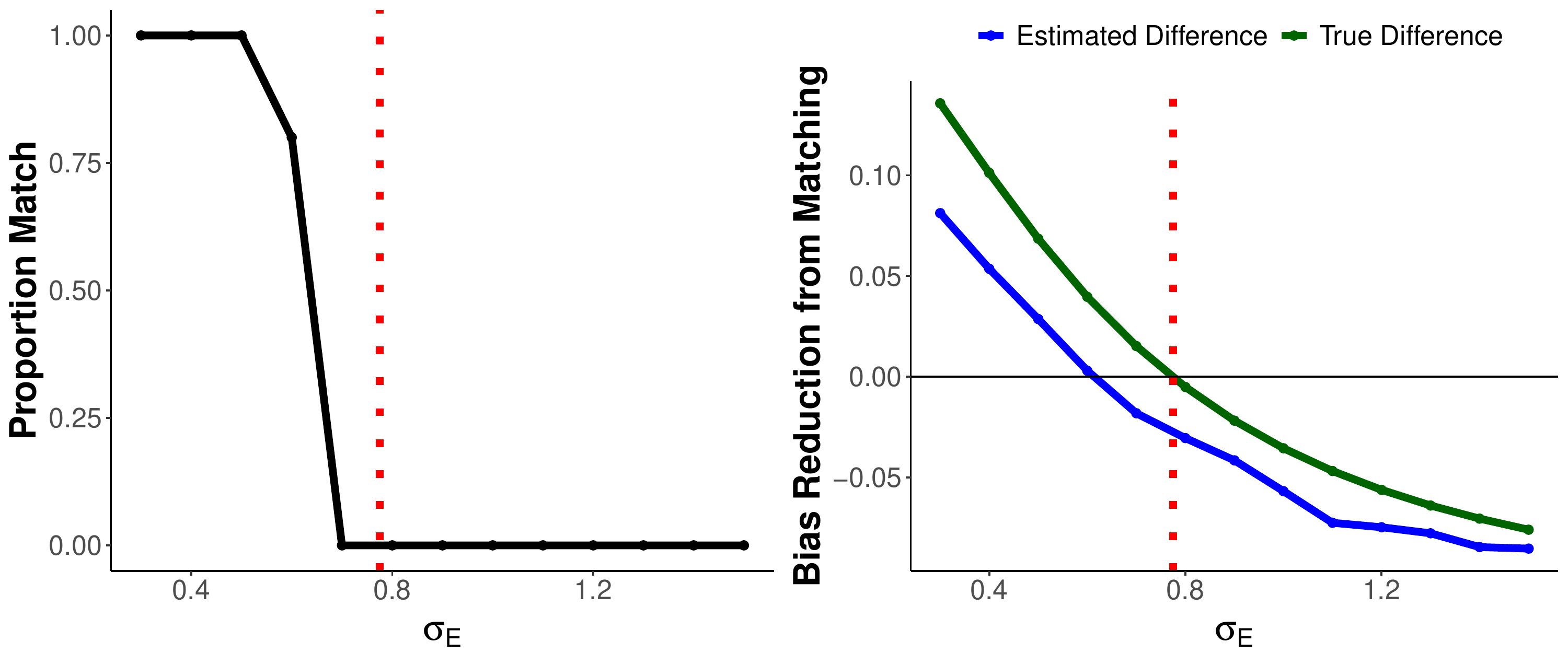"}
\caption{Guideline~\ref{guideline:matchboth} performance under model misspecification: all parameters match those in the model presented on Figure~\ref{fig:bias_complications} of the main paper except we violate Assumption~\ref{assumption:multiple_time_periods} with two observed covariates  Red dotted line represents the turning point on when one should match or not. We have 4 pre-period time points with $\beta_{\theta, t} = 0.0, 0.2, 0.4, 0.6, \beta_{\theta, T} = 0.8$ (a violation of the stability of $\theta$ slopes). We also break stable trends for $\mathbf{X}$ similarly. Left plot shows if we naively followed Guideline~\ref{guideline:matchboth} (pretending Lemma~\ref{lemma:general_suffcond_match} holds without Assumption~\ref{assumption:multiple_time_periods} and also pretending $\beta_{\theta, T-1} = \beta_{\theta, T-2}$ is true) we still correctly classify when to match or not mostly all the time except when near the decision boundary in red. Right plot shows both the true difference in bias of matching only on $\mathbf{X}$ and matching on both observed covariates and pre-period outcome DiD (in green) and the estimated difference, $\hat\Delta_{\tau_{x, y}}$, obtained from Guideline~\ref{guideline:matchboth} (in blue). }
\label{fig:robustness}%
\end{center}
\end{figure}

\begin{figure}[t!]
\begin{center}
\includegraphics[width=12cm, height = 5cm]{"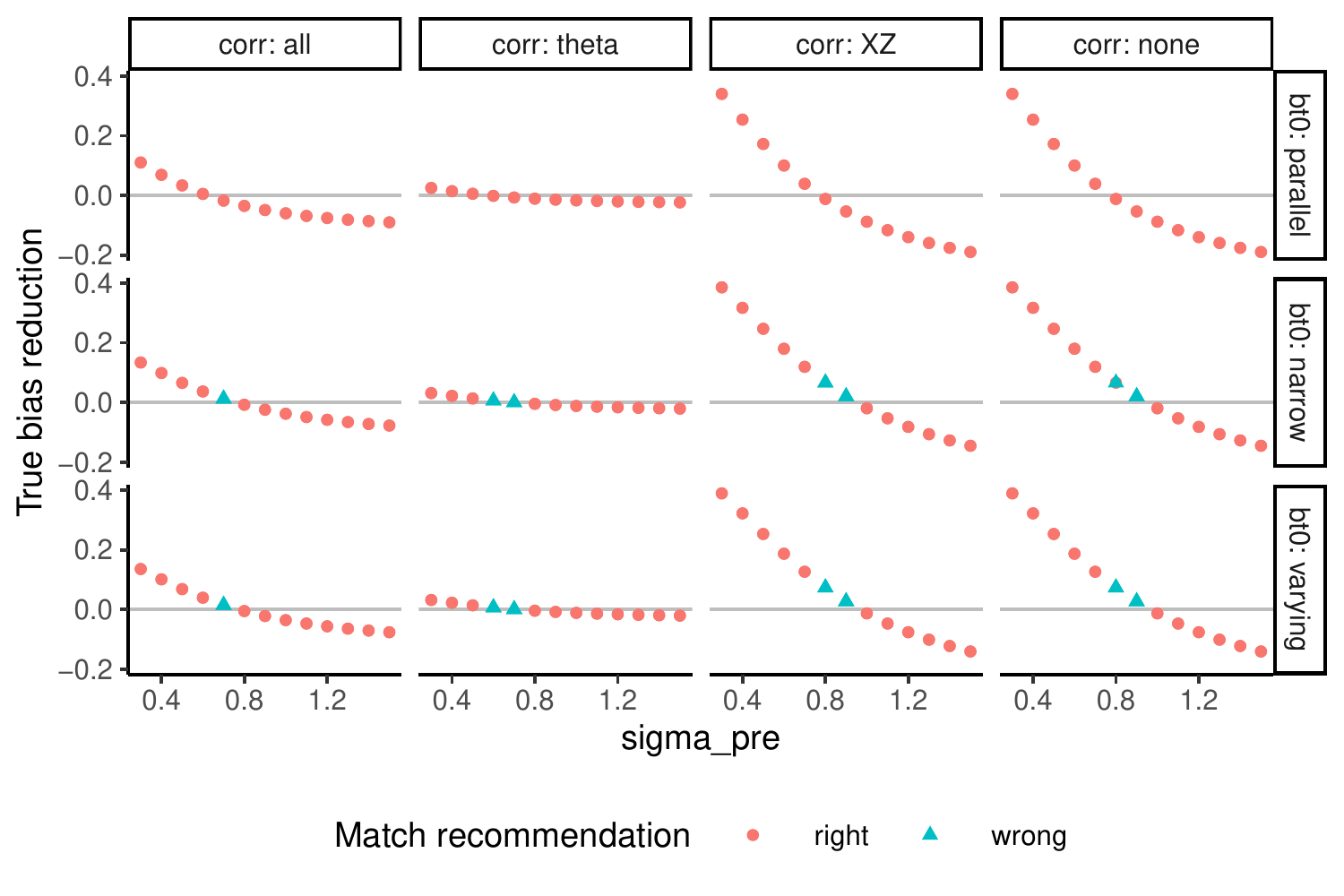"}
\caption{True bias reduction and match recommendation across multifactor simulation scenarios. The columns correspond to the correlation structure of the covariates, with all correlated, the observed correlated with $\theta$ but not themselves, the observed not correlated with $\theta$ but themselves, and all independent. The rows correspond to how $\beta_\theta$ varies: stable, stable in the last two time periods, and fully varying. The final factor of varying coefficients for the observed covariates, or not, does not change the simulation results.}
\label{fig:multifactorA}%
\end{center}
\end{figure} 

\begin{figure}[t!]
\begin{center}
\includegraphics[width=12cm, height = 5cm]{"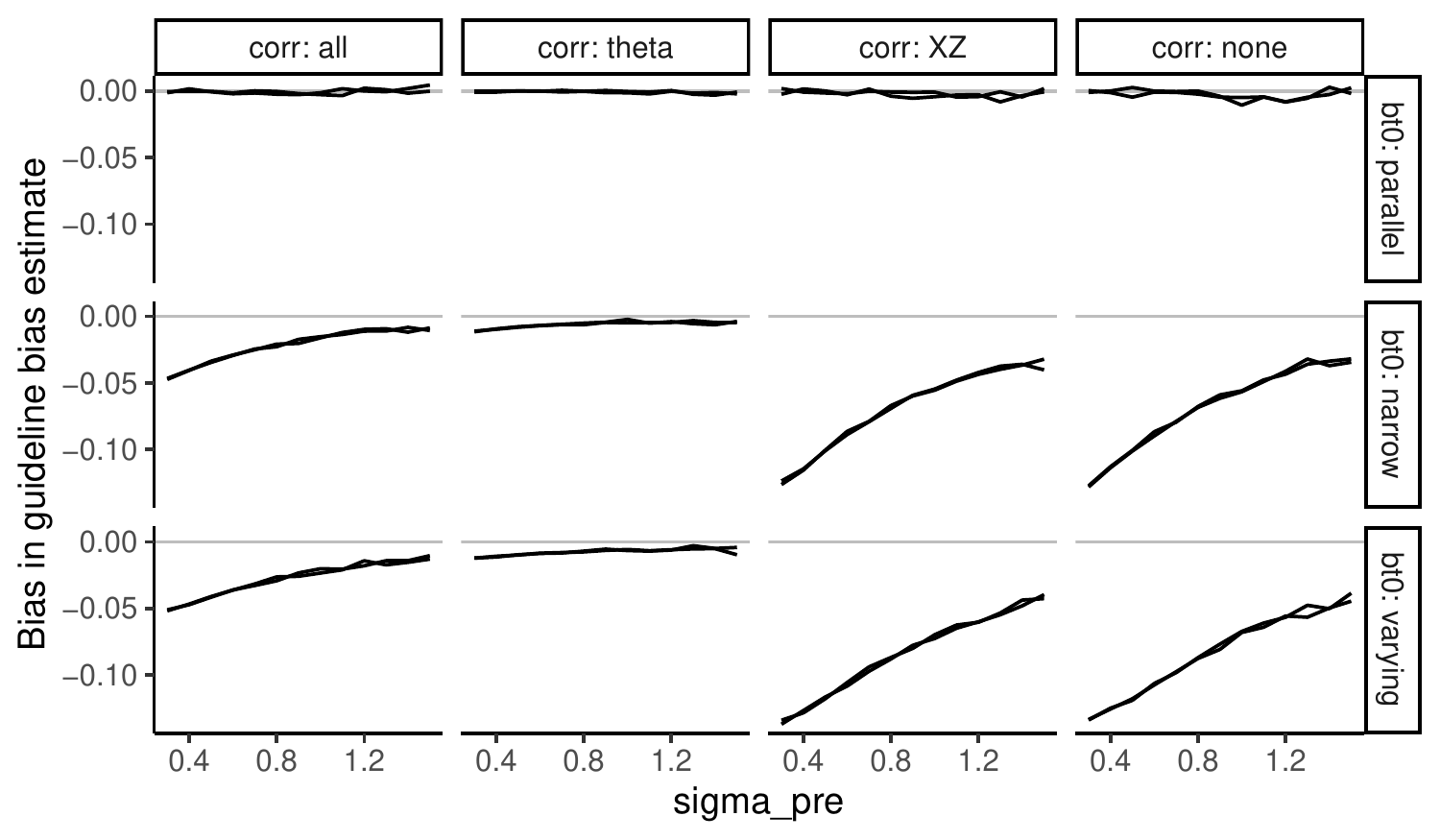"}
\caption{Bias in estimating bias reduction of matching on both covariates and lagged outcomes across multifactor simulation scenarios. All lines are on or below 0, showing that the benefits of matching tend to be underestimated when stability does not hold.}
\label{fig:multifactorB}%
\end{center}
\end{figure}

\begin{table}
\centering
\begin{tabular}{c|c|c|c|c} 
 $\sigma_E$ & $\hat \sigma_E^2$& $\hat{\bar{\beta}}_{\theta, \text{pre}}$ ($\bar{\beta}_{\theta, \text{pre}} = 0.30$) & $\hat \beta_{\theta, T}$ ($\beta_{\theta, T} = 0.80$) & $\hat \delta_{\theta}$ ($\delta_{\theta} = 1$)  \\ 
 \hline   
0.300 &  0.318 & 0.433 & 0.989 & 1.01 \\
0.400 & 0.418 & 0.431 & 0.981 & 1.00 \\
 0.500 &  0.513 & 0.432& 0.983 & 1.00 \\
0.600 & 0.612 & 0.433 & 0.982 & 1.00 \\
0.700&   0.709  & 0.434 & 0.981 & 1.00 \\
0.800 &  0.802& 0.433 & 0.981 & 1.00 \\
0.900&  0.901 & 0.432 & 0.981 & 1.00 \\
1.00 &  1.01  & 0.430 & 0.974 & 1.00 \\
1.10 &  1.11  & 0.431 & 0.970 & 1.00 \\
1.20 &  1.21  & 0.420 & 0.983 & 1.00 \\
1.30 &  1.31  & 0.425 & 0.982 & 1.00 \\
1.40 &  1.41  & 0.428 & 0.972 & 1.00 \\
1.50 &  1.51  & 0.423 & 0.982 & 1.00 \\
\end{tabular}
\caption{Parameter estimates from Figure~\ref{fig:robustness}. The first column represents the different noise level used in the $x$-axis in Figure~\ref{fig:robustness}. Columns 2-4 shows the average estimated parameters ($\hat\sigma_E^2, \hat{\bar{\beta}}_{\theta, \text{pre}}, \hat\beta_{\theta, T}, \hat\delta$) applying Guideline~\ref{guideline:matchboth} when the stability assumption, i.e., $\beta_{\theta, T-1} = \beta_{\theta, T-2}$, and Assumption~\ref{assumption:multiple_time_periods} are violated. We report up to three significant digits.}
\label{tab:robustness}
\end{table}

\section{Additional Details of Application}
\label{Appendix:application_details}
We here give additional details of how the original authors \citeauthor{principal_turnover} performed the matching and analysis steps.
We also give the year-by-year guideline recommendations, analyzing each cohort of schools that had a principal turnover in that given year in turn.

In Missouri 96.5\% of the schools experienced at least one principal turnover between 2001-2015 with 50\% of schools experiencing more than three principal turnovers \citep{principal_turnover}. Therefore, most schools actually received the treatment more than once. To address this challenge, \citeauthor{principal_turnover} treat the principal transition as the unit of observation and arrange the data to have one observation per event per time. This is often known as the ``stacking method'' due to ``stacking'' the data \citep{stacking}.
For example, if a school was treated in 2010 and 2012 and untreated in all other years between 2001-2015, then this school would appear in fifteen rows as two treated observations and thirteen control observations.
Each row would have columns for the lagged outcomes relative to that year. 

The authors then construct a matched comparison group that experience the same downward trend in student achievement, has similar demographic characteristics, and has similar history of teacher and principal turnover using propensity score matching. In particular, they match on (a) current and lagged (up to 5 years) school achievement levels in math, (b) current and lagged proportion of new-to-school teachers, (c) binary indicators for principal transitions in each of the prior 5 years, (d) current and lagged principal experience, and (e) four school demographics variables. To perform the matching step, the authors estimate the propensity scores of all stacked events using a logistic regression with the above predictors (see \citet{principal_turnover} for more details). They then use the constructed matching weights (inverse propensity score weighting) as weights in the subsequent regression analysis to construct a comparable control set.
More specifically, the authors run a regression (using the propensity score weights) with the interaction between time and treatment (both as categorical variables), where time takes a value of 0 when it is the final year before the principal departs. 
For example, if a treated school A had a principal depart in 2012, then school $A$ would appear with treatment value 1 at $t = 0$ while the control units would be any schools that did not have principals departing in 2012. 

Given this model, the DiD estimate is the estimated coefficient of the interaction between treatment and time $t = 1$, where the baseline category is when $t = 0$. The authors further account for school and time fixed effects, the four school demographics listed above, and five lagged outcomes. Finally, the standard errors are clustered by school.
Equation 1 in \citep{principal_turnover} further details the exact regression model. 

 Extending Section~\ref{section:application} of the main paper, we next show in Table~\ref{tab:yearly_table} the year by year results of Table~\ref{tab:application_results}.
Overall, the estimates appear relatively stable across years, with similarly high reliabilities and low breakages in parallel trends near $1.0$ (with 2005 being a notable exception).
We note that middle  95\% of number of years with a match recommendation is 6-10 years out of the 12 years.
\begin{table}
\centering
\begin{tabular}{ccllll} 
 Year &  Match on $\mathbf{Y_{T}}$ & Bias Reduction ($X$)  & Bias Reduction ($Y$) & Reliability  &  Slopes  \\ 
  &  & $\hat\Delta_{\tau_x}$ & $\hat\Delta_{\tau_{Y}}$ & $\hat r_{\theta}^T$ &  $\hat s$ \\ 
 \hline   
2005 &  Yes &  $0.0056$ & $0.0033$ & $0.90$ &  $0.76$\\
2006 &  No & $0.014$ & NA & $0.88$ &  $1.05$\\
2007 &   No & $0.023$ & NA & $0.85$ & $1.02$\\
2008 &  No & $0.028$ & NA & $0.95$ & $1.02$\\
2009 &  Yes & $0.0064$ & $0.0034$ & $0.96$ & $0.91$\\
2010 &  Yes & $0.00084$ & $0.0028$ & $0.93$ & $0.88$\\
2011 &   Yes & $0.022$ & $0.0087$ & $0.95$ & $0.89$\\
2012 & Yes & $0.0048$ & $0.0036$ & $0.96$ & $0.90$\\
2013 & Yes & $0.0093$ & $0.0061$ & $0.96$ & $0.88$\\
2014 &  Yes & $0.018$ & $0.0064$ & $ 0.97$ & $0.93$\\
2015 &  No & $0.0045$ & NA & $0.97$ & $0.97$\\
2016 & No & $0.0060$ & NA & $0.95$ & $0.99$\\
\end{tabular}
\caption{ Detailed year-by-year results from applying guidelines in Section~\ref{section:sensitivity_analysis} on principal turnover empirical application. Table~\ref{tab:yearly_table} presents the year-by-year analysis of applying our Guidelines in Section~\ref{section:sensitivity_analysis} to the principal turnover application. Column 1 shows the year of the post-treatment period. Column 2 presents the match recommendation from Guideline~\ref{guideline:matchboth}. Column 3-4 shows the estimated reduction in bias from matching only on $X$ and then additionally on the pre-treatment outcomes, respectively. The last two columns shows the estimated reliability ($\hat r_{\theta}^T$) and the estimated breakage in parallel slope ($ \hat s := \hat{\bar{\beta}}_{\theta, \text{pre}}/ \hat\beta_{\theta , T}$), respectively.}
\label{tab:yearly_table}
\end{table}

\section{Proofs of Primary Theorems}
\label{appendix:proofs}

\subsection{Proof of Theorem~\ref{theorem:mainresults}}
\label{appendix:proof_mainresult}
The proofs for the bias of the na\"ive DiD and the difference in means estimator are straight forward, thus we omit them. We also leave out all time intercept term $\beta_{0, t}$ when evaluating the relevant expectations because they cancel out. Before proving the bias of the various matching estimators, we will use a well known and useful fact of the multivariate normal distribution.

\begin{lemma}[Multivariate Normal Conditional Distribution]
\label{lemma:multivariate}
Suppose random variables $(X,Y)$ jointly follow a normal distribution: 

$$\begin{pmatrix} X \\ Y \end{pmatrix} \sim N\left(\begin{pmatrix} \mu_{1} \\ \mu_{2}  \end{pmatrix} ,  \begin{pmatrix} V_{11} & V_{12} \\  V_{21} & V_{22}  \end{pmatrix}\right),$$
\noindent then
$$E(Y \mid X) = \mu_2 + V_{21}V_{11}^{-1}(X - \mu_1).$$
\end{lemma}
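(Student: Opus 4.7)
The plan is to use the standard trick of constructing an auxiliary random variable that is orthogonal to $X$ in the Gaussian sense, so that I can peel off the conditional mean explicitly. Specifically, I would define
\[
Z := Y - \mu_2 - V_{21} V_{11}^{-1}(X - \mu_1),
\]
and then show that $Z$ is independent of $X$. Once independence is established, the identity $E(Z \mid X) = E(Z) = 0$ gives the desired formula immediately, since
\[
E(Y \mid X) = \mu_2 + V_{21} V_{11}^{-1}(X - \mu_1) + E(Z \mid X).
\]

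First, I would verify that $(Z, X)$ is jointly normal. This follows because $Z$ is a fixed affine transformation of the jointly normal pair $(X, Y)$, so the stacked vector $(Z, X)$ is again a linear image of $(X, Y)$ and therefore multivariate Gaussian. I would then compute
\[
\mathrm{Cov}(Z, X) = \mathrm{Cov}(Y, X) - V_{21} V_{11}^{-1} \mathrm{Cov}(X, X) = V_{21} - V_{21} V_{11}^{-1} V_{11} = 0,
\]
and similarly $E(Z) = 0$ by linearity of expectation. For jointly Gaussian random variables, zero covariance implies independence, so $Z \independent X$.

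Given independence, $E(Z \mid X) = E(Z) = 0$, and substituting back yields the claimed formula $E(Y \mid X) = \mu_2 + V_{21} V_{11}^{-1}(X - \mu_1)$. The only real subtlety is the implicit invertibility of $V_{11}$; I would note that the statement presumes this (otherwise the formula should be stated with a Moore–Penrose pseudo-inverse and the argument carried through the range of $V_{11}$), but since the lemma as stated writes $V_{11}^{-1}$, I will assume invertibility throughout. The main obstacle is therefore not a conceptual one but simply making sure the joint normality of $(Z, X)$ is stated carefully so that the zero-covariance-implies-independence step is rigorously justified.
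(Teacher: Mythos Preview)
Your proof is correct and is in fact the standard textbook argument for the conditional mean of a multivariate normal. The paper itself does not supply a proof of this lemma at all; it merely states the result as ``a well known and useful fact of the multivariate normal distribution'' and then invokes it repeatedly in the bias derivations. So there is nothing to compare against, and your orthogonalization argument (defining $Z = Y - \mu_2 - V_{21}V_{11}^{-1}(X-\mu_1)$, checking joint normality and zero covariance, then using that uncorrelated jointly Gaussian variables are independent) fills the gap cleanly. Your remark about the implicit invertibility of $V_{11}$ is also appropriate, since the paper indeed writes $V_{11}^{-1}$ without comment.
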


\subsubsection{Matching on covariates}
$$E\big[\hat{\tau}_{DiD}^X\big]- \tau =  \Delta_{\theta}\big[ \delta_{\theta}  - \rho \frac{\sigma_{\theta}}{\sigma_x}\delta_x \big]$$

\begin{proof}
\begin{align*}
E\big[\hat{\tau}_{DiD}^X\big] - \tau &= E(Y_{i, 1}\mid Z_{i} = 1) -  E(Y_{i, 0}\mid Z_{i} = 1) - \tau \\
&\qquad - ( E_{x \mid Z_i = 1}[E(Y_{i, 1}\mid Z_{i} = 0, X_i = x)] - E_{x \mid Z_i = 1}[E(Y_{i, 0}\mid Z_{i} = 0, X_i = x)])  \\
&=  \mu_{\theta, 1}(\beta_{\theta, 1} - \beta_{\theta, 0}) +  \mu_{x, 1}(\beta_{x, 1} - \beta_{x, 0})  \\
& \qquad - ( E_{x \mid Z_i = 1}(\beta_{\theta, 1}(\mu_{\theta, 0} + \rho \frac{\sigma_{\theta}}{\sigma_{x}}(x - \mu_{x, 0})) + \beta_{x, 1}x ) \\
& \qquad - E_{x \mid Z_i = 1}(\beta_{\theta, 0}(\mu_{\theta, 0} + \rho \frac{\sigma_{\theta}}{\sigma_{x}}(x - \mu_{x, 0})) + \beta_{x, 0}x )) \\
&=  (\mu_{\theta, 1}  - \mu_{\theta, 0})(\beta_{\theta, 1} - \beta_{\theta, 0}) - \rho \frac{\sigma_{\theta}}{\sigma_x}(\mu_{x, 1} - \mu_{x, 0})(\beta_{\theta, 1} - \beta_{\theta, 0})
\end{align*}
\end{proof}
\noindent The fourth line follows because $E(\theta_i \mid Z_i = z, X_i = x) = \mu_{\theta, z} + \rho \frac{\sigma_{\theta}}{\sigma_{x}}(x - \mu_{x, z})$ from Lemma~\ref{lemma:multivariate}. 
    
\subsubsection{Matching on covariates and past outcome}
$$E\big[\hat{\tau}_{DiD}^{X, Y_{0}}\big]- \tau =  \beta_{\theta, 1}\bigg[(1 -  r_{ \theta \mid x})(\delta_{\theta} - \rho \frac{\sigma_{\theta}}{\sigma_x}\delta_x  )\bigg]$$

\begin{proof}
\begin{align*}
E\big[\hat{\tau}_{DiD}^{X, Y_{0}}\big] - \tau &= E(Y_{i, 1}\mid Z_{i} = 1) -  E_{(x,y) \mid Z_i = 1}[E(Y_{i, 1}\mid Z_{i} = 0, X_i = x, Y_{i, 0} = y)] ) - \tau \\
&= \beta_{\theta, 1}(\mu_{\theta, 1} - E_{(x,y) \mid Z_i = 1}[E(\theta_i \mid Z_i = 0,X_i = x, Y_{i, 0} = y)])  \\
\end{align*}
The key is that we have the following multivariate Gaussian,
$$\begin{pmatrix} \theta_i \\ X_i \\ Y_{i, 0} \end{pmatrix}  \mid Z_i = 0 \sim N\left( \begin{pmatrix} \mu_{\theta, 0} \\ \mu_{x, 0}\\ \beta_{\theta, 0}\mu_{\theta, 0} + \beta_{x, 0}\mu_{x, 0} \end{pmatrix}, \Sigma \right), $$
where $\Sigma = $
$$\begin{pmatrix} \sigma_{\theta}^2 & \sigma_{\theta} \sigma_x \rho & \beta_{\theta, 0}\sigma_{\theta}^2 + \beta_{x, 0}\sigma_{\theta}\sigma_x\rho  \\  \sigma_{\theta} \sigma_x \rho  & \sigma_x^2 & \beta_{x, 0}\sigma_x^2 + \beta_{\theta, 0}\sigma_{\theta} \sigma_x \rho  \\   \beta_{\theta, 0}\sigma_{\theta}^2 + \beta_{x, 0}\sigma_{\theta}\sigma_x\rho &  \beta_{x, 0}\sigma_x^2 + \beta_{\theta, 0}\sigma_{\theta} \sigma_x \rho    & (\beta_{\theta, 0})^2 \sigma_{\theta}^2 + (\beta_{x, 0})^2\sigma_x^2 + \sigma_{E}^2 + 2 \beta_{\theta, 0}\beta_{x, 0}\sigma_{\theta}\sigma_x\rho \end{pmatrix}.$$
Next, we apply Lemma~\ref{lemma:multivariate} to obtain $E(\theta_{i} \mid Z_{i} = 0, Y_{i, 0} = y, X_i = x) = $
\begin{align*}
&\mu_{\theta, 0} + \begin{pmatrix} \sigma_{\theta} \sigma_x \rho & \beta_{\theta, 0}\sigma_{\theta}^2 + \beta_{x, 0}\sigma_{\theta}\sigma_x\rho \end{pmatrix} \\
&\begin{pmatrix} \sigma_x^2 & \beta_{x, 0}\sigma_x^2 + \beta_{\theta, 0}\sigma_{\theta} \sigma_x \rho \\ \beta_{x, 0}\sigma_x^2 + \beta_{\theta, 0}\sigma_{\theta} \sigma_x \rho & (\beta_{\theta, 0})^2 \sigma_{\theta}^2 + (\beta_{x, 0})^2\sigma_x^2 + \sigma_{E}^2 + 2 \beta_{\theta, 0}\beta_{x, 0}\sigma_{\theta}\sigma_x\rho 
\end{pmatrix}^{-1} \begin{pmatrix} x - \mu_{x, 0} \\ y - \beta_{\theta, 0}\mu_{\theta, 0} - \beta_{x, 0} \mu_{x, 0} \end{pmatrix}  \\
&= \mu_{\theta, 0} +  \frac{1}{det} \begin{pmatrix} m_1 & m_2 \end{pmatrix} \begin{pmatrix} x - \mu_{x, 0} \\ y - \beta_{\theta, 0}\sigma_{\theta}^2 + \beta_{x, 0} \sigma_{\theta}\sigma_x\rho \end{pmatrix}  \\
&=  \mu_{\theta, 0}  + \frac{m_1}{det}(x - \mu_{x, 0}) + \frac{m_2}{det}(y -  \beta_{\theta, 0}\mu_{\theta, 0} - \beta_{x, 0} \mu_{x, 0}), 
\end{align*}
where we define the following to avoid long algebraic expressions\footnote{Any time we define notations to avoid long algebraic expressions,  we define it only for the respective section. In other words, $m_1, m_2, det$ are defined only for Appendix~\ref{appendix:proof_mainresult} and we reuse this notation for the following sections but with different expressions.}: $m_1 = -\beta_{x, 0}\beta_{\theta, 0}\sigma_{\theta}^2\sigma_x^2(1 - \rho^2) + \rho \sigma_{\theta} \sigma_x\sigma_{E}^2$, $m_2 = \beta_{\theta, 0}\sigma_{\theta}^2\sigma_x^2 (1 - \rho^2)$, and $det = \sigma_x^2((\beta_{\theta, 0})^2 \sigma_{\theta}^2(1-\rho^2) + \sigma_{E}^2)$. Finally, the bias is, 
\begin{align*}
E\big[\hat{\tau}_{DiD}^{X, Y_{0}}\big]  - \tau &=  \beta_{\theta, 1}(\mu_{\theta, 1} - E_{(x,y) \mid Z_i = 1}[E(\theta_i \mid Z_i = 0,X_i = x, Y_{i, 0} = y))]  \\
&= \beta_{\theta, 1}(\mu_{\theta, 1} - (\mu_{\theta, 0}  + \frac{m_1}{det}(\mu_{x, 1} - \mu_{x, 0}) + \frac{m_2}{det}( \beta_{\theta, 0}\mu_{\theta, 1} + \beta_{x, 0} \mu_{x, 1}   -  \beta_{\theta, 0}\mu_{\theta, 0} - \beta_{x, 0} \mu_{x, 0}))) 
\end{align*}
\noindent The rest follows from simple algebraic manipulation.
\end{proof}

\subsection{Proof of Multiple covariates and time points (Section~\ref{section:generalization})}
\label{Appendix:generalization_proofs}
We first prove Theorem~\ref{theorem:most_general}. We omit the proof of the bias of the na\"ive DiD estimator and matching on only $\mathbf{X}$ as the results come directly from elementary calculation or applying Lemma~\ref{lemma:multivariate} to find $E_{\mathbf{x} \mid Z_i = 1}[E(\boldsymbol{\theta}_i \mid \mathbf{X}_{i} = \mathbf{x}, Z_i = 0)]$. We now prove the bias of matching on both observed covariate and the $T$ pre-treatment outcomes. 

\begin{proof} 
To compute this bias, we must find $E_{(\mathbf{x}, \mathbf{y^T}) \mid Z_i = 1}[E(\boldsymbol{\theta_i} \mid Z_{i} = 0, \mathbf{X}_{i} = \mathbf{x}, \mathbf{Y_i^T} = \mathbf{y^t})]$. Similar to the above proofs, the key is to realize that: 
\begin{equation*}
\begin{pmatrix} \boldsymbol{\theta}_i \\ \mathbf{X}_{i} \\ \mathbf{Y}_i^{T} \end{pmatrix} \mid Z_i = 0 \sim N\left(\begin{pmatrix} \vec{\mu}_{\theta, 0} \\ \vec{\mu}_{x, 0} \\ (\vec{\beta}_{\theta, 0})^\top \vec{\mu}_{\theta, 0} + (\vec{\beta}_{x,0})^\top \vec{\mu}_{x, 0} \\ \vdots \\(\vec{\beta}_{\theta, T-1})^\top \vec{\mu}_{\theta, 0} + (\vec{\beta}_{x,T-1})^\top \vec{\mu}_{x, 0} \end{pmatrix} , \begin{pmatrix} \Sigma_{\theta \theta} & \Sigma_{\theta X} & \Sigma_{\theta Y_{T}} \\ \Sigma_{X \theta} & \Sigma_{XX} & \Sigma_{X Y_{T}} \\ \Sigma_{Y_{T} \theta} & \Sigma_{Y_{T} X} & \Sigma_{Y_{T} Y_{T}}  \end{pmatrix} \right), 
\end{equation*}
We again apply Lemma~\ref{lemma:multivariate} to obtain the desired result.
\end{proof}

We now prove Theorem~\ref{theorem:bias_multiple_T_noX}. 

\begin{proof}
 The key is to utilize the general result in Theorem~\ref{theorem:most_general} and directly compute $(\Sigma_{Y_{T} Y_{T}})^{-1}$ assuming no $X$. We have that 
$$\Sigma_{Y_{T} Y_{T}} = \sigma_E^2 I + a'a, \quad a:= \begin{pmatrix}
    \beta_{\theta, 0} \sigma_{\theta} & , \dots , & \beta_{\theta, T-1} \sigma_{\theta},
\end{pmatrix}$$
where we purposefully write the covariance matrix in this form to apply the well known Sherman-Morrison inverse formula. Consequently, we have that
$$(\Sigma_{Y_{T} Y_{T}})^{-1} = \frac{1}{\sigma_E^2} I - \frac{a'a / (\sigma_E^2)^2}{1 + aa'/\sigma_E^2}.$$
The denominator simplifies to the following
\begin{align*}
    1 + aa'/\sigma_E^2 &= 1 +  \frac{\sum_{t = 0}^{T-1} \beta_{\theta, t}^2 \sigma_{\theta}^2}{\sigma_E^2} \\
    &= \frac{\sum_{t = 0}^{T-1} \beta_{\theta, t}^2 \sigma_{\theta}^2 + \sigma_E^2}{\sigma_E^2}
\end{align*}
Let $l = \sum_{t = 0}^{T-1} \beta_{\theta, t}^2 \sigma_{\theta}^2 + \sigma_E^2$.
Then we have that
$$(\Sigma_{Y_{T} Y_{T}})^{-1} = \frac{l}{l \sigma_E^2} I - \frac{a'a}{\sigma_E^2 l}$$
Writing this out as matrix we have that 
$$(\Sigma_{Y_{T} Y_{T}})^{-1} = \frac{1}{ l \sigma_E^2} \begin{pmatrix} l - \beta_{\theta, 0}^2 \sigma_{\theta}^2 &  & &  - \beta_{\theta, i}\beta_{\theta, j} \sigma_{\theta}^2\\
 & l - \beta_{\theta, 1}^2 \sigma_{\theta}^2 &  &  \\
  & \ddots & \\
- \beta_{\theta, i}\beta_{\theta, j} \sigma_{\theta}^2   & & & l - \beta_{\theta, T-1}^2 \sigma_{\theta}^2    
\end{pmatrix}$$

Next, we have that
$\Sigma_{\theta Y_{T}} = \begin{pmatrix}
    \beta_{\theta, 0} \sigma_{\theta}^2 & , \dots , & \beta_{\theta, T-1}\sigma_{\theta}^2
\end{pmatrix}$. Therefore, we have that
\begin{align*}
  \Sigma_{\theta Y_{T}} (\Sigma_{Y_{T} Y_{T}})^{-1}  &= \begin{pmatrix}
    \frac{\beta_{\theta, 0}\sigma_{\theta}^2 (1-l)}{l \sigma_E^2} & , \dots , & \frac{\beta_{\theta, T-1}\sigma_{\theta}^2 (1-l)}{l \sigma_E^2}
\end{pmatrix} \\
&= \begin{pmatrix}
    \frac{\beta_{\theta, 0}\sigma_{\theta}^2 }{l} & , \dots , & \frac{\beta_{\theta, T-1}\sigma_{\theta}^2}{l}
\end{pmatrix}
\end{align*}
Finally we have that 
$$
 \Sigma_{\theta Y_{T}} (\Sigma_{Y_{T} Y_{T}})^{-1} \begin{pmatrix}
 \vec{\beta}_{\theta, 0 }^\top \vec{\delta}_{\theta}  \\ \vdots \\ \vec{\beta}_{\theta, T-1}^\top \vec{\delta}_{\theta}    
 \end{pmatrix} = \delta_{\theta} \frac{\sum_{t = 0}^{T-1} \beta_{\theta, t}^2 \sigma_{\theta}^2} {\sum_{t = 0}^{T-1} \beta_{\theta, t}^2 \sigma_{\theta}^2 + \sigma_E^2}
$$
Putting this into the final formula in the last line of Theorem~\ref{theorem:most_general} gives the desired claim. 
\end{proof}

We now prove Theorem~\ref{theorem:bias_multiple_T}. We first state two useful Lemmas about matrix inversions:
\begin{lemma}[Matrix Inversion]
\label{lemma:matrix_inversion}
Suppose we have a matrix $\mathbf{Q}_{(T+1) \times (T+1)}$ such that:

$$\mathbf{Q} = \begin{pmatrix} x & y& y & \dots & y  \\ y & x & & & & \\ y & &  x & &  \\ \vdots & & & \ddots & \\ y & y & \dots & & x \end{pmatrix},$$ 
where $x,y \in \mathbb{R}$ such that $(x-y) \neq 0$ and $(x+Ty) \neq 0$.
Given such $\mathbf{Q}$, we have that,

$$\mathbf{Q}^{-1} = \begin{pmatrix} \frac{x+ (T-1)y}{(x-y)(x+Ty)} & -\frac{y}{(x-y)(x+Ty)}& -\frac{y}{(x-y)(x+Ty)} & \dots & -\frac{y}{(x-y)(x+Ty)}  \\ -\frac{y}{(x-y)(x+Ty)} & \frac{x+ (T-1)y}{(x-y)(x+Ty)} & & & & \\ -\frac{y}{(x-y)(x+Ty)} & &  \ddots & &  \\ \vdots & & & \ddots & \\ -\frac{y}{(x-y)(x+Ty)} & -\frac{y}{(x-y)(x+Ty)} & \dots & & \frac{x+ (T-1)y}{(x-y)(x+Ty)} \end{pmatrix}$$
\end{lemma}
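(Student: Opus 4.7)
The plan is to recognize that $\mathbf{Q}$ has a rank-one-plus-scaled-identity structure and then invert it via the Sherman--Morrison formula. Let $\mathbf{1} \in \mathbb{R}^{T+1}$ denote the vector of all ones. The first step is to record the decomposition
\[
\mathbf{Q} = (x-y)\, I_{T+1} + y\, \mathbf{1}\mathbf{1}^\top,
\]
which is immediate: the diagonal entries of the right-hand side are $(x-y)\cdot 1 + y \cdot 1 = x$ and the off-diagonal entries are $(x-y)\cdot 0 + y \cdot 1 = y$. Note that the hypotheses $(x-y) \neq 0$ and $(x+Ty) \neq 0$ are exactly the nondegeneracy conditions we will need below.

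With this decomposition in hand, I would apply Sherman--Morrison to $A + u v^\top$ with $A = (x-y)\, I_{T+1}$, $u = y\, \mathbf{1}$, and $v = \mathbf{1}$. Since $A^{-1} = I_{T+1}/(x-y)$ and $v^\top A^{-1} u = y(T+1)/(x-y)$, the denominator simplifies as $1 + v^\top A^{-1} u = (x + Ty)/(x-y)$, yielding
\[
\mathbf{Q}^{-1} = \frac{1}{x-y}\, I_{T+1} \;-\; \frac{y}{(x-y)(x+Ty)}\, \mathbf{1}\mathbf{1}^\top.
\]
Reading off the entries, each diagonal element equals
\[
\frac{1}{x-y} - \frac{y}{(x-y)(x+Ty)} = \frac{(x + Ty) - y}{(x-y)(x+Ty)} = \frac{x + (T-1)y}{(x-y)(x+Ty)},
\]
while each off-diagonal element equals $-\,y / [(x-y)(x+Ty)]$, matching the claimed formula exactly.

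There is no real conceptual obstacle here; the only place to be careful is the bookkeeping of the dimension $T+1$ (not $T$) when computing $v^\top A^{-1} u$, which is why the denominator picks up $x + Ty$ rather than $x + (T-1)y$. If one prefers to avoid invoking Sherman--Morrison, an equivalent verification is to multiply $\mathbf{Q}$ by the proposed inverse entry-by-entry; each diagonal entry of the product reduces to the identity $x[x + (T-1)y] - T y^2 = (x-y)(x+Ty)$, and each off-diagonal entry reduces to $-x y + y[x + (T-1)y] - (T-1) y^2 = 0$. Both identities are a line of algebra, so either route delivers the lemma with minimal effort.
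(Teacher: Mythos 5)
Your proof is correct, and both of your suggested routes (Sherman--Morrison applied to the decomposition $\mathbf{Q} = (x-y)I_{T+1} + y\,\mathbf{1}\mathbf{1}^\top$, or direct entrywise verification of $\mathbf{Q}\mathbf{Q}^{-1} = I$) go through; the dimension bookkeeping giving the denominator $x + Ty$ rather than $x + (T-1)y$ is handled correctly. The paper itself omits the proof of this lemma entirely, stating only that it is a well-established property of matrix inverses, so there is no argument to compare against --- though it is worth noting that the paper does invoke the Sherman--Morrison formula explicitly in its proof of Theorem~\ref{theorem:bias_multiple_T_noX} for a structurally identical inversion, so your approach is exactly the one the authors would have had in mind.
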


\begin{lemma}[Block Matrix Inversion]
\label{lemma:block_matrix_inversion}
Suppose we have a block matrix $\mathbf{P} =  \begin{pmatrix} \mathbf{A} & \mathbf{B} \\ \mathbf{C} & \mathbf{D} \\ \end{pmatrix}$ then
$\mathbf{P}^{-1} =  \begin{pmatrix} \mathbf{A}^{-1} + \mathbf{A}^{-1}\mathbf{B}(\mathbf{D} - \mathbf{C}\mathbf{A}^{-1}\mathbf{B})^{-1}\mathbf{C}\mathbf{A}^{-1} & -\mathbf{A}^{-1}\mathbf{B}(\mathbf{D} - \mathbf{C}\mathbf{A}^{-1}\mathbf{B})^{-1} \\ -(\mathbf{D} - \mathbf{C}\mathbf{A}^{-1}\mathbf{B})^{-1}\mathbf{C}\mathbf{A}^{-1} & (\mathbf{D} - \mathbf{C}\mathbf{A}^{-1}\mathbf{B})^{-1} \\ \end{pmatrix}$
\end{lemma}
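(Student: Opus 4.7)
The plan is to prove the block inversion formula via a block LDU (Gauss-elimination) factorization of $\mathbf{P}$, and then invert each factor individually. This approach is cleaner than direct verification because each factor is triangular or block-diagonal, so its inverse is trivial to write down.

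First, I would observe that, assuming $\mathbf{A}$ is invertible and the Schur complement $\mathbf{S} := \mathbf{D} - \mathbf{C}\mathbf{A}^{-1}\mathbf{B}$ is invertible, $\mathbf{P}$ admits the factorization
\begin{equation*}
\mathbf{P} \;=\; \begin{pmatrix} \mathbf{I} & \mathbf{0} \\ \mathbf{C}\mathbf{A}^{-1} & \mathbf{I} \end{pmatrix} \begin{pmatrix} \mathbf{A} & \mathbf{0} \\ \mathbf{0} & \mathbf{S} \end{pmatrix} \begin{pmatrix} \mathbf{I} & \mathbf{A}^{-1}\mathbf{B} \\ \mathbf{0} & \mathbf{I} \end{pmatrix}.
\end{equation*}
This identity can be verified directly by multiplying the three factors on the right-hand side. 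I would carry out this multiplication explicitly (it is a handful of block-matrix products) and show that the result equals $\begin{pmatrix} \mathbf{A} & \mathbf{B} \\ \mathbf{C} & \mathbf{D} \end{pmatrix}$, using $\mathbf{C}\mathbf{A}^{-1}\mathbf{B} + \mathbf{S} = \mathbf{D}$.

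Next I would invert each factor. The two unit-triangular blocks have the standard inverses
\begin{equation*}
\begin{pmatrix} \mathbf{I} & \mathbf{0} \\ \mathbf{C}\mathbf{A}^{-1} & \mathbf{I} \end{pmatrix}^{-1} = \begin{pmatrix} \mathbf{I} & \mathbf{0} \\ -\mathbf{C}\mathbf{A}^{-1} & \mathbf{I} \end{pmatrix}, \qquad \begin{pmatrix} \mathbf{I} & \mathbf{A}^{-1}\mathbf{B} \\ \mathbf{0} & \mathbf{I} \end{pmatrix}^{-1} = \begin{pmatrix} \mathbf{I} & -\mathbf{A}^{-1}\mathbf{B} \\ \mathbf{0} & \mathbf{I} \end{pmatrix},
\end{equation*}
and the block-diagonal middle factor inverts blockwise to $\mathrm{diag}(\mathbf{A}^{-1}, \mathbf{S}^{-1})$. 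Therefore
\begin{equation*}
\mathbf{P}^{-1} = \begin{pmatrix} \mathbf{I} & -\mathbf{A}^{-1}\mathbf{B} \\ \mathbf{0} & \mathbf{I} \end{pmatrix} \begin{pmatrix} \mathbf{A}^{-1} & \mathbf{0} \\ \mathbf{0} & \mathbf{S}^{-1} \end{pmatrix} \begin{pmatrix} \mathbf{I} & \mathbf{0} \\ -\mathbf{C}\mathbf{A}^{-1} & \mathbf{I} \end{pmatrix}.
\end{equation*}

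Finally I would carry out this triple product. Multiplying the rightmost two factors first yields $\begin{pmatrix} \mathbf{A}^{-1} & \mathbf{0} \\ -\mathbf{S}^{-1}\mathbf{C}\mathbf{A}^{-1} & \mathbf{S}^{-1} \end{pmatrix}$, and then multiplying on the left by the remaining unit-upper-triangular factor gives exactly the four claimed blocks, with the $(1,1)$ entry equal to $\mathbf{A}^{-1} + \mathbf{A}^{-1}\mathbf{B}\mathbf{S}^{-1}\mathbf{C}\mathbf{A}^{-1}$, the $(1,2)$ entry equal to $-\mathbf{A}^{-1}\mathbf{B}\mathbf{S}^{-1}$, and so on. There is no real obstacle here: the only subtle point is keeping the order of matrix multiplications straight (since blocks do not commute), so I would write out each block-entry computation explicitly rather than relying on scalar-like algebra. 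Implicit invertibility hypotheses on $\mathbf{A}$ and $\mathbf{S}$ should also be stated at the outset, since the formula as written presupposes both.
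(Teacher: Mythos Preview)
Your proof via the block LDU factorization is correct and is a standard, clean derivation of the Schur-complement inversion formula. The paper itself omits the proof of this lemma entirely, stating only that it is a ``well established property of matrix inverses,'' so there is no approach to compare against; your argument would serve perfectly well as the omitted justification, and your remark that the implicit invertibility hypotheses on $\mathbf{A}$ and $\mathbf{S}$ should be stated is a fair point.
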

\noindent The proofs of these Lemmas are omitted because these are well established properties of matrix inverses. 

\begin{proof} (of Theorem~\ref{theorem:bias_multiple_T})
We first assume that $\Sigma_{\theta X} = 0$, i.e., $\theta$ and $\mathbf{X}$ are uncorrelated. We prove the case when $\Sigma_{\theta X} \neq 0$ later. We remind readers that we are under Assumption~\ref{assumption:multiple_time_periods} that states that all slopes of $\theta$ and $\mathbf{X}$ are the same in the pre-treatment periods.
Also, $\theta$ is univariate.
Consequently, without loss of generality we write all the pre-treatment slopes of $\theta$ and $\mathbf{X}$ as $\beta_{\theta, 0}$ and $\vec{\beta}_{x, 0}$, respectively. 

We start the proof by writing the bias as, 
\begin{align*}
E\big[\hat{\tau}_{gDiD}^{\mathbf{X}, \mathbf{Y}^{T}}\big]  - \tau  &= \beta_{\theta, T}(\mu_{\theta, 1} - E_{(\mathbf{x}, \mathbf{y^T}) \mid Z_i = 1}[E(\theta_i \mid Z_i = 0, \mathbf{X_i} = \mathbf{x}, \mathbf{Y_i^T} = \mathbf{y^T})]) \\
\end{align*}
Focusing on the inner expectation $E(\theta_i \mid Z_i = 0, \mathbf{X_i} = \mathbf{x}, \mathbf{Y_i^T} = \mathbf{y^t})$, we notice the following multivariate Gaussian distribution using the fact that $\theta$ and $\mathbf{X}$ are uncorrelated,
$$\begin{pmatrix} \theta_i \\ \mathbf{X_i} \\ Y_{i, T - 1}\\ Y_{i, T - 2} \\ \vdots \\ Y_{i, 0} \end{pmatrix}  \mid Z_i = 0 \sim N\left( \begin{pmatrix} \mu_{\theta, 0} \\ \vec{\mu}_{x, 0}\\ \beta_{\theta, 0}\mu_{\theta, 0} + \vec{\beta}_{x,  0}^\top \vec{\mu}_{x, 0} \\ \beta_{\theta, 0}\mu_{\theta, 0} + \vec{\beta}_{x,  0}^\top \vec{\mu}_{x, 0} \\ \vdots \\ \beta_{\theta, 0}\mu_{\theta, 0} + \vec{\beta}_{x,  0}^\top \vec{\mu}_{x, 0} \end{pmatrix}, \Sigma \right),$$
where 
\begin{align*}
 \Sigma &= \begin{pmatrix} \sigma_{\theta}^2 & \mathbf{B}_{1 \times (T+p)} \\ \mathbf{B}_{(T+p) \times 1}^\top & \mathbf{D}_{(T+p) \times (T+p)} \end{pmatrix}   \\
 \mathbf{B} &= \begin{pmatrix} \vec{0}_{1 \times p} & \beta_{\theta, 0}\sigma_{\theta}^2 & \beta_{\theta, 0}\sigma_{\theta}^2  & \dots & \beta_{\theta, 0}\sigma_{\theta}^2  \end{pmatrix} \\
 \mathbf{D} &= \begin{pmatrix} \Sigma_{XX} & \mathbf{C}_{p \times T} \\ \mathbf{C}_{T \times p}^\top & \mathbf{A}_{T \times T} \end{pmatrix} \\
 \mathbf{C}^\top &= Cov(\mathbf{Y_i^T}, \mathbf{X_i}) = \begin{pmatrix}  & (\vec{\beta}_{x, 0}^\top \Sigma_{XX})_{1 \times p} &  \\  & (\vec{\beta}_{x, 0}^\top \Sigma_{XX})_{1 \times p} &  \\  & \vdots & \\  & (\vec{\beta}_{x, 0}^\top \Sigma_{XX})_{1 \times p} &  \end{pmatrix} \\ 
 \mathbf{A} &= Var(\mathbf{Y_i^T}), 
\end{align*}
and $\mathbf{A}$ has the matrix structure of matrix $\mathbf{Q}$ in Lemma~\ref{lemma:matrix_inversion} with $x = (\beta_{\theta, 0})^2 \sigma_{\theta}^2 + \vec{\beta}_{x,0}^\top \Sigma_{XX}\vec{\beta}_{x,0} + \sigma_E^2$ and $y = x - \sigma_E^2$ by definition of $\mathbf{Y_i^T}$ in our linear structural model. 

To calculate $E(\theta_i \mid Z_i = 0, \mathbf{X_i} = \mathbf{x}, \mathbf{Y_i^T} = \mathbf{y^T})$, we need to invert matrix $\mathbf{D}$. This can be done by first applying Lemma~\ref{lemma:block_matrix_inversion} to matrix $\mathbf{D}$ giving us,
$$\mathbf{D}^{-1} = \begin{pmatrix} \Sigma_{XX}^{-1}  + \Sigma_{XX}^{-1} \mathbf{C}(\mathbf{A} - \mathbf{C}^\top \Sigma_{XX}^{-1} \mathbf{C} )^{-1}\mathbf{C}^\top \Sigma_{XX}^{-1}  & -\Sigma_{XX}^{-1} \mathbf{C} (\mathbf{A} - \mathbf{C}^\top \Sigma_{XX}^{-1} \mathbf{C} )^{-1} \\ -(\mathbf{A} - \mathbf{C}^\top \Sigma_{XX}^{-1} \mathbf{C} )^{-1}\mathbf{C}^\top \Sigma_{XX}^{-1} & (\mathbf{A} - \mathbf{C}^\top \Sigma_{XX}^{-1} \mathbf{C} )^{-1} \end{pmatrix}$$
Next, we simplify $\mathbf{M} := (\mathbf{A} - \mathbf{C}^\top \Sigma_{XX}^{-1} \mathbf{C} )^{-1}$. The key is that 
$$\mathbf{C}^\top \Sigma_{XX}^{-1} \mathbf{C} = \begin{pmatrix} \vec{\beta}_{x, 0}^\top \Sigma_{XX} \vec{\beta}_{x, 0} & \dots & \vec{\beta}_{x, 0}^\top \Sigma_{XX} \vec{\beta}_{x, 0} \\ \vec{\beta}_{x, 0}^\top \Sigma_{XX} \vec{\beta}_{x, 0} & \dots & \vec{\beta}_{x, 0}^\top \Sigma_{XX} \vec{\beta}_{x, 0} \\ \vdots & \vdots & \vdots \\ \vec{\beta}_{x, 0}^\top \Sigma_{XX} \vec{\beta}_{x, 0} & \dots & \vec{\beta}_{x, 0}^\top \Sigma_{XX} \vec{\beta}_{x, 0}  \end{pmatrix}_{T \times T}.$$
Thus, $\mathbf{M}$ is again the same matrix structure as matrix $\mathbf{Q}$ in Lemma~\ref{lemma:matrix_inversion} with $x' = (\beta_{\theta, 0})^2 \sigma_{\theta}^2 + \sigma_{E}^2$ and $y' = (\beta_{\theta, 0})^2 \sigma_{\theta}^2$. Consequently, we can apply Lemma~\ref{lemma:matrix_inversion} to get $\mathbf{M}^{-1}$, which is again the same form as the matrix $\mathbf{Q}$ in Lemma~\ref{lemma:matrix_inversion} with $x = \frac{ (T - 1) (\beta_{\theta, 0})^2 \sigma_{\theta}^2 + \sigma_{E}^2 }{\sigma_{E}^2 \{ T(\beta_{\theta, 0})^2 \sigma_{\theta}^2 + \sigma_{E}^2 \}}$ and  $y = -\frac{ (\beta_{\theta, 0})^2 \sigma_{\theta}^2 }{\sigma_{E}^2 \{ T(\beta_{\theta, 0})^2 \sigma_{\theta}^2 + \sigma_{E}^2 \}} $. Putting everything together we have that:
$$\mathbf{D}^{-1} = \begin{pmatrix} a  & 
\begin{pmatrix} \vec{\beta}_{x,0} & \vec{\beta}_{x,0} & \dots & \vec{\beta}_{x,0}
\end{pmatrix}_{p \times T} \mathbf{M}^{-1} \\ 
 \mathbf{M}^{-1}\begin{pmatrix} \vec{\beta}_{x,0}^\top \\ \vec{\beta}_{x,0}^\top \\ \vdots \\ \vec{\beta}_{x,0}^\top
\end{pmatrix}_{T \times p} & \mathbf{M}^{-1},
\end{pmatrix}$$
where $a$ is some matrix that will be irrelevant to further calculations. Returning back to calculating $E(\theta_i \mid Z_i = 0, \mathbf{X_i} = \mathbf{x}, \mathbf{Y_i^T} = \mathbf{y^t})$, we have that
$$E(\theta_i \mid Z_i = 0, \mathbf{X_i} = \mathbf{x}, \mathbf{Y_i^T} = \mathbf{y^t})) = \mu_{\theta, 0} + \mathbf{B} \mathbf{D}^{-1} \begin{pmatrix}
\mathbf{x} - \vec{\mu}_{x, 0} \\ y_{T - 1} - \beta_{\theta, 0} \mu_{\theta, 0} - \vec{\beta}_{x, 0}^\top \vec{\mu}_{x, 0} \\ \vdots \\ y_{0} - \beta_{\theta, 0} \mu_{\theta, 0} -  \vec{\beta}_{x, 0}^\top \vec{\mu}_{x, 0}
\end{pmatrix}.$$
We also have from simple matrix multiplication that
$$ \mathbf{B}\mathbf{D}^{-1} = \begin{pmatrix} - \frac{\beta_{\theta, 0} \beta_{x_1, 0} \sigma_{\theta}^2 T}{T(\beta_{\theta, 0})^2 \sigma_{\theta}^2 + \sigma_{E}^2} & - \frac{\beta_{\theta, 0} \beta_{x_2, 0} \sigma_{\theta}^2 T}{T(\beta_{\theta, 0})^2 \sigma_{\theta}^2 + \sigma_{E}^2} & \dots & - \frac{\beta_{\theta, 0} \beta_{x_p, 0} \sigma_{\theta}^2 T}{T(\beta_{\theta, 0})^2 \sigma_{\theta}^2 + \sigma_{E}^2} & \frac{\beta_{\theta, 0} \sigma_{\theta}^2}{T(\beta_{\theta, 0})^2 \sigma_{\theta}^2 + \sigma_{E}^2} & \dots & \frac{\beta_{\theta, 0} \sigma_{\theta}^2}{T(\beta_{\theta, 0})^2 \sigma_{\theta}^2 + \sigma_{E}^2}
\end{pmatrix}, $$
where $\vec{\beta}_{x ,0}^\top =  \begin{pmatrix} \beta_{x_1, 0} & \beta_{x_2, 0} & \dots & \beta_{x_p, 0}\end{pmatrix}$. We then finally take the outer expectation to obtain,
\begin{align*}
E_{(\mathbf{x}, \mathbf{y^T}) \mid Z_i = 1}[E(\theta_i \mid Z_i = 0,  \mathbf{X_i} = \mathbf{x}, \mathbf{Y_i^T} = \mathbf{y^t}))] = \mu_{\theta, 0} + \frac{T(\beta_{\theta, 0})^2 \sigma_{\theta}^2 }{T(\beta_{\theta, 0})^2 \sigma_{\theta}^2 + \sigma_{E}^2} \delta_{\theta}\\
\end{align*}

We now return to the more general case of arbitrarily correlated $\theta$ and $\mathbf{X}$, i.e., we let $\Sigma_{\theta X} \neq 0$.
First define a $\tilde{\theta}$ as our latent covariate after partialing out what can be explained by $\mathbf{X}$:
\begin{equation*}
    \tilde{\theta}_i = \theta_i - \Sigma_{\theta X} \Sigma_{XX}^{-1} \mathbf{X_i}.
\end{equation*}
This new latent confounder is purposefully constructed to be uncorrelated with $\mathbf{X}$. In other words, we have the following facts about $\tilde{\theta}$,
\begin{align*}
Cov(\tilde{\theta}_i, \mathbf{X}_i \mid Z_i = z) &= \Sigma_{\theta X} - \Sigma_{\theta X} = 0 \\
\tilde{\mu}_{\theta, z}  &= E(\tilde{\theta}_i \mid Z_i = z) = \mu_{\theta, z} - \Sigma_{\theta X}\Sigma_{XX}^{-1} \vec{\mu}_{x, z} \\
\tilde{\sigma}_{\theta}^2  &= Var(\tilde{\theta}_i \mid Z_i = z) = \sigma_{\theta}^2 - \Sigma_{\theta X}\Sigma_{XX}^{-1} \Sigma_{X \theta} \\
Cov(\tilde{\theta}_i, Y_{i, t} \mid Z_i = z) &= \beta_{\theta ,t}\tilde{\sigma}_{\theta}^2
\end{align*}
We can then rewrite our new latent confounder by,
\begin{equation}
    Y_{i, t} = \beta_{0,t} + \beta_{\theta, t} \tilde{\theta}_i + \vec{\beta}_{x, t}^\top \mathbf{X}_i +  \beta_{\theta, t}(\theta_i - \tilde{\theta}_i) +  \epsilon_{i ,t} 
\label{eq:tilde_theta_outcome}
\end{equation}
Then, using Equation~\eqref{eq:tilde_theta_outcome}, we can write the bias (in this more general setting) as,
\begin{equation}
E\big[\hat{\tau}_{gDiD}^{\mathbf{X}, \mathbf{Y}^{T}}\big]  - \tau  = \beta_{\theta, T}(\mu_{\theta, 1} - E_{(\mathbf{x}, \mathbf{y^T}) \mid Z_i = 1}[E(\tilde{\theta_i} \mid Z_i = 0, \mathbf{X_i} = \mathbf{x}, \mathbf{Y_i^T} = \mathbf{y^t})] - \Sigma_{\theta X} \Sigma_{XX}^{-1} \vec{\mu}_{x, 1}).
\label{eq:bias_both_proof}
\end{equation}
It remains to find $E(\tilde{\theta_i} \mid Z_i = 0, \mathbf{X_i} = \mathbf{x}, \mathbf{Y_i^T} = \mathbf{y^t})$. However, the key is that $Cov(\tilde{\theta}_i, \mathbf{X}_i \mid Z_i = 0) = \Sigma_{\theta X} - \Sigma_{\theta X} = 0$ (by construction). Thus, we are back in the uncorrelated case. More formally, we have the following multivariate normal distribution, 

$$\begin{pmatrix} \tilde{\theta}_i \\ \mathbf{X_i} \\ Y_{i, T - 1}\\ Y_{i, T - 2} \\ \vdots \\ Y_{i, 0} \end{pmatrix}  \mid Z_i = 0 \sim N\left( \begin{pmatrix} \tilde{\mu}_{\theta ,0} \\ \vec{\mu}_{x, 0}\\ \beta_{\theta, 0}\mu_{\theta, 0} + \vec{\beta}_{x,  0}^\top \vec{\mu}_{x, 0} \\ \beta_{\theta, 0}\mu_{\theta, 0} + \vec{\beta}_{x,  0}^\top \vec{\mu}_{x, 0} \\ \vdots \\ \beta_{\theta, 0}\mu_{\theta, 0} + \vec{\beta}_{x,  0}^\top \vec{\mu}_{x, 0} \end{pmatrix}, \Sigma \right),$$
where 
\begin{align*}
 \Sigma &= \begin{pmatrix} \tilde{\sigma}_{\theta}^2 & \mathbf{B}_{1 \times (T+p)} \\ \mathbf{B}_{(T+p) \times 1}^\top & \mathbf{D}_{(T+p) \times (T+p)} \end{pmatrix}   \\
 \mathbf{B} &= \begin{pmatrix} \vec{0}_{1 \times p} & \beta_{\theta, 0}\tilde{\sigma}_{\theta}^2 & \beta_{\theta, 0}\tilde{\sigma}_{\theta}^2  & \dots & \beta_{\theta, 0} \tilde{\sigma}_{\theta}^2 \end{pmatrix}, \\
\end{align*}
and $\mathbf{C}$ is defined similarly as above and $\mathbf{A}$ is again the same matrix structure as the matrix $\mathbf{Q}$ in Lemma~\ref{lemma:matrix_inversion} with $x = (\beta_{\theta, 0})^2 \tilde{\sigma}_{\theta}^2 + \vec{\beta}_{x,0}^\top \Sigma_{XX}\vec{\beta}_{x,0} + \sigma_E^2$ and $y = x - \sigma_E^2$. Finally, following the same steps as the above proof for the uncorrelated case, we realize that 
$$E(\tilde{\theta_i} \mid Z_i = 0, \mathbf{X_i} = \mathbf{x}, \mathbf{Y_i^T} = \mathbf{y^t}) = \tilde{\mu}_{\theta ,0} + \frac{T(\beta_{\theta, 0})^2 \tilde{\sigma}_{\theta}^2 }{T(\beta_{\theta, 0})^2 \tilde{\sigma}_{\theta}^2 + \sigma_{E}^2}(\tilde{\mu}_{\theta, 1} - \tilde{\mu}_{\theta, 0})$$
Putting this back into Equation~\eqref{eq:bias_both_proof} gives the desired result since $\beta_{\theta, 0} = \beta_{\theta , T-1}$ under Assumption~\ref{assumption:multiple_time_periods}. 
\end{proof}

\section{Matching only on Pre-treatment Outcome}
\label{Appendix:only_pre_treatment}
\textbf{}

In this section we explore the matching estimator that matches only on the pre-treatment outcome $Y_{0}$. We show that it does not offer more insights than that already offered by Theorem~\ref{theorem:mainresults}. We first introduce the estimator that matches only on pre-treatment outcome and state our results under the same univariate confounder setting in Section~\ref{section:main_results}.
\begin{align*}
E\big[\hat{\tau}_{DiD}^{Y_{0}}\big] &= E(Y_{i, 1}\mid Z_{i} = 1) -   E(Y_{i, 0}\mid Z_{i} = 1)   \\
 &- ( E_{y \mid Z_i = 1}[E(Y_{i, 1}\mid Z_{i} = 0, Y_{i, 0}= y)] -  E_{y \mid Z_i = 1}[E(Y_{i, 0}\mid Z_{i} = 0, Y_{i, 0} = y)]) \\
 &= E(Y_{i, 1}\mid Z_{i} = 1) -  E_{y \mid Z_i = 1}[E(Y_{i, 1}\mid Z_{i} = 0, Y_{i, 0}= y)]
\end{align*}

We now provide the bias result for this matching estimator that matches only on the pre-treatment outcome under the same linear model described in Section~\ref{section:setup}.
\begin{theorem}[Bias of DiD and Matching only on Pre-treatment Outcome]
\label{theom:only_match_Y}
If $(Z_i, X_i, \theta_i, Y_{i,t})$ are independently and identically drawn from the data generating process as shown in Equations~\ref{eq:simplethetax}-~\ref{eq:simpleresponsemodel}, then the bias of matching only on the pre-treatment outcome is the following,
\begin{align*}
E\big[\hat{\tau}_{DiD}^{Y_{0}}\big]  - \tau &= \beta_{\theta, 1}\delta_{\theta}(1 - r_{\theta}) + \beta_{x, 1} \delta_x (1 - r_{x}) - \beta_{\theta, 1} \frac{\beta_{x, 0}}{\beta_{\theta, 0}} r_{\theta} \delta_x - \beta_{x, 1}\frac{\beta_{\theta, 0}}{\beta_{x, 0}} r_{x} \delta_{\theta}  ,\\
\end{align*}
where 
\begin{align*}
    r_{\theta} &= \frac{(\beta_{\theta, 0})^2\sigma_{\theta}^2 + \beta_{x, 0}\beta_{\theta, 0}\sigma_{\theta}\sigma_x\rho }{(\beta_{\theta, 0})^2\sigma_{\theta}^2 + (\beta_{x, 0})^2\sigma_x^2 + 2\beta_{x, 0}\beta_{\theta, 0}\sigma_{\theta}\sigma_x\rho + \sigma_{E}^2 } \\
    r_{x} &= \frac{(\beta_{x, 0})^2\sigma_x^2 + \beta_{x, 0}\beta_{\theta, 0}\sigma_{\theta}\sigma_x\rho }{(\beta_{\theta, 0})^2\sigma_{\theta}^2 + (\beta_{x, 0})^2\sigma_x^2 + 2\beta_{x, 0}\beta_{\theta, 0}\sigma_{\theta}\sigma_x\rho + \sigma_{E}^2}. \\
\end{align*}
\end{theorem}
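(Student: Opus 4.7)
The plan is to directly compute the inner conditional expectation $E(Y_{i,1} \mid Z_i = 0, Y_{i,0} = y)$ by leveraging that, conditional on $Z_i = 0$, the triple $(\theta_i, X_i, Y_{i,0})$ is jointly Gaussian. Once that one-variable conditional expectation is in hand, averaging it over the distribution of $Y_{i,0}$ in the treated group reduces to elementary linear algebra, and the reliability expressions $r_\theta, r_x$ should pop out as the relevant covariance ratios.

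First, I would write the bias as
\begin{equation*}
E\big[\hat\tau_{DiD}^{Y_0}\big] - \tau = \Big(E(Y_{i,1} \mid Z_i = 1) - \tau\Big) - E_{y \mid Z_i = 1}\big[E(Y_{i,1} \mid Z_i = 0, Y_{i,0} = y)\big],
\end{equation*}
and, using the response model~\eqref{eq:simpleresponsemodel}, rewrite the inner conditional expectation as $\beta_{0,1} + \beta_{\theta,1} E(\theta_i \mid Z_i = 0, Y_{i,0} = y) + \beta_{x,1} E(X_i \mid Z_i = 0, Y_{i,0} = y)$. The intercept will cancel against $E(Y_{i,1} \mid Z_i = 1)$, so the whole computation reduces to evaluating the two conditional means of $\theta_i$ and $X_i$ given $(Z_i = 0, Y_{i,0} = y)$.

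Next, I would compute the $3 \times 3$ covariance matrix of $(\theta_i, X_i, Y_{i,0})$ given $Z_i = 0$. The cross-covariances come out to $\mathrm{Cov}(\theta_i, Y_{i,0}) = \beta_{\theta,0}\sigma_\theta^2 + \beta_{x,0}\rho\sigma_\theta\sigma_x$ and $\mathrm{Cov}(X_i, Y_{i,0}) = \beta_{x,0}\sigma_x^2 + \beta_{\theta,0}\rho\sigma_\theta\sigma_x$, while $\mathrm{Var}(Y_{i,0}) = \beta_{\theta,0}^2\sigma_\theta^2 + \beta_{x,0}^2\sigma_x^2 + 2\beta_{\theta,0}\beta_{x,0}\rho\sigma_\theta\sigma_x + \sigma_E^2$. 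Applying Lemma~\ref{lemma:multivariate} then gives
\begin{equation*}
E(\theta_i \mid Z_i = 0, Y_{i,0} = y) = \mu_{\theta,0} + \frac{r_\theta}{\beta_{\theta,0}}(y - \mu_{Y_0,0}), \qquad E(X_i \mid Z_i = 0, Y_{i,0} = y) = \mu_{x,0} + \frac{r_x}{\beta_{x,0}}(y - \mu_{Y_0,0}),
\end{equation*}
where $\mu_{Y_0,0} := \beta_{\theta,0}\mu_{\theta,0} + \beta_{x,0}\mu_{x,0}$. The key observation here is that $\beta_{\theta,0} \cdot \mathrm{Cov}(\theta_i, Y_{i,0})/\mathrm{Var}(Y_{i,0})$ is exactly $r_\theta$ as defined in the theorem, and symmetrically for $r_x$; recognizing this is what turns the raw covariance formulas into the clean reliability form.

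Finally, I would take the outer expectation over $y$ drawn from $Y_{i,0} \mid Z_i = 1$, using $E(Y_{i,0} \mid Z_i = 1) - \mu_{Y_0,0} = \beta_{\theta,0}\delta_\theta + \beta_{x,0}\delta_x$. Substituting everything back into the bias expression and canceling $\mu_{\theta,0}, \mu_{x,0}, \beta_{0,1}$, the remaining terms regroup as $\beta_{\theta,1}\delta_\theta - \beta_{\theta,1}\delta_\theta r_\theta - \beta_{\theta,1}(\beta_{x,0}/\beta_{\theta,0})r_\theta \delta_x$ (from the $\theta$ channel) and symmetrically from the $X$ channel, yielding the claimed expression. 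The only non-routine step is the recognition that the conditional covariance ratios coincide with $r_\theta/\beta_{\theta,0}$ and $r_x/\beta_{x,0}$; everything else is bookkeeping. I do not expect any genuine obstacle, since the structure mirrors the argument already used for $\hat\tau_{DiD}^{X,Y_0}$ in Appendix~\ref{appendix:proof_mainresult}, just with $X$ not conditioned on and thus entering the final expression symmetrically with $\theta$.
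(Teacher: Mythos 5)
Your proposal is correct and follows essentially the same route as the paper's proof in Appendix~\ref{Appendix:only_pre_treatment}: decompose the bias into a $\theta$ channel and an $X$ channel, apply Lemma~\ref{lemma:multivariate} to the joint Gaussian of $(\theta_i, X_i, Y_{i,0})$ given $Z_i = 0$ to get the two conditional means, and average over the treated distribution of $Y_{i,0}$. The only cosmetic difference is that you identify $\mathrm{Cov}(\theta_i, Y_{i,0})/\mathrm{Var}(Y_{i,0}) = r_\theta/\beta_{\theta,0}$ up front rather than carrying the raw ratios $c_1/d$ and $c_2/d$ through to the final algebraic simplification, which is a valid and slightly tidier bookkeeping choice.
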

In the case where we do not match on the observed covariate $X$, both $X$ and $\theta$ are treated as ``unobserved'' since we do not directly match on $X$. Therefore, the pre-treatment outcome aims to proxy both of these confounders. Consequently, matching on $Y_{0}$ partially recovers both $\theta$ and $X$ proportional to their respective contribution to the reliability of $Y_{0}$ as represented by the $r_{\theta}, r_{x}$, respectively. $r_{\theta}, r_{x}$ terms approximately captures the proportion of variance of $Y_{0}$ each variable captures respectively. If $\theta$ accounts for more variance of $Y_{0}$, then matching on $Y_{0}$ reduces the confounding effect of $\theta$ through a larger term of $r_{\theta}$. Finally, even if both $X$ and $\theta$ do not have time varying effects on the outcome ($\Delta_{\theta} = \Delta_x = 0$), then matching on the pre-treatment outcome does not result in zero bias since it again erodes the second difference in the DiD estimator as shown by the bias expression above. 

The main difference in this matching estimator compared to the matching estimator that matches on both $X$ and $Y_{0}$ is the third and fourth terms in the above bias expression. Unlike the case when we are able to match on $X$ and perfectly get rid of the effect of $X$, matching only on $Y_0$ allows the confounders' imbalance ($\delta_x, \delta_{\theta}$) to continue making an impact on the match, leading to the third and fourth terms. To illustrate this point, we consider the following hypothetical scenario. Suppose that $\theta$ accounted for all the total variance of $Y_{0}$, i.e.,  $r_{\theta} = 1, r_{x} = \sigma_E = 0$. Therefore, we should hope to perfectly recover $\theta$ when matching for $Y_{0}$. Further suppose that $\mu_{x, 1} = 5$ and $\mu_{x, 0} = 3$ while $\beta_{x, 1} = \beta_{\theta, 1} = 1$. Consequently, $Y_{i, 0} \mid (Z_i = 0) = \theta_i + 3$ for the control while $Y_{i, 0} \mid (Z_i = 1) = \theta_i + 5$ for the treatment (ignoring the intercept). If we observe a treatment unit with $Y_{i, 0} \mid (Z_i = 1) = 6$, thus $\theta_i = 1$, and find a match $j$ from the control unit $Y_{j, 0} \mid (Z_j = 0) = 6$, then $\theta_j = 3 \neq \theta_i$. We see that the difference between $\theta_i$ and $\theta_j$ is proportional (in this case exactly equal since all slopes are one) to $\delta_x$. Even if $\theta$ accounted for the total variance of the pre-treatment outcome, matching on the pre-treatment outcome does not recover $\theta$ due to the imbalance of $X$. Therefore, the third term in the bias expression captures this shift of $\theta$ due to the imbalance of $X$. The fourth term is the symmetrical shift for $X$ due to the imbalance of $\theta$.

In summary, the main insight from matching on just the pre-treatment outcome is similar to that presented in Section~\ref{section:main_results}. On one hand matching on the pre-treatment helps correct for the bias contribution of both $\theta$ and $X$ through the reliability terms of the respective variable. On the other hand, if parallel trends originally held matching on the pre-treatment outcome biases the DiD estimator. From the point of the practitioners, it is insensible to not match on the observed covariate $X$ while matching on the pre-treatment outcome as the pre-treatment outcome now has to proxy both $X$ and $\theta$. Putting all this together, we chose to not present this matching estimator as it is both undesirable and does not lead to further insight than that already provided in Section~\ref{section:main_results}. We now state the proof of Theorem~\ref{theom:only_match_Y}.

\begin{proof}
The proof is very similar to the proof of matching on both $X$ and pre-treatment outcome bias. We first have that:
\begin{align*}
E[\hat{\tau}_{DiD}^{Y_{0}}]  - \tau  &= E(Y_{i, 1}\mid Z_{i} = 1) -  E_{y \mid Z_i = 1}[E(Y_{i, 1}\mid Z_{i} = 0, Y_{i, 0}= y)] ) - \tau \\
&= \beta_{\theta, 1}(\mu_{\theta, 1} - E_{y \mid Z_i = 1}[E(\theta_i \mid Z_i = 0, Y_{i, 0} = y)])  \\
&+ \beta_{x, 1}(\mu_{x, 1} - E_{y \mid Z_i = 1}[E(X_i \mid Z_i = 0, Y_{i, 0} = y)])
\end{align*}
We then characterize the following multivariate distribution
$$\begin{pmatrix} \theta_i \\ Y_{i, 0} \end{pmatrix}  \mid Z_i = 0 \sim N\left( \begin{pmatrix} \mu_{\theta, 0} \\ \beta_{\theta, 0}\mu_{\theta, 0} + \beta_{x, 0}\mu_{x, 0}  \end{pmatrix}, \Sigma \right),$$ where
$$\Sigma = \begin{pmatrix} \sigma_{\theta}^2 & \beta_{\theta, 0}\sigma_{\theta}^2 + \beta_{x, 0}\sigma_{\theta}\sigma_x\rho  \\  \beta_{\theta, 0}\sigma_{\theta}^2 + \beta_{x, 0}\sigma_{\theta}\sigma_x\rho   & (\beta_{\theta, 0})^2 \sigma_{\theta}^2 + (\beta_{x, 0})^2\sigma_x^2 + \sigma_{E}^2 + 2 \beta_{\theta, 0}\beta_{x, 0}\sigma_{\theta}\sigma_x\rho \end{pmatrix}.$$

Now we can apply the Lemma~\ref{lemma:multivariate} to obtain $$E(\theta_i \mid Z_i = 0, Y_{i, 0} = y) = \mu_{\theta, 0} + \frac{\beta_{\theta, 0}\sigma_{\theta}^2 + \beta_{x, 0}\sigma_{\theta}\sigma_x\rho}{d}(y - \beta_{\theta, 0}\mu_{\theta, 0} + \beta_{x, 0}\mu_{x, 0}),$$
where $d$ is defined shortly after. Symmetrically we have that 
$$E(X_{i} \mid Z_{i} = 0, Y_{i, 0} = y) = \mu_{x, 0} + \frac{\beta_{x, 0}\sigma_x^2 + \beta_{\theta, 0}\sigma_{\theta}\sigma_x\rho}{d}(y - \beta_{\theta, 0}\mu_{\theta, 0} + \beta_{x, 0}\mu_{x, 0}).$$
To avoid long algebraic expressions, we define the following variables:
\begin{align*}
    c_1&= \beta_{\theta, 0}\sigma_{\theta}^2 +\beta_{x, 0}\sigma_{\theta}\sigma_x\rho \\
    c_2&= \beta_{x, 0}\sigma_{x}^2 + \beta_{\theta, 0}\sigma_{\theta}\sigma_x\rho \\
    d&= (\beta_{\theta, 0})^2\sigma_{\theta}^2 + (\beta_{x, 0})^2\sigma_x^2 + \sigma_{E}^2 + 2\beta_{x, 0}\beta_{\theta, 0}\sigma_{\theta}\sigma_x\rho. \\
\end{align*}
Finishing off the proof: 
\begin{align*}
E[\hat{\tau}_{DiD}^{Y_{0}}] - \tau  &= \beta_{\theta, 1}(\mu_{\theta, 1} - E_{y \mid Z_i = 1}[E(\theta_i \mid Z_i = 0, Y_{i, 0} = y)]) + \beta_{x, 1}(\mu_{x, 1} - E_{y \mid Z_i = 1}[E(X_i \mid Z_i = 0, Y_{i, 0} = y)])\\ 
&= \beta_{\theta, 1}(\mu_{\theta, 1} - \mu_{\theta, 0} - \frac{\beta_{\theta, 0}\sigma_{\theta}^2 + \beta_{x, 0}\sigma_{\theta}\sigma_x\rho}{d}(\beta_{\theta, 0}(\mu_{\theta, 1} - \mu_{\theta, 0}) + \beta_{x, 0}(\mu_{x, 1} - \mu_{x, 0}) )) \\
&+ \beta_{x, 1}(\mu_{x, 1} -   \mu_{x, 0} - \frac{\beta_{x, 0}\sigma_x^2 + \beta_{\theta, 0}\sigma_{\theta}\sigma_x\rho}{d}(\beta_{\theta, 0}(\mu_{\theta, 1} - \mu_{\theta, 0}) + \beta_{x, 0}(\mu_{x, 1} - \mu_{x, 0}) )) \\
&= \beta_{\theta, 1}(\mu_{\theta, 1} - \mu_{\theta, 0})(1 - \frac{\beta_{\theta, 0} c_1}{d}) + \beta_{x, 1}(\mu_{x, 1} - \mu_{x, 0})(1 - \frac{\beta_{x, 0} c_2}{d}) \\
&- \frac{\beta_{\theta, 1}c_1\beta_{x, 0}}{d}(\mu_{x, 1} - \mu_{x, 0}) - \frac{\beta_{x, 1}c_2\beta_{\theta, 0}}{d}(\mu_{\theta, 1} - \mu_{\theta, 0})
\end{align*}

The final result follows after algebraic simplification. 
\end{proof}

\section{Extending to Include Interactions}
\label{Appendix:interaction}
In this section, we analyze the case when we add an interaction between $\theta$ and $X$.
The idea of an interacted model is allow for the relationship between the latent confounder and outcome to change for different levels of the observed covariate $X$.
The interacted model is
\begin{equation}
Y_{i,t}(0) = \beta_{0, t} +\beta_{\theta, t} \theta_{i}  + \beta_{x, t} X_{i} + \beta_{\theta x, t} X_i \theta_i+ \epsilon_{i, t},
\label{eq:interaction}
\end{equation}
with $Y_{i,t}(1), Y_{i,t}$ defined as they were in Equation~\eqref{eq:simpleresponsemodel}.

In this section, we show the bias results of the na\"ive DiD and the two main matching estimators when $\rho = 0$ for this model.
We focus on the uncorrelated case to clearly show how the interaction term does not lead to any new insights than those provided in Section~\ref{section:main_results}.
Adding a correlation between $\theta$ and $X$ only contributes by further offsetting the bias of $\theta$ with the correlation, as described in Section~\ref{subsection:full_general_case}. 

Letting the interaction imbalance and time varying effects be
\begin{align*}
\delta_{\theta x} & := \mu_{\theta, 1}\mu_{x, 1} - \mu_{\theta, 0}\mu_{x, 0} \\
\Delta_{\theta x} & := \beta_{\theta x, 1} - \beta_{\theta x, 0}, 
\end{align*}
we then have the following theorem.

\begin{theorem}[Bias of DiD and Matching DiD estimators with interactions]
If $(Z_i, X_i, \theta_i, Y_{i,t})$ are independently and identically drawn from the data generating process as shown in Equations~\ref{eq:simplethetax} \& \ref{eq:interaction} with $\rho = 0$, then the bias of our estimators are the following:
\begin{align*}
E\big[\hat{\tau}_{DiD}\big]- \tau &= \Delta_{\theta}\delta_{\theta} + \Delta_x\delta_x + \Delta_{\theta x} \delta_{\theta x}\\
E\big[\hat{\tau}_{DiD}^X\big]- \tau &=  \Delta_{\theta}\delta_{\theta} + \Delta_{\theta x} \mu_{x,1} \delta_{\theta}\\
E\big[\hat{\tau}_{DiD}^{X, Y_{0}}\big]- \tau &=  \beta_{\theta, 1}\delta_{\theta}(1 - \tilde{r}_{\theta \mid x})  + \beta_{\theta x, 1} \mu_{x, 1} \delta_{\theta} (1 - \tilde{r}_{\theta \mid x}) - (\beta_{\theta,1} + \beta_{\theta x,1}\mu_{x,1})L ,
\end{align*}
where 
\begin{align*}
    \tilde{r}_{\theta \mid x} &= \frac{ (\beta_{\theta, 0})^2 \sigma_{\theta}^2 + \beta_{\theta, 0}\beta_{\theta x, 0} \mu_{x, 0}\sigma_{\theta}^2}
    {(\beta_{\theta, 0})^2 \sigma_{\theta}^2 + 2\beta_{\theta, 0}\beta_{\theta x, 0} \mu_{x, 0}\sigma_{\theta}^2 + \beta_{\theta x , 0}^2(\sigma_{\theta}^2 \sigma_x^2 + \sigma_{\theta}^2 \mu_{x, 0}^2) + \sigma_E^2} \\
    L &= \frac{(\beta_{\theta x, 0} \mu_{x,0} + \beta_{\theta, 0})(\delta_{\theta x} - \delta_x\mu_{\theta, 0})\beta_{\theta x, 0} \sigma_{\theta}^2}
    {(\beta_{\theta, 0})^2 \sigma_{\theta}^2 + 2\beta_{\theta, 0}\beta_{\theta x, 0} \mu_{x, 0}\sigma_{\theta}^2 + \beta_{\theta x , 0}^2(\sigma_{\theta}^2 \sigma_x^2 + \sigma_{\theta}^2 \mu_{x, 0}^2) + \sigma_E^2}
\end{align*}
\label{theorem:results_interaction}
\end{theorem}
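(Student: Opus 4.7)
\textbf{Proof proposal for Theorem~\ref{theorem:results_interaction}.} My plan is to follow the same three-estimator template used for Theorem~\ref{theorem:mainresults}, adapting each calculation to the interacted outcome model in Equation~\eqref{eq:interaction}. Throughout I will exploit the fact that $\rho=0$ implies $\theta_i \independent X_i \mid Z_i$, so cross-moments factor as $E[\theta_i X_i \mid Z_i=z]=\mu_{\theta,z}\mu_{x,z}$ and conditional distributions of $\theta_i$ given $X_i$ reduce to the marginal.

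First, for the na\"ive DiD bias, I would directly expand each of the four expectations $E[Y_{i,t}\mid Z_i=z]$ using linearity and the factorization above. The constants $\beta_{0,t}$ and the pure $X$- and $\theta$-slope contributions collapse into $\Delta_\theta \delta_\theta + \Delta_x \delta_x$ exactly as in Theorem~\ref{theorem:mainresults}, and the interaction contributes $\Delta_{\theta x}(\mu_{\theta,1}\mu_{x,1} - \mu_{\theta,0}\mu_{x,0})=\Delta_{\theta x}\delta_{\theta x}$, yielding the claimed formula.

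For matching on $X$ alone, I would compute $E[Y_{i,t}\mid Z_i=0, X_i=x] = \beta_{0,t} + \beta_{\theta,t}\mu_{\theta,0} + \beta_{x,t}x + \beta_{\theta x,t}\mu_{\theta,0}x$, using $\theta\independent X$ under $Z=0$. Integrating over $X\mid Z_i=1$ replaces $x$ by $\mu_{x,1}$, so taking the double difference and subtracting $\tau$ cancels all $\beta_{x,t}$ and intercept terms and leaves $\Delta_\theta\delta_\theta + \Delta_{\theta x}\mu_{x,1}\delta_\theta$, which matches the claim.

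The main obstacle is matching on both $X$ and $Y_0$. Conditional on $Z_i=0$ and $X_i=x$, the pre-period model is still linear in the (conditionally Gaussian) $\theta_i$, but with $x$-dependent slope $\beta_{\theta,0}+\beta_{\theta x,0}x$; I would therefore apply Lemma~\ref{lemma:multivariate} to the bivariate normal
\begin{equation*}
\begin{pmatrix}\theta_i\\Y_{i,0}\end{pmatrix}\Bigg|Z_i=0,X_i=x \sim N\!\left(\begin{pmatrix}\mu_{\theta,0}\\ \beta_{0,0}+\beta_{x,0}x+(\beta_{\theta,0}+\beta_{\theta x,0}x)\mu_{\theta,0}\end{pmatrix},\begin{pmatrix}\sigma_\theta^2 & (\beta_{\theta,0}+\beta_{\theta x,0}x)\sigma_\theta^2\\ (\beta_{\theta,0}+\beta_{\theta x,0}x)\sigma_\theta^2 & (\beta_{\theta,0}+\beta_{\theta x,0}x)^2\sigma_\theta^2+\sigma_E^2\end{pmatrix}\right)
\end{equation*}
to extract $E[\theta_i\mid Z_i=0, X_i=x, Y_{i,0}=y]$. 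Substituting this into $E[Y_{i,1}\mid Z_i=0, X_i=x, Y_{i,0}=y] = \beta_{0,1}+\beta_{x,1}x+(\beta_{\theta,1}+\beta_{\theta x,1}x)E[\theta_i\mid \cdot]$ and integrating over $(X_i, Y_{i,0})\mid Z_i=1$ is where the real work lies: because the posterior weight on $y$ is a nonlinear function of $x$, the outer expectation couples higher moments of $X$ in the treatment group with $\mu_{\theta,1},\delta_{\theta x},\delta_x$, and the algebra only collapses into the claimed $(1-\tilde{r}_{\theta\mid x})$ and $L$ terms after recognizing that the denominator of $\tilde r_{\theta\mid x}$ equals $E_X[\mathrm{Var}(Y_{i,0}\mid Z_i=0,X_i)]+\mathrm{(additional\ noise\ terms)}$ evaluated at the control-group distribution. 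I would organize this by (i) splitting the integrand into the $\beta_{\theta,1}$ piece and the $\beta_{\theta x,1}x$ piece, (ii) matching each to the $\delta_\theta(1-\tilde r_{\theta\mid x})$ and $L$ contributions respectively, and (iii) verifying cancellation of the residual $\beta_{x,1}$ and intercept terms. The bookkeeping will be more delicate than in Theorem~\ref{theorem:mainresults}, but the structure of the argument is the same; a symbolic algebra check would be prudent to confirm the exact form of $\tilde r_{\theta\mid x}$ and $L$.
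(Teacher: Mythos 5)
Your treatment of the na\"ive DiD and of matching on $X$ alone is correct and essentially identical to the paper's: expand $E[Y_{i,t}\mid Z_i=z]$ (resp.\ $E[Y_{i,t}\mid Z_i=0,X_i=x]$) using $\rho=0$ so that $E[\theta_iX_i\mid Z_i=z]=\mu_{\theta,z}\mu_{x,z}$ and $E[\theta_i\mid Z_i=0,X_i=x]=\mu_{\theta,0}$, then take differences and integrate $x$ over the treated distribution.

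For matching on both $X$ and $Y_0$, however, your route genuinely diverges from the paper's and will not arrive at the stated formulas. The paper applies Lemma~\ref{lemma:multivariate} to the \emph{trivariate} vector $(\theta_i,X_i,Y_{i,0})\mid Z_i=0$, treating it as jointly Gaussian with $c_1=\mathrm{Cov}(\theta_i,Y_{i,0})=(\beta_{\theta,0}+\beta_{\theta x,0}\mu_{x,0})\sigma_\theta^2$, $c_2=\mathrm{Cov}(X_i,Y_{i,0})$ and $d=\mathrm{Var}(Y_{i,0})$; this produces a predictor of $\theta_i$ that is \emph{linear in $(x,y)$ with constant coefficients}, whose outer expectation over $(X_i,Y_{i,0})\mid Z_i=1$ is immediate, and the denominator $\sigma_x^2 d-c_2^2$ is exactly (after dividing by $\sigma_x^2$) the denominator appearing in $\tilde r_{\theta\mid x}$ and $L$. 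Your construction instead conditions on $X_i=x$ first and extracts the \emph{true} conditional expectation $E[\theta_i\mid Z_i=0,X_i=x,Y_{i,0}=y]$, whose shrinkage weight $\frac{(\beta_{\theta,0}+\beta_{\theta x,0}x)\sigma_\theta^2}{(\beta_{\theta,0}+\beta_{\theta x,0}x)^2\sigma_\theta^2+\sigma_E^2}$ is a nonlinear (rational) function of $x$. The outer expectation $E_{x\mid Z_i=1}[\cdot]$ of such a ratio does not equal the ratio of expectations, so there is no mechanism by which your expression ``collapses'' into the single constants $\tilde r_{\theta\mid x}$ and $L$ of the theorem; the step you defer to ``delicate bookkeeping'' and a symbolic check is precisely where the argument fails. (The deeper reason the two routes disagree is that $Y_{i,0}$ contains the product $X_i\theta_i$ of two Gaussians, so the trivariate vector is not actually multivariate normal; the paper's Lemma~\ref{lemma:multivariate} step is really computing a best \emph{linear} predictor, which is the object the stated $\tilde r_{\theta\mid x}$ and $L$ describe, whereas you are computing the exact conditional mean, a different quantity.) To reproduce the theorem as stated you must follow the paper's joint linear-projection route rather than the conditional-on-$x$ Gaussian route.
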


\begin{proof}
We derive each estimator separately, except for the na\"ive DiD bias as it is straight forward. 

\noindent \textbf{Proof for matching on observed covariates:}

The first two terms of the DiD estimator are unchanged by the matching process: 
\begin{align*}
E(Y_{i, 1}\mid Z_{i} = 1) -  E(Y_{i, 0}\mid Z_{i} = 1) &= \mu_{\theta, 1}(\beta_{\theta, 1} - \beta_{\theta, 0}) + \mu_{x, 1}(\beta_{x, 1} - \beta_{x, 0}) +\mu_{\theta, 1}\mu_{x, 1}(\beta_{\theta x, 1} - \beta_{\theta x, 0}) 
\end{align*}
Now focusing on last two term's inner expectation:
\begin{align*}
&(E(Y_{i, 1}\mid Z_{i} = 0, X_i = x) -  E(Y_{i, 0}\mid Z_{i} = 0, X_i = x))  \\
& \quad = \mu_{\theta, 0}(\beta_{\theta, 1} - \beta_{\theta, 0}) + x(\beta_{x, 1} - \beta_{x, 0}) +\mu_{\theta, 0} x(\beta_{\theta x, 1} - \beta_{\theta x, 0}) 
\end{align*}
Taking expectation of $x$ over the distribution of $Z_i = 1$ gives us the desired result. 

\noindent \textbf{Proof for matching on observed covariates and pre-treatment outcome:}

\begin{align*}
E\big[\hat{\tau}_{DiD}^{X, Y_{0}}\big] - \tau &= E(Y_{i, 1}\mid Z_{i} = 1) -  E_{(x,y) \mid Z_i = 1}[E(Y_{i, 1}\mid Z_{i} = 0, X_i = x, Y_{i, 0} = y)] - \tau \\
&= \beta_{\theta, 1}(\mu_{\theta, 1} - E_{(x, y) \mid Z_i = 1}[E(\theta_i \mid Z_i = 0, X_i = x, Y_{i, 0} = y))] \\
&+ \beta_{\theta x, 1}\mu_{x, 1}(\mu_{\theta, 1} -E_{(x, y) \mid Z_i = 1}[E(\theta_i \mid Z_i = 0, X_i = x, Y_{i, 0} = y))]
\end{align*}

We first establish the following multivariate normal distribution,

$$\begin{pmatrix} \theta_i \\ X_i \\ Y_{i, 0} \end{pmatrix}  \mid Z_i = 0 \sim N\left( \begin{pmatrix} \mu_{\theta, 0} \\ \mu_{x, 0}\\ \beta_{\theta, 0}\mu_{\theta, 0} + \beta_{x, 0} \mu_{x, 0} + \beta_{\theta x, 0} \mu_{\theta, 0} \mu_{x, 0}  \end{pmatrix},  \begin{pmatrix} \sigma_{\theta}^2 &  0 & c_1 \\  0  & \sigma_x^2 & c_2 \\   c_1 &   c_2   & d \end{pmatrix}\right) , $$
where we again define the following variables to avoid long algebraic expressions
\begin{align*}
Var(Y_{i, 0} \mid Z_i =0) & := d = (\beta_{\theta, 0})^2 \sigma_{\theta}^2 + (\beta_{x, 0})^2 \sigma_x^2 + (\beta_{\theta x, 0})^2 (\sigma_{\theta}^2\sigma_x^2 + \sigma_x^2(\mu_{\theta, 0})^2 + \sigma_{\theta}^2(\mu_{x, 0})^2) \\
&+ 2\beta_{\theta, 0} \beta_{\theta x, 0} \mu_{x, 0} \sigma_{\theta}^2 + 2\beta_{x, 0} \beta_{\theta x, 0} \mu_{\theta, 0} \sigma_x^2 + \sigma_{E}^2 \\
Cov(\theta_i,Y_{i, 0} \mid Z_i = 0) & := c_1 =  \beta_{\theta, 0}\sigma_{\theta}^2 +  \beta_{\theta x, 0} \mu_{x, 0} \sigma_{\theta}^2 \\
Cov(X_i, Y_{i, 0} \mid Z_i = 0) &:= c_2 =  \beta_{x, 0} \sigma_x^2 + \beta_{\theta x, 0} \mu_{\theta, 0} \sigma_x^2 
\end{align*}

Using Lemma~\ref{lemma:multivariate}: $E_{(x,y) \mid Z_i = 1}[(E(\theta_i \mid Z_i = 0, X_i = x, Y_{i, 0} = y))] = \mu_{\theta, 0} + \frac{1}{det}( \delta_x(\beta_{x, 0} \sigma_x^2 c_1 -c_1c_2) + \delta_{\theta} \beta_{\theta, 0} \sigma_x^2 c_1 + \beta_{\theta x, 0} \sigma_x^2 c_1 (\mu_{\theta, 1} \mu_{x, 1} - \mu_{\theta, 0}\mu_{x, 0}) )$, where $det = \sigma_x^2 d - c_2^2$.

Putting this together we have:

\begin{align*}
E\big[\hat{\tau}_{DiD}^{X, Y_{0}}\big] &= \beta_{\theta, 1}(\mu_{\theta, 1} -  (\mu_{\theta, 0} + \frac{1}{det}( \delta_x(\beta_{x, 0} \sigma_x^2 c_1 -c_1c_2) + \delta_{\theta} \beta_{\theta, 0} \sigma_x^2 c_1 + \beta_{\theta x, 0}\sigma_x^2 c_1 (\mu_{\theta, 1} \mu_{x, 1} - \mu_{\theta, 0}\mu_{x, 0}) )) ) \\
&+ \beta_{\theta x, 1}\mu_{x, 1}(\mu_{\theta, 1} -  (\mu_{\theta, 0} + \frac{1}{det}( \delta_x(\beta_{x, 0} \sigma_x^2 c_1 -c_1c_2) + \delta_{\theta} \beta_{\theta, 0} \sigma_x^2 c_1 + \beta_{\theta x, 0}\sigma_x^2 c_1 (\mu_{\theta, 1} \mu_{x, 1} - \mu_{\theta, 0}\mu_{x, 0}) )) ) 
\end{align*}

The result follows from algebraic simplifications. 
\end{proof}

\subsection{Interpretation of Bias Results}
We first focus on the bias of matching on $X$ only.
Section~\ref{section:main_results} shows matching on the observed covariates helped mitigate the bias contribution of $X$ but does not fully reduce the bias to zero due to the latent confounder.
The interaction case is similar, except matching on $X$ partially removes the bias contribution of the interaction by allowing the bias contribution of the interaction term to come through the imbalance of $\theta$ (as opposed to both the imbalance of $\theta$ and $X$ via $\delta_{\theta x}$).
Contrary to what one may have expected, matching on $X$ when $X$ and $\theta$ are interacted does not improve the plausibility of the conditional parallel trend (conditional on $X$) compared to the no-interaction case.
Similar to the results in Section~\ref{section:main_results}, the conditional parallel trends (conditional on $X$) only holds when the unconditional parallel trends originally held due to the latent confounder $\theta$.

When matching additionally on the pre-treatment outcome, we again have a similar story to the no-interaction results presented in Section~\ref{section:main_results}.
First, $\tilde{r}_{\theta \mid x}$ acts as the new ``reliability'' when we have interactions.
In this case, matching on the pre-treatment outcome reduces the bias by both main effect and interaction effect of $\theta$ through this new reliability term.
The only difference is the ``leftover'' terms $(\beta_{\theta,1} + \beta_{\theta x,1}\mu_{x,1})L$ in the bias expression.
These terms appear because matching on $X$ does not cleanly get rid of all contribution of $X$ due to the interaction similar to the results presented in Appendix~\ref{Appendix:only_pre_treatment}.
Therefore, we see $\delta_x$ and $\delta_{\theta x}$ continuously making an impact on the bias by acting as a hindrance to how well we recover $\theta$.
This ``leftover'' effect of $X$ is captured by the $L$ term and will be exactly zero if the interaction effect did not exist, i.e., if $\beta_{\theta x, 0} = 0$.
In summary, because the main ideas presented in Section~\ref{section:main_results} still hold in the interaction case, we choose to leave this interaction case in the Appendix.

\section{Proofs of Guidelines}
\label{appendix:sens}
\subsection{Proof of Theorem~\ref{theorem:matching_X}}
\label{appendix:proof_guidelineX}
\begin{proof}
We prove the first part of Theorem~\ref{theorem:matching_X}, i.e., that the three sign conditions guarantee that $\Delta_{\tau_x} > 0$. Although Section~\ref{subsection:rule_matching_X} states the results assuming $\theta$ is univariate, the result holds generally for multivariate $\boldsymbol{\theta}$, thus we prove it in this more general case. In other words, we want to show that given our three sign conditions we have the following,
$$
\Delta_{\tau_x} \geq 0  \Longleftrightarrow |\vec{\Delta}_{\theta}^\top \vec{\delta}_{\theta} + \vec{\Delta}_x^\top\vec{\delta}_x| \geq |\vec{\Delta}_{\theta}^\top [\vec{\delta}_{\theta} -  \Sigma_{\theta X} \Sigma_{XX}^{-1} \vec{\delta}_x] |
$$
We have two cases.
The first case is if $\vec{\Delta}_{\theta}^\top \vec{\delta}_{\theta} > 0$. We can then drop the absolute value for the left side because of the first sign condition. $\vec{\Delta}_{\theta}^\top \vec{\delta}_{\theta} > 0$ can hold in two ways. The first is when $\vec{\Delta}_{\theta}^\top >0, \vec{\delta}_{\theta} >0$. In this case, we can drop the absolute value for the right side because of the third sign condition, leading to the following inequality
$$|\vec{\Delta}_{\theta}^\top \vec{\delta}_{\theta} + \vec{\Delta}_x^\top\vec{\delta}_x| \geq |\vec{\Delta}_{\theta}^\top [\vec{\delta}_{\theta} -  \Sigma_{\theta X} \Sigma_{XX}^{-1} \vec{\delta}_x] |  \Longleftrightarrow  \vec{\Delta}_x^\top\vec{\delta}_x \geq \vec{\Delta}_{\theta}^\top [- \Sigma_{\theta X} \Sigma_{XX}^{-1} \vec{\delta}_x]   $$
By the second sign condition, the left hand side is positive while the right hand side is negative, thus we are done. $\vec{\Delta}_{\theta}^\top \vec{\delta}_{\theta} > 0$ holds also when $\vec{\Delta}_{\theta}^\top < 0, \vec{\delta}_{\theta} < 0$. In this case, we can similarly drop the absolute value for the right side because of the third sign condition, leading to the same steps as above, thus proving the first case. 

The second case is if $\vec{\Delta}_{\theta}^\top \vec{\delta}_{\theta} < 0$. Using the first sign condition, we can drop the left hand side's absolute value after multiplying by negative one. Following the same steps as above, $\vec{\Delta}_{\theta}^\top \vec{\delta}_{\theta} < 0$ when $\vec{\Delta}_{\theta}^\top < 0, \vec{\delta}_{\theta} >0$. In this case, we can similarly drop the absolute value for the right side after multiplying by negative one because of the third sign condition, leading to the following inequality
$$|\vec{\Delta}_{\theta}^\top \vec{\delta}_{\theta} + \vec{\Delta}_x^\top\vec{\delta}_x| \geq |\vec{\Delta}_{\theta}^\top [\vec{\delta}_{\theta} -  \Sigma_{\theta X} \Sigma_{XX}^{-1} \vec{\delta}_x] |  \Longleftrightarrow  -\vec{\Delta}_x^\top\vec{\delta}_x \geq \vec{\Delta}_{\theta}^\top [\Sigma_{\theta X} \Sigma_{XX}^{-1} \vec{\delta}_x]   $$
This again follows from the second sign condition since the left hand side is positive while the right hand side is negative. In the other case when $\vec{\Delta}_{\theta}^\top > 0, \vec{\delta}_{\theta} < 0$ the proof is symmetrical.
\end{proof}

\begin{proof}
We now prove the second part of Theorem~\ref{theorem:matching_X}. In other words, we want to show that our estimator consistently estimates:
$$\Delta_{\tau_x} = |\vec{\delta}_x^\top( \vec{\Delta}_x +  \Delta_{\theta}\Sigma_{\theta X} \Sigma_{XX}^{-1})|,$$
where $\theta$ is univariate and $\Sigma_{\theta X}$ is an arbitrary covariate matrix. We are also able to simplify $\Delta_{\tau_x}$ (without some absolute values) because of the three sign conditions. 

First we note that estimating $\vec{\delta}_x$ with $\hat{\vec{\delta}}_x$ is an unbiased and consistent estimator for $\vec{\delta}_x$. It remains to show that $(\hat{\vec{\beta}}_{x, T} - \frac{\sum_{t = 0}^{T - 1} \hat{\vec{\beta}}_{x, t}}{T})$ consistently estimates $\vec{\Delta}_x +  \Delta_{\theta}\Sigma_{\theta X} \Sigma_{XX}^{-1}$. 

Because $\hat{\vec{\beta}}_{x, t}$ is the resulting $p$-dimensional regression coefficient from a linear regression of $Y_{i,t}$ on $\mathbf{X}_i$ within the control group, we have from elementary results of linear regression that $\hat{\vec{\beta}}_{x, t}$ converges to the coefficient of $\mathbf{X}_i$ in $E(Y_{i, t} \mid Z_i = 0, \mathbf{X}_i)$. It is simple to show that (ignoring intercept), 
\begin{align*}
E(Y_{i, t} \mid Z_i = 0, \mathbf{X}_i) &= \beta_{\theta, t}(\mu_{\theta, 0} + \Sigma_{\theta X} \Sigma_{XX}^{-1} (\mathbf{X}_i - \vec{\mu}_{x, 0})) + \vec{\beta}_{x, t}^{\top} \mathbf{X}_i \\
&= \beta_{\theta, t}(\mu_{\theta, 0} - \Sigma_{\theta X}\Sigma_{XX}^{-1} \vec{\mu}_{x, 0}) + (\vec{\beta}_{x, t}^\top + \beta_{\theta ,t}\Sigma_{\theta X} \Sigma_{XX}^{-1}) \mathbf{X}_i,
\end{align*}
which holds because 
$$E(\theta_i \mid Z_i =0, \mathbf{X}_i)  = \mu_{\theta, 0} + \Sigma_{\theta X} \Sigma_{XX}^{-1} (\mathbf{X}_i - \vec{\mu}_{x, 0}).$$
Therefore, we have that $\hat{\vec{\beta}}_{x, t}$ converges to $(\vec{\beta}_{x, t}^\top + \beta_{\theta , t}\Sigma_{\theta X} \Sigma_{XX}^{-1})$. Consequently, $(\hat{\vec{\beta}}_{x, T} - \frac{\sum_{t = 0}^{T - 1} \hat{\vec{\beta}}_{x, t}}{T})$ converges to $\vec{\Delta}_x +  \Delta_{\theta}\Sigma_{\theta X} \Sigma_{XX}^{-1}$ as desired. 
\end{proof}

\subsection{Proof of Lemma~\ref{lemma:general_suffcond_match} and Theorem~\ref{theorem:matching_Y}}
\label{appendix:proof_guidelineY}
Lemma~\ref{lemma:general_suffcond_match} directly follows from Lemma~\ref{lemma:suffcond_match}. More formally, we have that 
$$ \Delta_{\tau_{x,y}} = |\Delta_{\theta} \delta_{\theta}| - |\beta_{\theta ,T} \delta_{\theta} (1 - \hat r_{\theta \mid x}^T)|$$
We then directly use the proof for Lemma~\ref{lemma:suffcond_match}.

We now prove Theorem~\ref{theorem:matching_Y}. In other words, we want to show that 
$$\hat r_{\theta}^T \rightarrow r_{\theta}^T, \quad \frac{\hat{\bar{\beta}}_{\theta, \text{pre}}}{{\hat\beta_{\theta, T}}} \rightarrow\frac{ \bar{\beta}_{\theta, \text{pre}}}{\beta_{\theta, T}}$$

\begin{proof}
First, we start with what $\tilde{Y}_{i,t}$ converges to. As shown in Appendix~\ref{appendix:proof_guidelineX}, the regression coefficient of regressing $Y_{i, t}$ on $\mathbf{X}_i$ within the control (or treatment) group converges to $\vec{\beta}_{x, t}^\top + \beta_{\theta , t}\Sigma_{\theta X} \Sigma_{XX}^{-1}$, or a shifted estimand that contains the explanatory aspect of $\theta$ on $Y$ obtainable via $\mathbf{X}$. 

If we residualize using this model, we get that $\tilde{Y}_{i,t}$ converges to (by Slutsky's Theorem)
$$\tilde{Y}_{i, t}  \xrightarrow{d}  \beta_{\theta, t}(\theta_i - \Sigma_{\theta X} \Sigma_{XX}^{-1} X_i) + \epsilon_{i,t} =  \beta_{\theta, 0}\tilde{\theta}_i + N\left(0, \sigma_{E}^2\right).$$
Consequently, instead of the old latent confounder $\theta_i$, we have a new latent confounder $\tilde{\theta}_i = \theta_i - \Sigma_{\theta X} \Sigma_{XX}^{-1} X_i$.

We next show the consistency of $\hat{\sigma}_{E}^2 $. First, with a slight abuse of notation we define the population version of $\tilde{Y}_{i, t}$ as
$$\tilde{Y}_{i, t} = \beta_{\theta, t}\tilde{\theta}_i + \epsilon_{i, t}.$$
We then have that
$$Var(\tilde{Y}_{i, T - 1} -  \tilde{Y}_{i, T - 2}) = (\beta_{\theta, T-1} - \beta_{\theta, T-2})^2\tilde{\sigma}_{\theta}^2 + 2\sigma_E^2.$$
Since we assume $\beta_{\theta, T-1} = \beta_{\theta, T-2}$ we have that $\hat \sigma_E^2$ is a consistent estimator using the continuous mapping theorem and Slutsky's Theorem. 

Now, since $Var(\tilde{Y}_{i, t} \mid Z_i = z) = \beta_{\theta, t}^2 \tilde{\sigma}_{\theta}^2 + \sigma_E^2$ for every $t$, we have that 
$$\hat \beta_{\theta, t} =  \sqrt{ \widehat{Var}(\tilde{Y}_{i, t} \mid Z_i = 0) - \hat{\sigma}_{E}^2  } \xrightarrow{p} \beta_{\theta , t}\tilde{\sigma}_{\theta}.$$
Although this is not consistent for $\beta_{\theta, t}$ by a scale factor of $\tilde{\sigma}_{\theta}$, we show later that this is irrelevant. Putting this together we have, by Slutsky's theorem, that
$$\frac{ \hat{\bar{\beta}}_{\theta, \text{pre}}}{\hat\beta_{\theta, T}} \xrightarrow{p} \frac{\hat{\bar{\beta}}_{\theta, \text{pre}} \tilde{\sigma}_{\theta}}{\beta_{\theta, T} \tilde{\sigma}_{\theta}} = \frac{\bar{\beta}_{\theta, \text{pre}}}{\beta_{\theta, T} }, $$
giving us the desired consistency for $\frac{\bar{\beta}_{\theta, \text{pre}} }{\beta_{\theta, T} }$. We now turn to how we can estimate $\tilde{\delta}_{\theta}$. We have that, 
$$E( \overline{Y}_{i, 0:(T-1)} \mid Z_i = 1) - E(\overline{Y}_{i, 0:(T-1)} \mid Z_i = 0) = \bar{\beta}_{\theta, \text{pre}} \delta_{\theta}.$$
Therefore, a plug-in estimator is,
$$\hat{\tilde{\delta}}_{\theta} =\frac{\hat{E}(\bar{\tilde{Y}}_{i, 0:(T-1)} \mid Z_i = 1) - \hat{E}(\bar{\tilde{Y}}_{i, 0:(T-1)} \mid Z_i = 0)}{\hat{\bar{\beta}}_{\theta, \text{pre}}}$$
Lastly, $r_{\theta}^T = \frac{T(\bar{\beta}_{\theta, \text{pre}})^2\tilde{\sigma}_{\theta}^2}{T(\bar{\beta}_{\theta, \text{pre}})^2\tilde{\sigma}_{\theta}^2 + \sigma_{E}^2} $ can also be consistently estimated via the plug-in estimator, 
$$ \hat r_{\theta}^T = \frac{T \hat{\bar{\beta}}_{\theta, \text{pre}}^2 } {T \hat{\bar{\beta}}_{\theta, \text{pre}}^2 + \hat\sigma_E^2}
 \xrightarrow{p} \frac{T(\hat{\bar{\beta}}_{\theta, \text{pre}})^2\tilde{\sigma}_{\theta}^2}{T(\hat{\bar{\beta}}_{\theta, \text{pre}})^2\tilde{\sigma}_{\theta}^2 + \sigma_{E}^2} = r_{\theta}^T, $$
proving we have a consistent estimator of the reliability.

Lastly, we remark that although we took the positive square root when obtaining $\hat\beta_{\theta, t}$, even if we took $\hat\beta_{\theta, t} = -\sqrt{ \widehat{Var}(\tilde{Y}_{i, t} \mid Z_i = 0) - \hat{\sigma}_{E}^2  }$ the bias estimates will remain identical since $\hat\beta_{\theta, t}$ gets squared, i.e., the sign does not matter. 
\end{proof}

\end{document}